\documentclass[manuscript]{acmart}

\usepackage{tikz}
\usetikzlibrary{positioning}
\usepackage{algorithm}
\usepackage{cleveref}
\usepackage{todonotes}
\usepackage{balance}

\usepackage[utf8]{inputenc}
\long\def\commentbegin #1\commentend{}

\usepackage{balance}
\usepackage{algorithm}
\usepackage[noend]{algpseudocode}
\usepackage{ifthen}
\usepackage{comment}
\usepackage{graphicx, color, xcolor}
\usepackage{bbm}
\usepackage{booktabs}
\usepackage{thm-restate}
\usepackage{lineno}

\keywords{Gossip, Rumor Spreading, Congest, KT1, Broadcast}

 \ccsdesc[500]{Theory of computation~Distributed algorithms}
 \ccsdesc[500]{Mathematics of computing~Probabilistic algorithms}
 \ccsdesc[300]{Mathematics of computing~Discrete mathematics}

\renewcommand{\epsilon}{\varepsilon}

\renewcommand{\geq}{\geqslant}

\renewcommand{\leq}{\leqslant}

\newcommand{\Prob}[1]{\hbox{\rm I\kern-2pt P}\left[#1\right]}

\def\CONGEST{\mathcal{CONGEST}}
\def\LOCAL{\mathcal{LOCAL}}
\def\GOSSIP{\mathcal{GOSSIP}}
\def\polylog{\operatorname{polylog}}

\newcommand{\Deltath}{\Delta_{Th}}

\newcommand{\incVect}{\textbf{i}_G}
\newcommand{\sketchVect}{\Large{\textbf{s}}_G}

\newcommand{\comp}[1]{\ensuremath{\overline{#1}}}
\newcommand{\card}[1]{\mathchoice{\left\lvert#1\right\rvert}{\lvert#1\rvert}{\lvert#1\rvert}{\lvert#1\rvert}}
\DeclareMathOperator{\vol}{Vol}
\newcommand{\set}[1]{\mathchoice{\left\lbrace#1\right\rbrace}{\lbrace#1\rbrace}{\lbrace#1\rbrace}{\lbrace#1\rbrace}}

\ifdefined\ShowComment
\def\billy#1{{\color{green}\underline{\textsf{Billy:}}} {\color{blue} \emph{#1}}}
\def\gopal#1{{\color{red}\underline{\textsf{Gopal:}}} {\color{blue} \emph{#1}}}
\def\fabien#1{{\color{orange}\underline{\textsf{Fabien:}}} {\color{blue} \emph{#1}}}
\else
\def\billy#1{}
\def\gopal#1{}
\def\fabien#1{}
\fi

\title{Fast Gossip-Based Rumor Spreading Using Small Messages}

\author{Fabien Dufoulon} 
\orcid{0000-0003-2977-4109}
\affiliation{%
\institution{Lancaster University}
\city{Lancaster}
\country{UK}
}
\email{f.dufoulon@lancaster.ac.uk}

\author{William K. {Moses Jr.}}
\orcid{0000-0002-4533-7593}
\affiliation{%
\institution{Durham University}
\city{Durham}
\country{UK}
}
\email{wkmjr3@gmail.com}

\author{Gopal Pandurangan}
\orcid{0000-0001-5833-6592}
\affiliation{%
  \institution{University of Houston}
  \city{Houston}
  \country{USA}
}
\email{gopal@cs.uh.edu}

\copyrightyear{2026}
\acmYear{2026}
\setcopyright{cc}
\setcctype{by}
\acmConference[PODC '26]{ACM Symposium on Principles of Distributed Computing}{July 06--10, 2026}{Egham, United Kingdom}
\acmBooktitle{ACM Symposium on Principles of Distributed Computing (PODC '26), July 06--10, 2026, Egham, United Kingdom}
\acmDOI{10.1145/3796701.3815943}
\acmISBN{979-8-4007-2512-8/2026/07}

\begin{document}

\begin{abstract}

We study gossip algorithms for the fundamental rumor spreading problem, where the goal is to disseminate a rumor from a given source node to all nodes in an arbitrary (and unknown) graph. Gossip algorithms allow each node to call only  \textit{one} neighbor per round 
and are therefore highly message-efficient, with low per-node communication overhead per round. The state of the art works present fast gossip algorithms, however they typically leverage \textit{large-sized messages}. This undermines the light-weight communication advantage of gossip, since even though only one neighbor is contacted per round, the message size can be linear in $n$, the network size. Hence, a fundamental question is whether one can perform fast gossip using \textit{small}  messages.

 The main contribution of this paper is to answer the above question in the affirmative and present two gossip algorithms that achieve fast rumor spreading \textit{using messages of $\polylog{n}$ size}.  Specifically, we show the following results:
 \begin{enumerate}
 \item We present a gossip algorithm for rumor spreading that runs in $O(c \log n / \Phi_c)$ rounds for every $c \geq 1$, and $\Phi_c$ is the weak conductance. Our algorithm's run time \textit{not only improves} over the
Censor-Hillel-Shachnai bound [SODA 2011; SICOMP 2012], but more significantly, it uses \textit{messages of small ($\polylog{n}$) size}, unlike the prior work that used messages of large (at least linear in $n$) size. Our bound in terms of weak conductance is essentially optimal.
\item We also present a gossip algorithm for rumor spreading that depends  on the network diameter (and is independent of the graph's conductance), which runs in $\tilde{O}(D+\sqrt{n})$ rounds with high probability and uses small ($\polylog{n}$ size) messages. Our bound is a significant improvement over the gossip algorithm of 
$\tilde{O}(\sqrt{nD})$ due to Ghaffari and Kuhn [DISC 2018], which also uses small-sized messages. We note that our algorithm (unlike that of Ghaffari and Kuhn) is optimal for when $D = \Omega(\sqrt{n})$. Furthermore, our gossip algorithm can be modified to output a minimum spanning tree (MST) in the same number of rounds, which is essentially round-optimal (even for non-gossip algorithms).
\end{enumerate}
Our gossip algorithms use \textit{graph sketches} [Ahn, Guha, McGregor, SODA 2012] in a novel way to overcome communication bottlenecks and achieve small communication overhead with small message sizes.

\end{abstract}

\maketitle

\section{Introduction}
\label{sec:intro}
\textit{Rumor spreading} (also called \textit{broadcasting} or \textit{information dissemination}) is a fundamental network communication primitive that has been studied extensively 
for many decades under various models and assumptions~\cite{CHKM12,CHKM12-journal,CS11,CS12,CGLP18,FPRU90,FG85,gia1,vertex-exp,GSS14,gia2,H13,H15,P87}. In the rumor spreading problem, we are given an arbitrary  (connected)  communication graph $G = (V,E)$ and a source node $v \in V$ with a piece of
information (a.k.a. ``rumor") and the goal is to disseminate the rumor to all other nodes in the graph. As discussed later, it is easy to modify rumor spreading algorithms to compute node \textit{aggregate} functions such as minimum, maximum,  sum or average. (Throughout we use $n  = |V|$ to denote the
number of nodes in $G$, $m = |E|$ for the number of edges in the network, and $D$ for the graph diameter.) 

We study \textit{gossip} algorithms for rumor spreading. In gossip,  each node is allowed to call only  \textit{one} neighbor per round.
Hence, it is very message-efficient, with a small communication overhead per node per round. This is in contrast to other message-heavy broadcasting techniques such as \textit{flooding}, where a node  (with the rumor) forwards it to \textit{all} neighbors (which can be as large as $\Theta(n)$) in a round.
It is well-known that there are graphs where flooding can take $\Theta(m)$ messages to spread the rumor to all nodes, but gossip takes only $\tilde{O}(n)$ messages ($\tilde{O}$ notation hides logarithmic in $n$ factors).\footnote{For example, in dense graphs with $\Theta(n^2)$ edges and having constant conductance, uniform gossip  takes $O(\log n)$ rounds and, hence, $O(n \log n)$ messages, whereas flooding takes $\Theta(n^2)$ messages.} 

A simple and well-studied gossip algorithm is \textit{uniform gossip} where each node calls a \textit{random} neighbor and exchanges the rumor (if any of them have it).  More precisely, in uniform gossip, also called (uniform) \textit{push-pull} gossip, 
in each round, each  node  $v$ with the rumor contacts a random neighbor  and \textit{pushes} the rumor to it,
and each node $u$ that does not have the rumor, contacts a random neighbor and \textit{pulls} the rumor from it (if the neighbor has it).
A long line of research~\cite{CLP10-stoc,CLP10-soda,CLP11,E06,KSSV00} on uniform gossip culminated in showing that it takes $O(\frac{\log n}{\Phi})$ rounds 
(w.h.p.)\footnote{W.h.p. stands for with high probability and denotes $1 - 1/n^d$, for an arbitrary $d \geq 1$.}
where $\Phi$ is the \textit{conductance} (see Section~\ref{prelims:weak-conductance})  of the graph \cite{CGLP18,Giakkoupis2011}.  In general, this bound is also tight as there are graphs where rumor spreading needs at least $\Omega(\frac{\log n}{\Phi})$ rounds (e.g., a constant degree expander) \cite{CGLP18,Giakkoupis2011}. 

Informally, conductance $\Phi \in [0,1]$ measures how well the graph is connected, and is high for well-connected graphs such as cliques and expanders but small for graphs that have bottlenecks (such as the dumbbell graph mentioned below).  
 Uniform gossip is slow if the conductance of the network is small, which happens when the graph has bottlenecks. In such a case,
uniform gossip can take as much as $\Theta(n \log n)$ rounds.\footnote{Note that even a weaker version of gossip called \textit{push-gossip} where only nodes (with the rumor) push, achieves rumor spreading in $O(n \log n)$ rounds \cite{FPRU90}.}
For example, consider a dumbbell graph with two cliques each of size $n/2$ connected by an edge; here uniform gossip will take $\Theta(n)$ rounds in expectation and $O(n \log n)$ rounds w.h.p.,
even though the graph diameter is only 3.

To mitigate the slowness of uniform gossip in graphs of low conductance, a series of papers studied \textit{non-uniform} gossip that could overcome communication bottlenecks. 
In non-uniform gossip, it is \textit{not} required that a node contacts a \textit{random} neighbor in a round; rather, any (one)  neighbor can be contacted.
The first such significant result was a non-uniform gossip algorithm by Censor-Hillel and Shachnai~\cite{CS11,CS12}  that achieved a running time that depended on so-called  \textit{weak conductance} 
of the graph, which can be substantially larger than its conductance.  The weak conductance of a graph, denoted by $\Phi_c$, is parameterized by $c \geq 1$, and measures the conductance restricted to node subsets of size up to $n/c$
   (see Section \ref{prelims:weak-conductance} for a formal definition). For example, the conductance (which is the same
   as weak conductance with $c=1$) of the above dumbbell graph is  small --- $\Theta(1/n^2)$ --- but its weak conductance
$\Phi_c$ is $\Theta(1)$ for $c = 2$. 
The Censor-Hillel-Shachnai algorithm achieves rumor spreading in $O(\frac{c\log n}{\Phi_c} +c^2)$ rounds  (with high probability) for any $c>1$. Hence, for the dumbbell graph, this algorithm takes $O(\log n)$ rounds, which is exponentially faster than uniform gossip.
An important point to note about this algorithm is that the messages sent can be of very \textit{large size} (up to linear in $n$ due to the fact that a message can consist of the identities of a linear number of nodes) and hence the message overhead is very large even to spread a \textit{single rumor}.

 A subsequent work by Censor-Hillel, Haeupler, Kelner, and Maymounkov~\cite{CHKM12}  
 presented a non-uniform gossip algorithm that runs in time that is \textit{independent} of the conductance of the graph. 
 This algorithm ran in $O(D+\polylog{n})$ rounds (with high probability), where $D$ is the graph diameter. This is near optimal since $\Omega(D)$ is a trivial lower bound even for the more powerful flooding algorithm (where a node can send messages to all its neighbors in a round). 
 Thus, it was surprising that gossip, which is constrained to contact only one neighbor, can almost (up to an additive $\polylog{n}$ term) match the bound obtained by flooding. Hence, this work established that gossip is almost as powerful as flooding.
 Later, Haeupler~\cite{H13,H15} showed that the diameter running time bound can even be achieved \textit{deterministically}, by presenting
 a deterministic non-uniform gossip algorithm that ran in $2(D\log n + \log^2 n)$ rounds.  \textit{It is important to note that all the above gossip algorithms 
 also use messages of large size (at least linear in $n$).}

To summarize,  while the above gossip algorithms can be significantly faster than uniform gossip, a main drawback of the algorithm is that they use \textit{large-sized messages}  to spread even a \textit{single} rumor. (In contrast, in uniform gossip, only the rumor is exchanged, and hence, the message size is small.) This undermines the light-weight communication advantage of gossip, since even though only one neighbor is contacted per round, the message size can be linear in $n$, the network size. Hence, a fundamental question is:

\begin{center}
    \fbox
    {
        \begin{minipage}{46em}
           Can we achieve fast rumor spreading (in terms of weak conductance or just in terms of the diameter as in the results discussed above) using only \textit{small (i.e., $\polylog{n}$ size)} messages?
 
        \end{minipage}
    }
\end{center}

 In this paper, we answer the above question in the affirmative and present two gossip algorithms that achieve fast rumor spreading \textit{using messages of $\polylog{n}$ size}. The full version of the paper~\cite{DMP26arxiv} contains missing algorithm descriptions and proofs.

\subsection{Model}
\label{subsec:intro-model}

We assume that we are given a connected \textit{arbitrary} graph $G = (V,E)$ as input. Let $|V| = n$ and $|E| = m$.

Each node has a unique ID (identifier) taken from the range $[1,N]$, where $N$ is a polynomial in $n$. 
Each ID can be represented using $O(\log n)$ bits. As is standard in prior non-uniform gossip algorithms \cite{CHKM12,CS11,CS12,H13,H15}, we assume that each node knows its own ID and the IDs of its neighbors. (This is a standard model in distributed computing, called the \textit{Knowledge-Till-Radius 1 $(KT1)$} model.) 
It is important that the network topology $G$ is unknown to the nodes. We assume all nodes have knowledge of the polynomial upper bound $N$.
Additionally, our algorithms  require extra knowledge assumptions, which are mentioned in Section~\ref{subsec:intro-our-contribs}.

The computation proceeds in \textit{synchronous} rounds. We assume a \textit{gossip-based} communication model which is very lightweight, where in a round,  a node contacts  \textit{at most one of its neighbors} in order to send and receive a message over that edge.  This setting is henceforth known as the $\GOSSIP$ model~\cite{CHKM12,CS11,CS12,H13,H15}. 
 Note that although the $\GOSSIP$ model allows each node to send a message to only one neighbor per round, one can easily simulate sending messages to any  $k \geq 1$ neighbors in a round, by performing gossip for $k$ rounds and thus blowing up the number of rounds by a factor of $k$. On the other hand, in the standard $\CONGEST$ model, a node can send a message (of small size, say $O(\log n)$ bits) to all (or a subset) of its neighbors in a round.
 In addition, in the $\LOCAL$ model, the message
size is unbounded, i.e., a node can send messages of
arbitrary size to all (or a subset) of its neighbors in a round. 

As mentioned earlier, prior non-uniform algorithms~\cite{CS11,CS12,H13,H15}
transmit messages of large (up to linear in $n$) size. Hence, they can be said to operate in the $\GOSSIP-\LOCAL$ model where each node can contact
only one neighbor in a round, but the size of
the message exchanged 
can be arbitrary. On the other hand, our algorithms 
operate in the $\GOSSIP-\CONGEST$ model, where each node can contact
only one neighbor in a round, but the size of
the message exchanged is small ($\polylog{n}$ size).\footnote{Note that one can restrict each message size to $O(\log n)$ bits instead of $O(\polylog{n})$ bits; however, this leads to (only) a $O(\polylog{n})$-factor blow up in the round complexity of our algorithms.}
Thus, a round $r$ in the $\GOSSIP-\CONGEST$ model consists of the following three steps: (i) each node contacts one neighbor, (ii) each node sends a $O(\polylog{n})$ bit message to the contacted neighbor and receives a $O(\polylog{n})$ bit reply, and (iii) after all messages in transit have been received, $u$ performs some local computation.\footnote{Note that although each node contacts only one neighbor, 
 a node can be contacted by as many neighbors as its degree in a round
 (e.g., the center node in a star graph). As standard in gossip protocols (including those that we compare here)~\cite{CLP10-stoc,CLP10-soda,CLP11,E06,KSSV00,CS11,CS12,CHKM12-journal, GK18}, we assume that the node replies to all of them in the same round.} 

\subsection{Our Contributions}
\label{subsec:intro-our-contribs}

We present two gossip algorithms that achieve fast rumor spreading \textit{using messages of $\polylog{n}$ size}, improving over prior results.  Notice that each round in a gossip algorithm results in $O(n)$ messages, so the total message complexity of a $t$ round algorithm is $O(nt)$.

\paragraph{A gossip algorithm as a function of weak conductance (see Section \ref{sec:alg})} We present the first (non-uniform)  gossip algorithm to achieve fast rumor spreading in terms of weak conductance using messages of $\polylog{n}$ size, in the $\GOSSIP-\CONGEST$ model.  
  
Our  (randomized) gossip algorithm for rumor spreading, with high probability, runs in 
$O(c \log n / \Phi_c)$ rounds for every $c \geq 1$, where $\Phi_c$ is the weak conductance (defined in Section \ref{prelims:weak-conductance}). This bound not only improves over the running time bound
of  $O(c\frac{\log n}{\Phi_c} +c^2)$ of Censor-Hillel and Shachnai~\cite{CS11,CS12}, but also uses significantly smaller messages of $\polylog{n}$ size. In other words, while the Censor-Hillel-Shachnai algorithm  runs  in the $\GOSSIP-\LOCAL$ model, our algorithm has better performance \textit{and} runs in the $\GOSSIP-\CONGEST$ model. 
Our algorithm assumes that every node knows the values of both $c$ and $\Phi_c$. 
We note that the previous best algorithm for this setting~\cite{CS11,CS12} needs this assumption in order to achieve termination. 
A comparison of our result and past work can be found in Table~\ref{table:results-conductance}.
Formally, we show the following result.

\begin{restatable}{theorem}{rumorSpreadingInWeakConductance}
\label{thm:rumorSpreadingInWeakConductance}
    There exists a $\GOSSIP-\CONGEST$ algorithm solving rumor spreading, with high probability, in $O(c \log n / \Phi_c)$ rounds with high probability, for every $c\geq1$ and where $\Phi_c$ is the weak conductance of the graph.
\end{restatable}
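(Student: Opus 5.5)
The plan has three phases. First, run uniform push-pull so the rumor reaches a set of size at least $n/c$. Second, have the informed nodes form a connected structure they can broadcast over. Third, use graph sketches, with $\polylog{n}$-size messages, to let informed components find and absorb uninformed neighbors; this replaces the large-message merging of Censor-Hillel and Shachnai.

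\textbf{Phase 1 (growth up to $n/c$).} Run uniform push-pull for $T=O(c\log n/\Phi_c)$ rounds, scaled appropriately. The standard analysis of push-pull in terms of conductance (Giakkoupis; Chierichetti--Giakkoupis--Lattanzi--Panconesi) considers a set $S$ of informed nodes. As long as $\vol(S)$ is at most the volume of some set of size $\le n/c$, the weak-conductance bound gives $|\partial S|\ge \Phi_c \vol(S)$. Hence the informed volume grows by a $(1+\Omega(\Phi_c))$ factor per round in expectation. This yields w.h.p. an informed set $S$ such that $\vol(S)$ is large, or more precisely such that no set of size at most $n/c$ contains $S$ in the relevant sense. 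Each such round uses messages of size $O(\log n)$. Along the way, every informed node records the edge over which it was first informed. This builds a spanning tree of the informed set, rooted at the source, of depth at most $T$ in rounds; pushes and pulls both produce tree edges.

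\textbf{Phase 2 (finishing with sketches).} After Phase 1, the uninformed set $U=V\setminus S$ may still be large, but every connected piece of $U$ of size $\le n/c$ expands at rate $\Phi_c$. I would then alternate the following steps for $O(\log n)$ iterations.
\begin{enumerate}
\item Each current informed cluster, which is a tree, aggregates an AGM $\ell_0$-sketch of the edges leaving it. It does this by convergecast up its tree, summing the $\polylog{n}$-bit linear sketches.
\item The cluster samples an outgoing edge and informs its endpoint.
\item Uninformed nodes simultaneously run push-pull among themselves and toward informed neighbors.
\end{enumerate}
The key observation is a charging argument. A component $C$ of $U$ with $|C|\le n/c$ loses a $\Phi_c$-fraction of its volume in $O(1/\Phi_c)$ rounds of pull, in the same way as in Phase 1 applied to the complement. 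A component with $|C|>n/c$ is impossible, or at most $c$ such components exist, once Phase 1 is done. So $O(c\log n/\Phi_c)$ rounds suffice in total.

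\textbf{Main obstacle.} The delicate step is the cost of convergecasting sketches over trees. The trees can be deep, and a tree node of high degree in the tree must receive from many children. In the gossip model, a node can be contacted by many neighbors in one round and reply to all of them, but it can only initiate one call per round. The remedy I would try first is to have children \emph{call} their parent, so that aggregation proceeds in $O(\text{depth})=O(T)$ rounds. Linearity of sketches lets partial sums combine at internal nodes. I would then argue that the depth stays $O(c\log n/\Phi_c)$ by regrowing the trees in lockstep with the pull phases. Getting the $+c^2$ term of Censor-Hillel and Shachnai to disappear hinges on processing the at most $c$ large pieces in parallel rather than sequentially. I expect this accounting to be the hardest part.
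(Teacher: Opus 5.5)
\textbf{There is a genuine gap: your Phase~2 grows only the informed tree, which forces quadratic-in-$c$ sequential work; its charging argument also misreads weak conductance.}

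\textbf{The charging argument.} Weak conductance only promises that each node $v$ lies in \emph{some} set $S_v$ with $|S_v|\ge n/c$ whose internal cuts have conductance at least $\Phi_c$. It gives no expansion for arbitrary sets of at most $n/c$ nodes. In the $c$-barbell, a whole clique has $n/c$ nodes and a single cut edge. So an uninformed clique hanging off the informed region does not lose a $\Phi_c$-fraction of its volume in $O(1/\Phi_c)$ rounds of pull. Also, with the source in $C_1$, the uninformed remainder $C_2\cup\dots\cup C_c$ is a single component of size $(c-1)n/c$, contradicting your claim about large uninformed components after Phase~1. Your Phase~1 justification ($|\partial S|\ge\Phi_c\,\mathrm{vol}(S)$ for small $S$) has the same flaw. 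Its conclusion is still recoverable by arguing inside $S_s$, as the paper does with Giakkoupis's bound.

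\textbf{The structural obstruction.} Without the charging argument, you only ever grow the single informed tree, and it samples one outgoing edge per sketch round. On the $c$-barbell ($\Phi_c=\Theta(1)$, target $O(c\log n)$), push-pull crosses a bridge with probability only $O(c/n)$ per round, so each iteration opens essentially one new clique. Thus $O(\log n)$ iterations leave most of the graph uninformed, and you need $\Theta(c)$ of them. The $j$-th iteration aggregates sketches over $j$ consecutive cliques whose extreme nodes are at least $j-1$ hops apart, which costs $\Omega(j)$ rounds. That gives $\Omega(c^2)$ overall, exceeding the bound for, e.g., $\log n\ll c\le\sqrt n$. This is exactly the quadratic behavior the theorem must avoid, and it cannot be parallelized from the informed side alone.

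\textbf{The missing idea.} First partition the \emph{whole} graph, informed or not, into at most $\lfloor c\rfloor$ disjoint clusters, each spanned by a tree of depth $O(\log n/\Phi_c)$. The paper does this in $\lfloor c\rfloor$ set-up phases of $O(\log n/\Phi_c)$ rounds each. Each phase adds roots of new pairwise kernel-disjoint sunflowers via highest-ID gossip, prunes by depth, and keeps the shallower parent.

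All clusters then merge in parallel with sketches under a gated schedule. In phase $i$, only super clusters whose clusters lie within supergraph distance $2^i$ of their maximum-root cluster build a temporary tree and sample an outgoing edge, at cost $O(2^i\log n/\Phi_c)$; the others wait to be absorbed. Each phase at least doubles the minimum number of clusters per super cluster, so $\log c$ phases suffice and the total is $\sum_i O(2^i\log n/\Phi_c)=O(c\log n/\Phi_c)$. An ungated Boruvka-style merge would pay up to $O(c\log n/\Phi_c)$ per phase and lose a $\log c$ factor.

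Finally, the convergecast issue you single out is not an obstacle. Children calling their parents already gives convergecast in time linear in the depth, which is exactly the paper's tree primitive.
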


Additionally, one may see that our algorithm is existentially asymptotically optimal. Consider the following variant of the $c$-barbell (see Figure~\ref{fig:c-barbell}): consider $c$ expanders, each of size $n/c$ and diameter $\Theta(\log (n/c))$, connected in a line. Such a graph has weak conductance $\Phi_c = \Theta(1)$ (see Section~\ref{prelims:weak-conductance}). The diameter of such a graph is $\Theta(c \log (n/c))$ and acts as a natural lower bound for gossip. Then, we can see that our algorithm is asymptotically optimal for constant values of $c$ and almost asymptotically optimal (up to a $\log n$ factor) for larger values of $c$. Furthermore, the absence of a $c^2$ factor in our run time can result in significant time savings for some types of graphs. In particular, for larger values of $c$, i.e., $c = \Theta(n)$, the above described lower bound graph leads to a $\Theta(n)$ speedup in run time of our algorithm over that of Censor-Hillel and Shachnai~\cite{CS11,CS12}.

\begin{table*}[ht]
\footnotesize
	\caption{
	Comparison of gossip algorithms focused on weak conductance $\Phi_c$. $\tilde{O}$ notation hides logarithmic in $n$ factors, where $n$ is the number of nodes.
	} 
	\centering 
		\resizebox{0.8\columnwidth}{!}
        {%
	\begin{tabular}{|c|c|c|}
		\hline
		\textbf{Result} & \textbf{Run Time (in rounds)} & \textbf{Message Size Required (in bits)} \\
		\hline
		Censor-Hillel \& Shachnai~\cite{CS11,CS12} & $O(c\frac{\log n}{\Phi_c} +c^2)$ & $\tilde{O}(n)$ \\
		\hline
        This work - Theorem~\ref{thm:rumorSpreadingInWeakConductance} & $O(c\frac{\log n}{\Phi_c})$ & $\tilde{O}(1)$\\
        \hline
	\end{tabular}
		}
	\label{table:results-conductance}
\end{table*}

\paragraph{A gossip algorithm as a function of diameter (see Section \ref{sec:diameterAlg})}
We also present another gossip algorithm for rumor spreading in the $\GOSSIP-\CONGEST$ model that depends on the network diameter, which runs in $\tilde{O}(D+\sqrt n)$ rounds with high probability.
The randomized gossip algorithm of Censor-Hillel, Haeupler, Kelner, and Maymounkov~\cite{CHKM12} (see also the deterministic algorithm due to Haeupler~\cite{H13,H15} with a similar time bound) whose run time is a function of the graph diameter (and not conductance) ---  takes  $O(D+\polylog{n})$ rounds, but again, it uses large-sized messages, hence it works only in the $\GOSSIP-\LOCAL$ model. On the other hand, the randomized gossip algorithm of Ghaffari and Kuhn~\cite{GK18} works in the $\GOSSIP-\CONGEST$ model but takes $\tilde{O}(\sqrt{nD})$ rounds.  In this work, we give a randomized gossip algorithm that significantly improves on that latter result. Our algorithm assumes that nodes have knowledge of a constant-factor upper bound on $n$, but it does not assume any knowledge of the diameter. A comparison of our result and past work can be found in Table~\ref{table:results-diameter}.

\begin{restatable}{theorem}{rumorSpreadingGeneral}
\label{thm:rumorSpreadingGeneral}
There exists a $\GOSSIP-\CONGEST$ algorithm solving rumor spreading, with high probability, in $\tilde{O}(D + \sqrt n )$ rounds.
\end{restatable}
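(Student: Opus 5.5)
We would prove Theorem~\ref{thm:rumorSpreadingGeneral} by reducing rumor spreading to building a spanning structure of small depth along which the rumor can be pipelined. The plan follows the classic $\tilde{O}(D+\sqrt n)$ CONGEST paradigm: split the graph into fragments, and use a global structure to handle the part that cannot be done locally. The gossip constraint means every step must be implemented with each node contacting only one neighbor per round. High-degree nodes are the bottleneck, and linear sketches (as in Ahn--Guha--McGregor) are the tool for aggregating information across many edges without contacting each neighbor.

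\textbf{Step 1 (Bor\r{u}vka with sketches, locally).} Run a Bor\r{u}vka-style fragment merging for $O(\log n)$ phases.
\begin{itemize}
\item In each phase, every fragment computes an outgoing edge. Each node builds an $\ell_0$-sampling sketch of its incidence vector, which it can do locally in KT1 from its neighbors' IDs and shared randomness. The fragment then aggregates these sketches (by summation) over its fragment tree.
\item Aggregation along a tree of depth $d$ takes $O(d)$ rounds using only tree edges. The catch is that a node with many tree children would need many contacts. To fix this, children push to their parent: contacted nodes reply to all callers in the same round. Sums flowing upward therefore need the children to initiate, so a parent receives all child sketches in one round. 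Downward broadcasts work the same way, with children pulling from their parent.
\item Hence a convergecast or broadcast on a tree of depth $d$ costs $O(d)$ gossip rounds with $\polylog n$-size messages. This is the key observation that makes trees cheap in gossip.
\end{itemize}

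\textbf{Step 2 (controlling depth).} To keep fragment depth at most $\sqrt n$, we stop growing fragments once they reach size about $\sqrt n$, as in the Kutten--Peleg/GKP approach. Merges form stars via random coin flips so that depth grows controllably. The result is at most $O(\sqrt n)$ "large" fragments with depth $\tilde{O}(\sqrt n)$, found in $\tilde{O}(\sqrt n)$ rounds. Knowledge of a constant-factor bound on $n$ is used to set the $\sqrt n$ threshold.

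\textbf{Step 3 (global phase).} We build a BFS tree from the source.
\begin{itemize}
\item Naive BFS flooding is not available in gossip. Instead, a node learns it is at distance $i+1$ by pulling from a neighbor at distance $i$. It cannot poll all neighbors, so it uses a sketch. Each node's "frontier-neighbor" sketch is not local, though, so we would rather use pull with a random neighbor plus the fragment structure.
\item My preferred route: contract fragments into $O(\sqrt n)$ super-nodes. Finding a spanning tree of the fragment graph requires $O(\log n)$ Bor\r{u}vka phases on super-nodes, where each phase needs each fragment's outgoing-edge sketch. Sketches of fragments are sums, computed within fragments in $\tilde{O}(\sqrt n)$ rounds. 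The merge decisions of $O(\sqrt n)$ fragments, each of $\polylog n$ bits, are pipelined over a low-depth global tree to and from a leader in $\tilde{O}(D+\sqrt n)$ rounds.
\end{itemize}

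\textbf{Main obstacle.} The main obstacle is obtaining that depth-$\tilde{O}(D)$ global tree in gossip with small messages, since a BFS tree needs each node to find a neighbor one layer closer. I would first try computing BFS layer by layer, where nodes at distance $i$ notify neighbors by push; a push to all neighbors is too costly for high-degree nodes. The alternative is to let each unreached node pull from a random neighbor and bound the cost via fragments. If that fails, I would use the sketch trick on the current "reached set" restricted per fragment: each fragment computes a sketch of edges leaving the reached set in $\tilde{O}(\sqrt n)$ rounds per batch, and batches of $\sqrt n$ BFS layers are processed together to amortize to $\tilde{O}(D+\sqrt n)$.

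Once a spanning tree of depth $\tilde{O}(D+\sqrt n)$ (fragment trees plus inter-fragment edges) exists, broadcasting the rumor from the source along it with pull-from-parent takes that many rounds. This yields the theorem w.h.p., with failure probabilities coming from the sketches (each fails with probability $1/\mathrm{poly}(n)$) and a union bound over all uses.
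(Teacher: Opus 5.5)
\paragraph{Verdict: there is a genuine gap in Step 3.}
Your Steps 1--2 are implementable in $\tilde O(\sqrt n)$ rounds: sketch-based Boruvka with coin-flip star merges, fragment size capped at about $\sqrt n$, and convergecast done by having children call their parent. Step 3 does not close, and it carries the whole content of the theorem.

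\begin{itemize}
\item \emph{The construction is circular.} Your fragment-level Boruvka pipelines merge decisions ``over a low-depth global tree.'' A spanning tree of depth $\tilde O(D+\sqrt n)$ is essentially equivalent to the statement itself. Broadcasting over it solves rumor spreading, and conversely the paper obtains such a tree as a byproduct of rumor spreading.
\item \emph{The resulting tree has the wrong depth.} Even if the fragment-level merging could be run, fragment trees plus sampled inter-fragment edges do not give the tree you need. Sampled outgoing edges are oblivious to distances in $G$. So the depth is bounded only by (number of fragments)$\times$(fragment depth) $=O(n)$, with nothing tying it to $D$. This is why the CONGEST paradigm you borrow relies on a BFS tree, which is exactly what gossip cannot build cheaply.
\item \emph{Random pull fails.} Pulling from a random neighbor is essentially uniform gossip. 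It needs $\Theta(n)$ rounds in expectation on the dumbbell, a graph of constant diameter.
\item \emph{Per-fragment sketches fail.} A sketch of edges leaving the reached set yields one edge per fragment per $\tilde O(\sqrt n)$-round aggregation. It therefore discovers single new nodes, not whole BFS layers. Layers also cannot be batched, since layer $i+1$ is only defined once layer $i$ is known.
\end{itemize}

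\paragraph{The missing idea.}
The paper builds no global tree in advance. Instead it floods the rumor over a sparse subgraph $G'$ on which one hop costs $O(\log n)$ gossip rounds, and where distances grow by at most an $O(\log n)$ factor plus an additive $\tilde O(\sqrt n)$. It obtains $G'$ from a degree split at threshold $\Theta(\sqrt n\log n)$.

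\begin{itemize}
\item \emph{High-degree part.} Stars are sampled with probability $n^{-1/2}$, so there are $O(\sqrt n\log n)$ of them. Each high-degree node finds a star neighbor by calling $\Theta(\sqrt n\log n)$ random neighbors. Star clusters are then merged over $G[S\cup H]$ by sketch-based coin-flip merging. Because the stars dominate each resulting tree, the tree diameters \emph{sum} to $O(\sqrt n\log n)$. This global budget, not a per-fragment depth bound, caps the additive stretch along any shortest path.
\item \emph{Low-degree part.} Low-degree nodes form a vertex cover of $E_L$ of degree $\tilde O(\sqrt n)$, so one CONGEST round on $G[E_L]$ costs $\tilde O(\sqrt n)$ gossip rounds. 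That is enough to simulate MPX once and keep one edge per neighboring cluster. The result is an $O(\log n)$-spanner with an $O(\log n)$ out-degree orientation.
\item \emph{Flooding.} Each phase is one round up the forest plus $O(\log n)$ rounds simulating a CONGEST round on the spanner. $O((D+\sqrt n)\log n)$ phases suffice.
\end{itemize}
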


We note that our algorithm (unlike that of \cite{GK18}) is optimal when $D = \Omega(\sqrt{n})$. A major open question is 
whether the bound shown in Theorem \ref{thm:rumorSpreadingGeneral} is optimal.
We conjecture that this bound is indeed 
optimal (up to logarithmic factors) --- see Section~\ref{sec:conclusion}.

\begin{table*}[ht]
\footnotesize
	\caption{
	Comparison of gossip algorithms focused on diameter $D$. $\tilde{O}$ notation hides logarithmic in $n$ factors, where $n$ is the number of nodes.
	} 
	\centering 
		\resizebox{1.0\columnwidth}{!}
        {%
	\begin{tabular}{|c|c|c|}
		\hline
		\textbf{Result} & \textbf{Run Time (in rounds)} & \textbf{Message Size Required (in bits)} \\
		\hline
		Censor-Hillel, Haeupler, Kelner, \& Maymounkov~\cite{CHKM12,CHKM12-journal} & $\tilde{O}(D)$ & $\tilde{O}(n)$ \\
		\hline
        Haeupler~\cite{H13,H15} & $\tilde{O}(D)$ & $\tilde{O}(n)$ \\
		\hline
        Ghaffari \& Kuhn~\cite{GK18} & $\tilde{O}(\sqrt{nD})$ & $\tilde{O}(1)$ \\
		\hline
        This work - Theorem~\ref{thm:rumorSpreadingGeneral} & $\tilde{O}(D + \sqrt n )$ & $\tilde{O}(1)$ \\
		\hline
	\end{tabular}
		}
	\label{table:results-diameter}
\end{table*}

\paragraph{Applications (see Section \ref{sec:apps})}
Our rumor spreading algorithms can also be used to directly compute certain aggregate functions efficiently in the $\GOSSIP-\CONGEST$ model. For example, it is easy to compute minimum (or maximum) by simply spreading and aggregating the minimum value (there is no congestion due to the aggregation of different values at nodes). 
A byproduct of our gossip algorithms is that they construct a \textit{spanning tree}
of the underlying network whose \textit{diameter} is bounded by the algorithm's run time. This can be used for various applications. For example, the spanning tree leads to gossip algorithms for other fundamental distributed computing problems such as leader election (where the goal is to elect a unique node in the graph as a leader) and a minimum spanning tree (MST) as well as computing aggregate functions such as \textit{sum} and \textit{count} \textit{exactly}.

\begin{corollary}
    There exist $\GOSSIP-\CONGEST$ algorithms solving spanning tree construction, leader election, and aggregate function computation  in $O(c \log n / \Phi_c)$ rounds with high probability, for every $c\geq1$ and where $\Phi_c$ is the weak conductance of the graph.
\end{corollary}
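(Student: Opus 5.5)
The plan is to derive the corollary from the algorithm behind Theorem~\ref{thm:rumorSpreadingInWeakConductance}, viewed as a spanning-tree builder rather than only a rumor spreader. Let $T = O(c\log n/\Phi_c)$ denote its round bound. In a run from source $s$, each node $v \neq s$ records the neighbor from which it first received the rumor; in a tie it picks the smallest ID. That neighbor becomes $v$'s parent. Every parent was informed strictly earlier than its child, so the parent pointers are acyclic and form a tree rooted at $s$ that spans all nodes w.h.p. The depth of a node is at most the round in which it was informed, so the depth is at most $T$. Recording a parent costs no extra messages, since a node always knows which neighbor it exchanged the rumor with. One point must be checked against the algorithm's internals: if the algorithm relays the rumor along multi-hop virtual links (for example, sketch-based shortcuts between clusters), then the parent must be the actual graph neighbor along that relay path. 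I would then argue that the relay edges used also respect the time ordering, so each tree path has length at most $T$.

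Once the tree exists, every node $v$ knows its parent. To also let $v$ learn its children, and hence run convergecasts, each child contacts its parent once. All children can contact their parent in the same round, because in the $\GOSSIP$ model a node may be contacted by many neighbors at once and replies to all of them. Aggregation then needs no fan-out.

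\textbf{Convergecast.} Nodes act bottom-up by depth: at step $t$, nodes at depth $T-t$ contact their parent and send their partial aggregate of size $O(\log n)$ bits. A parent receives all of its children's values in the same round, so each level costs one round and the convergecast finishes in $O(T)$ rounds. This computes \emph{sum} and \emph{count} exactly, and also min and max.

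\textbf{Broadcast back down.} Children pull the result from their parents, level by level, in another $O(T)$ rounds.

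For this to work, nodes must know their depths, or at least a common upper bound on $T$. Depths are easy, since the parent can attach its own depth when it hands over the rumor. A common bound on $T$ follows because $c$ and $\Phi_c$ are known to all nodes.

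For leader election, there is no designated source. I would run the rumor spreading with every node acting as a candidate and spreading the minimum ID seen so far, as suggested in the applications paragraph. The main obstacle lies here: the weak-conductance algorithm was analyzed for a single source, and its cluster and sketch machinery may behave differently when many sources compete. My first attempt would be to show that only the eventual minimum ID's "wave" matters. Consider the execution restricted to the minimum-ID rumor: it coincides with a single-source run from that node, as long as nodes always forward the current minimum and the algorithm's choices of whom to contact do not depend on which rumor a node holds. If the contact choices do depend on the rumor, the fallback is to first elect the leader with min-spreading and only a time bound, then rerun single-source spreading from the elected leader to build the tree. This gives leader election, spanning tree construction, and aggregation, each in $O(c\log n/\Phi_c)$ rounds w.h.p.
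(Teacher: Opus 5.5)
\textbf{Gap: leader election, and hence any version without a designated source.} Your reduction for a \emph{given} source is sound. First-receipt pointers point to nodes informed in a strictly earlier round, so they are acyclic with depth at most the running time, and convergecast plus broadcast over them gives exact aggregates. Your worry about multi-hop ``virtual links'' is moot: in $\GOSSIP$ every exchange is over a real edge, and the sketches only \emph{identify} real inter-cluster edges.

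None of the three problems comes with a designated source, though. So your spanning tree and aggregation arguments also rest on the leader-election step, and that step is only conditional. The coupling ``the min-ID wave coincides with a single-source run'' holds only if the algorithm's contact pattern does not depend on the rumor, and you never establish this. The fallback is circular: ``elect the leader by min-spreading with a time bound'' \emph{is} the multi-source execution whose correctness is in question. If you instead mean a generic min-spreading such as uniform push-pull, it is too slow. On the dumbbell, $\Phi_2=\Theta(1)$, but uniform gossip needs $\Theta(n)$ expected rounds to cross the bridge.

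\textbf{Missing observation: the algorithm behind Theorem~\ref{thm:rumorSpreadingInWeakConductance} already builds a spanning tree without any source.}
\begin{enumerate}
\item The set-up part runs highest-ID gossip on node IDs (Corollary~\ref{cor:forestComputation}). This yields at most $\lfloor c\rfloor$ clusters, each spanned by a tree of depth $2T(n)$.
\item The tree-merging part uses only cluster root IDs, sketches, and flooding on the cluster supergraph.
\item Proposition~\ref{lem:reorientSuperCluster} then gives a spanning tree of $G$ of diameter $O(c\log n/\Phi_c)$, rooted at the maximum root ID.
\end{enumerate}
The rumor enters only afterwards, via convergecast and broadcast. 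So the spanning tree is that tree, and the leader is its root, which broadcasts its ID. Any aggregate is one convergecast plus one broadcast (Proposition~\ref{prop:treePrimitives}). All of this fits within $O(c\log n/\Phi_c)$ rounds, and this is the paper's route. It also supplies exactly the rumor-independence your coupling needed, so with it your first attempt goes through. At that point, however, the first-receipt pointers are unnecessary.
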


\begin{corollary}
    There exist $\GOSSIP-\CONGEST$ algorithms solving MST construction, leader election, and aggregate function computation, with high probability, in $\tilde{O}(D + \sqrt n )$ rounds.
\end{corollary}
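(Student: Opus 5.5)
The plan is to reduce all three problems in the corollary to building a shallow rooted spanning tree with the main algorithm, and then to run standard pipelined tree computations in the $\GOSSIP-\CONGEST$ model.

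\textbf{Step 1: extract a BFS-like tree from the rumor spreading execution.} Run the algorithm of Theorem~\ref{thm:rumorSpreadingGeneral}, making each node record as its parent the neighbor from which it first received the rumor, together with the round in which that happened. The rumor moves at most one hop per round, so parent pointers strictly decrease the reception round. They therefore form a spanning tree $T$ rooted at the source, of depth at most the running time $R=\tilde{O}(D+\sqrt n)$. Each node knows its parent, and it learns its children in one further round: every node contacts its parent, and a node may be contacted by arbitrarily many neighbors in a round.

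\textbf{Step 2: a root that nobody designated.} Leader election and aggregation need a root that was not chosen in advance. I would elect one by running many candidate rumors, each carrying a candidate's ID, with each node forwarding only the smallest ID it has seen. Because nodes forward only the minimum, at most one value travels over each edge per round and the messages stay small, so this spreads exactly as a single rumor does. Making this work requires the algorithm of Theorem~\ref{thm:rumorSpreadingGeneral} to be monotone or aggregation-friendly in this sense. The paper already notes that min/max can be spread "with no congestion", and I take that as given. The minimum ID wins, and its spreading pattern defines $T$ exactly as in Step 1.

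\textbf{Step 3: aggregation over $T$.} The difficulty is that a node of high degree in $T$ cannot contact all its children. Convergecast does not need it to, because the roles reverse: in round $i$, each node whose subtree partial sum is ready contacts its parent and sends it. The parent receives from all its children in the same round, since replies to every contacting neighbor are allowed. Sum or count then finishes in depth$(T)$ rounds. Broadcasting the result back down works the same way: each child contacts its parent and pulls the value, again in depth$(T)$ rounds.

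\textbf{Step 4: MST.} Here I would simulate a CONGEST MST algorithm with round complexity $\tilde{O}(D+\sqrt n)$, such as Kutten--Peleg or Garay--Kutten--Peleg. Its two ingredients are (a) Borůvka phases grown to fragments of diameter $O(\sqrt n)$, and (b) pipelined upcast/downcast of inter-fragment MWOE information over a BFS tree. Part (b) maps onto $T$ using the child-contacts-parent trick with pipelining: children pull downward messages and push upward ones, so congestion on a tree edge is handled by its one child endpoint, one message per round, and the $O(\sqrt n)$ items pipeline in $O(\sqrt n+\text{depth}(T))$ rounds.

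Part (a) is the \textbf{main obstacle}. Inside a fragment, a node must examine its incident edges to find the lightest outgoing one, and in gossip that costs time proportional to its degree. I would handle this with graph sketches in the spirit of Ahn--Guha--McGregor, whose use the paper advertises. Each node computes locally, in KT1, an $\ell_0$-sampling sketch of its incident edge vector, restricted by weight thresholds. Sketches are linear, so summing them over a fragment cancels internal edges. Summation is an aggregation and goes through the fragment tree in $O(\sqrt n)$ rounds with polylog-size messages. A binary search over weight thresholds, costing $O(\log n)$ repetitions, recovers the minimum outgoing edge. With shared randomness for the sketches spread from the leader, each Borůvka phase then costs $\tilde{O}(\sqrt n)$ rounds plus the $\tilde{O}(D+\sqrt n)$ global step, over $O(\log n)$ phases. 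That gives $\tilde{O}(D+\sqrt n)$ in total.
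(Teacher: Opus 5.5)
Your Steps 1--3 match the paper's argument. The first-reception tree of the algorithm of Theorem~\ref{thm:rumorSpreadingGeneral} serves as a backbone $T$ of depth $\tilde{O}(D+\sqrt n)$, its root is the leader, and aggregation is convergecast/broadcast with children contacting parents. Your Step 2 usefully makes explicit the choice of source, which the paper leaves implicit.

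The gap is in the second, global stage of your MST construction, Step 4(b). You place the degree bottleneck only inside fragments and say the pipelined upcast of Kutten--Peleg/Garay--Kutten--Peleg "maps onto $T$". The communication pattern on $T$ does map, but the inputs to that pipeline are not available in $\GOSSIP$. The pipeline starts from each node's incident inter-fragment edges, labelled with the fragment IDs of both endpoints; these labels are needed both to decide what to upcast and for the cycle-elimination filter. A Boruvka-style variant over the fragment graph instead needs each node's lightest edge leaving its current merged fragment. Either way, every node must learn the fragment ID of every neighbor. In $\GOSSIP$ each node initiates at most one exchange per round, so at most $n$ edges carry messages per round. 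Touching all $m$ edges therefore takes $\Omega(m/n)$ rounds, which is $\Theta(n)$ on dense graphs. You also cannot fall back on the fragment MWOEs from Step 4(a). Once base fragments have been merged, the minimum outgoing edge of the union is in general not among them, since those edges may be internal to the union.

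The missing idea, which is what the paper does following Gmyr--Pandurangan, is to use sketches in the second stage as well. After the first stage leaves $O(\sqrt n)$ fragments of diameter $O(\sqrt n)$, merge fragments only virtually: unify their IDs but keep their individual trees. In each phase, do the following.
\begin{enumerate}
\item Each base fragment sums its weight-thresholded sketches over its own tree in $\tilde{O}(\sqrt n)$ rounds.
\item Its root injects the sum, tagged with its virtual fragment ID, into a pipelined convergecast over $T$, where sketches with equal tags are added.
\item By linearity this sum cancels every edge between base fragments of the same virtual fragment. The root of $T$ then recovers each virtual fragment's minimum outgoing edge, using your threshold binary search.
\item The root pipelines the $O(\sqrt n)$ answers back down $T$.
\end{enumerate}
With $O(\sqrt n)$ tags, each such step costs $\tilde{O}(D+\sqrt n)$ rounds, and $O(\log n)$ phases give the claimed bound.
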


\subsection{Technical Overview}
\label{subsec:intro-tech-challenges}

\paragraph{Rumor Spreading on Weak Conductance Graphs (see Section~\ref{sec:alg})} As noted earlier, one of the difficult challenges involved with spreading a message in a graph is dealing with communication bottlenecks. Consider again the example of a dumbbell graph. While it may be simple and fast to spread messages within each of the individual cliques, the difficulty arises in finding the very few edges that belong to the cut set between the cliques. This is where the powerful technique of graph sketching~\cite{AhnGM12a} turns out to be useful. By rapidly aggregating certain information within a clique, we can with high probability find one of the edges leading out of it.

More generally, instead of a clique, suppose we have a high conductance component, of conductance $\Phi_c$ for some known $c$. In fact, let us decompose the graph into disjoint sets of high conductance components with edges between them. In such a graph, we can rapidly spread information within each such high conductance component, then use sketching to find inter-component edges and continue to spread the rumor. One might assume then that if there are $c$ such components, it would only take $O(c T)$ time to spread the rumor over the whole graph, where $T$ is the total time to spread the rumor within a given high conductance component and subsequently perform graph sketching. However, an important subtlety of graph sketching stops us here.

That subtlety is that graph sketching requires centralized aggregation of information at a node. 
Thus, if we want to find the outgoing edge out of a \textit{super cluster} of the graph, consisting of two or more components/clusters with inter-cluster edges, we must aggregate the information from all nodes within these components, which may take time proportional to the diameter of that super cluster. Imagine the supergraph with these components as nodes and inter-component edges as edges. If such a supergraph forms a line, then it is easy to see that a naive implementation of rumor spreading may result in the super cluster containing the rumor only growing by one cluster with every instance of graph sketching. Thus, the algorithm may take $O((1 + 2 + 3 + \ldots + c-1) T) = O(c^2 T)$ time to spread the rumor to all nodes. Thus, we must be a bit more careful with the growth of the super cluster. 

To overcome the above issue, we control the diameter of the super clusters that perform graph sketching such that in phase $i$, only super clusters with diameter at most $2^i$ in the supergraph perform graph sketching. We complement this by showing that the size of the super clusters also grows exponentially, so that after $\log c$ phases, all clusters in the supergraph belong to one super cluster and thus the rumor has been spread to all nodes. The time then to perform this would be $O((1 + 2 + 2^2 + 2^3 + \ldots + 2^{\log c}) T) = O(c T)$.

\paragraph{Rumor Spreading on General Graphs
(see Section~\ref{sec:diameterAlg})} A natural approach for solving rumor spreading in $\GOSSIP-\CONGEST$ within general graphs (see, e.g., \cite{GK18}) is to sparsify the communication graph $G$. However, such a sparsification must run fast even with the lightweight communication of $\GOSSIP-\CONGEST$ --- that is, with communication over only $O(n)$ edges per round. 

A well-known sparsification technique consists of separating the nodes of $G$ into two sets according to some (parameter) degree threshold $\Deltath$, say $H$ for the high degree node subset and $L$ for the low degree node subset. (This can be done given only the nodes' initial knowledge of their degree.) Then, by definition the set of edges incident to $L$ is already somewhat sparse, since it contains $O(n \Deltath)$ edges. As for the remainder of $G$, a fast sparsification method is the following. First, sample uniformly at random some $1/\Deltath$ fraction of the nodes in $V$, which we call \emph{stars}, and have each high degree node join the cluster of any neighboring star node, if one exists. (Finding such a neighbor requires only sampling $O(\Deltath \log n)$ neighbors, hence is relatively fast even in $\GOSSIP-\CONGEST$.) These star clusters are then merged together using any classical tree-merging algorithm (e.g., Boruvka's algorithm), within which outgoing edges are found efficiently via graph sketches. In other words, we compute a maximal forest over the high degree parts of $G$, with one nice property: the sum of the diameter of the forest's trees is bounded by $O(n \log n/\Deltath)$ w.h.p.

The resulting sparse graph also, crucially, satisfies some \emph{stretch properties}: (1) large additive stretch of $O(n \log n/\Deltath )$ on shortest paths within the high degree parts of $G$, and (2) no stretch on the shortest paths within the low degree parts of $G$. However, the rumor does not spread efficiently throughout all parts of that sparse graph. Indeed, although the rumor spreads by one hop along the computed maximal forest, the rumor spreads more slowly along the edges incident to $L$ (with a multiplicative factor blow-up in in the degree $\Deltath$). As a result, the rumor spreading takes $\tilde{O}(\Deltath D + n / \Deltath)$ rounds. By optimizing $\Deltath$, one can trade off the slow down from a larger additive stretch with the slow down from the degree of nodes in $L$. Such an optimization leads, essentially, to the $\tilde{O}(\sqrt{nD})$ runtime obtained by \cite{GK18}.

In this work, we push this natural sparsification strategy one step further. More concretely, we sparsify and orient the subgraph in such a way that the maximum out-degree over the low degree parts is significantly reduced --- in fact, down to a maximum out-degree of $O(\log n)$. To do so, we accept one minor trade-off: a $O(\log n)$ multiplicative stretch on the (undirected) shortest paths over the low degree parts. Through an efficient simulation of the $\CONGEST$ model (which takes advantage of the low out-degree on the oriented subgraph \emph{and} the low stretch on its unoriented version), this additional sparsification step takes only $\tilde{O}(\Deltath)$ rounds, after which the additional properties satisfied by our sparsification technique directly ensure that rumor spreading takes only $\tilde{O}(D + n / \Deltath + \Deltath)$ rounds. Optimizing $\Deltath$ leads to our $\tilde{O}(D + \sqrt n)$ runtime.

\subsection{Related Work and Comparison}
\label{subsec:intro-related-work}

Gossip and rumor spreading protocols have been  studied
extensively for many decades starting with the work of Frieze and Grimmett~\cite{FG85}, and then Pittel~\cite{P87}. We restrict ourselves to those
that are most relevant to this work and refer to prior works~\cite{CHKM12,CS11,CS12,H13,H15} and the references therein for a good survey of related work.

Rumor spreading has been studied with respect to  spreading a rumor from a single node as well as spreading a rumor from each node in the network (or \textit{all-to-all} rumor spreading).
Much of the original research on rumor spreading considered
spreading a  \textit{single rumor} according  to the uniform gossip process on a \textit{complete} graph where it was shown that it took $O(\log n)$ rounds~\cite{FG85,P87}.  
The work of Karp,  Schindelhauer, Shenker, and Vocking~\cite{KSSV00} studied uniform and \textit{non-uniform} (push-pull) gossip algorithms for the complete graph and showed upper and lower bounds on the total message complexity with respect to the round complexity. 
Chierichetti, Lattanzi, and  Panconesi~\cite{CLP10-soda}  showed that \textit{uniform} gossip gives fast rumor spreading in expander graphs (which are graphs with constant conductance) and more generally in $O(\frac{\log^4 n}{\Phi^6})$ rounds in graphs of conductance $\Phi$. This was improved
to $\frac{\log^2 (\frac{1}{\Phi}) \log n}{\Phi}$ by the same set of authors~\cite{CLP10-stoc} and finally to  $O(\log n/\Phi)$ rounds by Giakkoupis \cite{Giakkoupis2011} (see also \cite{CGLP18}). This bound in terms of conductance is in general tight \cite{CGLP18,Giakkoupis2011}. 

It is important to note that the above uniform gossip algorithm uses small messages to spread a (single) rumor: in particular, only the rumor is exchanged over any edge. No IDs, etc., are sent as part of messages. Hence, according to our model (Section~\ref{subsec:intro-model}),
it works in the $\GOSSIP-\CONGEST$ model.

Since the performance of uniform gossip is tied to the conductance of the graph, its run time is slow when the conductance is small. Graph with bottlenecks (e.g., dumbbell) have small conductance, despite having a small diameter.  To address this, as mentioned earlier, Censor-Hillel and Shachnai~\cite{CS11,CS12} defined the notion of \textit{weak conductance} $\Phi_c$ and gave a gossip-based rumor spreading algorithm whose run time is determined by 
$\Phi_c$ (for all $c\geq 1$). For many graphs such as the dumbell, $\Phi_c$ is significantly smaller than the conductance $\Phi$. This algorithm is more sophisticated than uniform gossip and cleverly avoids sending too many messages to neighbors who have already received the rumor. But to accomplish this saving, the algorithm uses large-sized messages, up to linear in $n$, since a message can contain all the IDs of the nodes (informally, the IDs help a node to identify its neighbors who have already gotten the message without the need to communicate).  This is the situation when the goal is to  spread even a single rumor.
We improve on this work, both in terms of running time, and more importantly, by using small-sized messages.

Subsequent works~\cite{CHKM12,CHKM12-journal, H13,H15} presented gossip algorithms that ran in time that is independent of the conductance of the graph ---  in $O(D+\polylog{n})$ rounds (with high probability)~\cite{CHKM12,CHKM12-journal} and in $O(D\log n + \log^2 n)$~\cite{H13,H15}, where $D$ is the graph diameter. This is near optimal since $\Omega(D)$ is a trivial lower bound even for non-gossip algorithms (where a node can send a message to all of its neighbors in a round).

  Ghaffari and Kuhn~\cite{GK18} presented a gossip algorithm running in $\tilde{O}(\sqrt{nD})$ rounds to spread a rumor in the $\GOSSIP-\CONGEST$ model.  We improve over this work
 by presenting a gossip algorithm running in $\tilde{O}(D+\sqrt{n})$
 rounds in the $\GOSSIP-\CONGEST$ model.

\section{Preliminaries}
\label{sec:prelims}

We start by presenting the following standard Chernoff bounds, which we use in Section \ref{sec:diameterAlg}.

\begin{lemma}[Chernoff Bounds \cite{Upfalbook}]
\label{lem:ChernoffBound}
Let $X_1,\ldots,X_k$ be independent $\{0,1\}$ random variables. Let $X$ denote the sum of the random variables, $\mu$ the sum's expected value and $\mu_L, \mu_H$ be any value respectively smaller and greater than $\mu$. Then,
\begin{enumerate}
    \item $\Pr[X \leq (1-\delta) \mu_L] \leq \exp(- \delta^2 \mu_L / 2)$  for $0 \leq \delta \leq 1$,
    \item $\Pr[X \geq (1+\delta) \mu_H] \leq \exp(- \delta^2 \mu_H / 3)$ for $0 \leq \delta \leq 1$,
    \item $\Pr[X \geq (1+\delta) \mu_H] \leq \exp(- \delta^2 \mu_H / (2+\delta))$ for $\delta \geq 1$.
\end{enumerate}

\end{lemma}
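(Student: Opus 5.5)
The plan is the standard moment-generating-function (Bernstein/Chernoff) argument, followed by a monotonicity step in $\mu$ that handles the substitutes $\mu_L \leq \mu \leq \mu_H$. For the upper tails, fix $t>0$ and apply Markov's inequality to $e^{tX}$:
\[
\Pr[X \geq a] \leq e^{-ta}\, \mathbb{E}[e^{tX}] = e^{-ta}\prod_i \mathbb{E}[e^{tX_i}],
\]
using independence. Write $p_i = \Pr[X_i=1]$. Then $\mathbb{E}[e^{tX_i}] = 1 + p_i(e^t-1) \leq \exp(p_i(e^t-1))$, so $\mathbb{E}[e^{tX}] \leq \exp(\mu(e^t-1))$.

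Since $e^t - 1 > 0$, this bound is increasing in $\mu$. We may therefore replace $\mu$ by $\mu_H \geq \mu$ and obtain $\mathbb{E}[e^{tX}] \leq \exp(\mu_H(e^t-1))$. Setting $a = (1+\delta)\mu_H$ and choosing $t = \ln(1+\delta)$ gives the classical bound
\[
\Pr[X \geq (1+\delta)\mu_H] \leq \left(\frac{e^{\delta}}{(1+\delta)^{1+\delta}}\right)^{\mu_H}.
\]
It then remains to prove two elementary inequalities:
\begin{itemize}
\item for part 2, $\delta - (1+\delta)\ln(1+\delta) \leq -\delta^2/3$ on $[0,1]$;
\item for part 3, $\delta - (1+\delta)\ln(1+\delta) \leq -\delta^2/(2+\delta)$ for all $\delta \geq 0$.
\end{itemize}
For part 3 I would use $\ln(1+\delta) \geq 2\delta/(2+\delta)$. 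This holds because both sides vanish at $0$ and the derivatives compare as $1/(1+\delta) \geq 4/(2+\delta)^2$, which is equivalent to $(2+\delta)^2 \geq 4(1+\delta)$. Plugging in gives $\delta - 2\delta(1+\delta)/(2+\delta) = -\delta^2/(2+\delta)$. Part 2 then follows from part 3, since $2+\delta \leq 3$ for $\delta \leq 1$.

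For the lower tail (part 1), apply Markov's inequality to $e^{-tX}$ with $t>0$. The same computation gives $\mathbb{E}[e^{-tX}] \leq \exp(\mu(e^{-t}-1))$. Because $e^{-t}-1<0$, this bound is decreasing in $\mu$, so replacing $\mu$ by $\mu_L \leq \mu$ only weakens it. With $t = -\ln(1-\delta)$ we obtain
\[
\Pr[X \leq (1-\delta)\mu_L] \leq \left(\frac{e^{-\delta}}{(1-\delta)^{1-\delta}}\right)^{\mu_L}.
\]
The case $\delta = 1$ is handled as a limit, or directly: $\Pr[X \leq 0] \leq \prod_i (1-p_i) \leq e^{-\mu} \leq e^{-\mu_L/2}$. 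It then suffices to show $-\delta - (1-\delta)\ln(1-\delta) \leq -\delta^2/2$. This follows from the expansion $(1-\delta)\ln(1-\delta) = -\delta + \sum_{k\geq 2} \delta^k/(k(k-1))$, whose sum is at least $\delta^2/2$. If $\mu_L \leq 0$ the claimed bound is trivial.

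The only genuine subtlety is the monotonicity step: the bounds must be stated in terms of $\mu_H$ and $\mu_L$ before the threshold $a$ is chosen, so that the threshold is expressed in the same parameter. The two calculus inequalities are routine. Since the lemma is cited from \cite{Upfalbook}, in the paper it suffices to point to that derivation together with this monotonicity remark.
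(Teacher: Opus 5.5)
Your proof is correct and is essentially the standard moment-generating-function (Chernoff) derivation from the cited textbook; the paper itself gives no proof beyond the citation to Mitzenmacher--Upfal. You handle the $\mu_L,\mu_H$ substitution correctly by bounding the moment generating function in terms of $\mu_H$ (resp.\ $\mu_L$) before fixing the threshold, and obtaining part 2 from part 3 via $\ln(1+\delta)\geq 2\delta/(2+\delta)$ is a valid shortcut.
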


Next, we give several definitions related to a relaxation of conductance. The relaxed conductance notion --- that of \emph{weak conductance} --- was first introduced in \cite{CS11,CS12}, and leads to a natural decomposition into clusters with weak conductance. That decomposition is implicitly used in \cite{CS11,CS12}, but we define it explicitly, as it both helps describe our algorithms and may be of independent interest.
Next, in Section \ref{subsec:graph-sketch}, we define graph sketches. Finally, in Section \ref{subsec:communication-primitives}, we describe several well-known distributed computing primitives, including some that use graph sketches, but adapted to the $\GOSSIP-\CONGEST$ model.

\subsection{Conductance and Weak Conductance}
\label{prelims:weak-conductance}

For any $S \subseteq V$, the conductance of the cut $(S,T=V\setminus S)$ is $\phi(S,T) = |cut(S,T)|/\min\{vol(S),vol(T)\}$, where $vol(S)$ is the sum of the degrees of all nodes in $S$, and $cut(S,T)$ denotes the set of all edges with one vertex in $S$ and one vertex in
    $T = V-S$. Then, the \emph{conductance} over the graph $G$ is naturally defined as:
$$ \Phi(G) = \min\limits_{S \subseteq V} \phi(S, V \setminus S) $$

The conductance $\Phi(G)$ captures how fast information can be spread from one node to all other nodes in graph $G$. If one only wants to spread the information to at least $n/c$ nodes (for some constant $c > 1)$, then a weaker definition can be used. More concretely, for any value $c \geq 1$, we can define the \emph{weak conductance} as in \cite{CS11,CS12}:
$$ \Phi_c(G) = \min\limits_{v \in V} \max\limits_{\substack{S_v \subseteq V, \\ S_v \ni v, \\ |S_v| \geq n/c}} \min\limits_{U \subseteq S_v} \phi(U, S_v \setminus U)$$

\begin{figure}
\includegraphics[page=2,width=0.8\textwidth]{picscropped.pdf}
\caption{A $c$-barbell of $n$ nodes. There are $c$ cliques, each containing $n/c$ nodes, connected in a path.}\label{fig:c-barbell} 
\Description[A $c$-barbell of $n$ nodes.]{There are $c$ cliques, each containing $n/c$ nodes, connected in a path.}
\end{figure}

Whenever $G$ is clear from the context, we simply say $\Phi$ and $\Phi_c$.
An illustrative example of the difference between conductance and weak conductance is to consider a $c$-barbell graph (see~\cite{CS11,CS12} and Figure~\ref{fig:c-barbell}) where $c$ cliques $C_1, C_2, \ldots, C_c$, each of size $n/c$, connected in a line 
such that between cliques $C_i$ and $C_{i+1}$, $1 \leq i \leq c-1$, there is one edge between a single node in $C_i$ and a single node in $C_{i+1}$. 
In this graph, $\Phi_c = \Theta(1)$ while the overall conductance is $\Phi = \Theta(c/n^2)$.

The above definition implies that any graph $G$ with weak conductance $\Phi_c$, for some value of the parameter $c \geq 1$, satisfies the following (structural) property. For each node $v$, there exists some maximal connected component that contains $v$, of size at least $n/c$, and with conductance at least $\Phi_c$. (Here, maximal is meant in the sense the component is not contained within any other such connected component.) We call the union of all such maximal components of $v$, the \emph{maximal spreading component} of $v$, and emphasize that via the result of Giakkoupis \cite{Giakkoupis2011}, broadcast (and information spreading for $\LOCAL$) can be done in $T= O(\log n / \Phi_c)$ rounds within that component.

For the algorithm in Section~\ref{sec:alg}, we assume that every node knows the values of both $c$ and $\Phi_c$. We note that the previous best algorithm for this setting~\cite{CS11,CS12} needs this assumption in order to achieve termination. 

\paragraph*{(Weak) Sunflowers and (Weak) Sunflower Decomposition.}
We provide a decomposition of the graph, with a terminology inspired from the sunflower set systems used in the famous sunflower lemma proved by Erdös and Rado in \cite{ER60}.

For any node $v$, we can consider all maximal spreading components (from any node $u \neq v$) that intersect with the maximal spreading component of $v$. We call the union of all such maximal spreading components the \emph{(weak) sunflower} of $v$. Moreover, we refer to the maximal spreading component of $v$ as its \emph{kernel}, and to the other intersecting components as its \emph{petals}. (Note that kernels and petals here differ from those in sunflower set systems, since different petals may intersect different subsets of the kernel.)
Weak sunflowers can be used to decompose the communication graph $G$, with major implications for the design of rumor-spreading algorithms in the $\GOSSIP-\CONGEST$ model. More precisely, we show that you can decompose the graph $G$ into at most $j \leq c$ sunflowers, associated respectively with nodes $v_1, v_2, ..., v_j$, such that the maximal spreading components of $v_1, v_2, ..., v_j$ are pairwise disjoint and the union of the sunflowers covers the whole graph. 

\begin{lemma}
\label{lem:sunflowerDecomposition}
    Any graph $G$ with weak conductance $\Phi_c$, for some value of the parameter $c \geq 1$, can be decomposed into at most $\lfloor c \rfloor$ pairwise kernel-disjoint sunflowers.
\end{lemma}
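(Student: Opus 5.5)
We build the decomposition greedily and then count kernels by size. For each node $v$, let $K(v)$ be the maximal spreading component (kernel) of $v$. By the structural property implied by the definition of $\Phi_c$, $K(v)$ contains $v$ and has at least $n/c$ nodes, since it is a union of components each of size at least $n/c$.

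Choose $v_1$ arbitrarily. Having chosen $v_1,\dots,v_i$ with pairwise disjoint kernels, let $\mathcal{S}_i$ be the union of the sunflowers of $v_1,\dots,v_i$. If some node $u \notin \mathcal{S}_i$ remains, we claim $K(u)$ is disjoint from every $K(v_\ell)$, $\ell \le i$. Indeed, if $K(u)$ intersected $K(v_\ell)$, then by definition $K(u)$ would be a petal of the sunflower of $v_\ell$. (Strictly, petals are the maximal spreading components of other nodes that intersect the kernel, and $K(u)$ is exactly such a component.) Then $u \in K(u) \subseteq$ sunflower of $v_\ell$, contradicting $u\notin\mathcal{S}_i$. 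So we may set $v_{i+1}=u$ and keep the kernels pairwise disjoint. When the process stops, every node lies in some sunflower, so the sunflowers cover $G$.

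For the count, suppose $j$ nodes are chosen. Their kernels are pairwise disjoint and each has at least $n/c$ nodes, so $j \cdot n/c \le n$, i.e.\ $j \le c$. Since $j$ is an integer, $j \le \lfloor c\rfloor$, which also shows the process terminates.

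The one delicate step is pinning down the definitions. We must check that the kernel $K(v)$ (a union of maximal components containing $v$) really has size at least $n/c$; this is immediate because it contains at least one such component, which exists since the $\max$ in the definition of $\Phi_c$ is taken over a nonempty family for every $v$. We must also check that "petal" includes the full maximal spreading component $K(u)$ of any $u$ whose component meets the kernel. Once we fix the reading that a petal is the whole maximal spreading component $K(u)$ of such a node $u$, the rest is the disjointness counting argument above.
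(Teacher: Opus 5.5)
Your proposal is correct and follows the paper's proof: greedily pick a node not yet covered by any sunflower, note that its kernel cannot meet an earlier kernel (otherwise it would be a petal of an earlier sunflower and the node would already be covered), and then count pairwise-disjoint kernels of size at least $n/c$ to get at most $\lfloor c\rfloor$ sunflowers. Your explicit justification that every kernel exists and has at least $n/c$ nodes, via the definition of $\Phi_c$, is a small but welcome addition that the paper leaves implicit.
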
 

\begin{proof}
   The decomposition can be obtained constructively as follows. Let $\mathcal{S}$ be a collection of sunflowers, initially empty. Then, take any arbitrary node $v_1$, compute its sunflower $S_1$ and add $S_1$ to $\mathcal{S}$. After which, as long as there remains some node outside $\mathcal{S}$, find such a node $v_i$ (i.e., some node $v_i \in V$ such that $v_i \not \in S, \forall S \in \mathcal{S}$), compute its sunflower $S_i$ in graph $G$ and add $S_i$ to $\mathcal{S}$. Note that this procedure computes pairwise kernel-disjoint sunflowers, otherwise the maximal spreading component of node $v_i$ would be in a prior sunflower. As such, for each sunflower in $\mathcal{S}$, each kernel must contain at least $\lceil n/c \rceil \geq n/c$ unique nodes, and thus $|\mathcal{S}| \leq c$. Since $|\mathcal{S}|$ is an integer, in fact it holds that $|\mathcal{S}| \leq \lfloor c \rfloor$. In other words, the procedure computes at most $\lfloor c \rfloor$ pairwise kernel-disjoint sunflowers that cover graph $G$.
\end{proof}

We emphasize that the decomposition is in general not unique. In particular, in the proof there may be many different choices for the node sequence $v_1, v_2, ..., v_j$, where $j \leq c$. Moreover, we also point out that although the sunflowers are (pairwise) kernel-disjoint, their petals (and thus the sunflowers themselves) may not be disjoint.

\subsection{Graph Sketches}\label{subsec:graph-sketch}

Consider an arbitrary set of nodes $V$, such that all nodes of $V$ have unique IDs in $[1,N]$, where $N$ is known to all nodes. 
We define, for any graph $G = (V,E)$ and node $v \in V$, the \emph{incidence vector} $\incVect(v) \in \mathbb{R}^{\binom{N}{2}}$ whose entries correspond to all possible choices of two IDs in $N$.  An entry in $\incVect(v)$  corresponding to the possible edge between $u$ with $id_u\in N$ and $v$ is $0$ if $(u,v) \notin E$, $1$ if $(u,v) \in E$ and $id_u > id_v$, and $-1$ otherwise (i.e., $(u,v) \in E$ and $id_u < id_v$). Entries in $\incVect(v)$ that correspond to a possible edge not including $v$ (e.g., $(u,w)$) have value $0$. 
Naturally, one can extend this definition to any node subset $S \subseteq V$: more precisely, $\incVect(S) = \sum_{v \in S} \incVect(v)$. Note that by linearity, the non-zero indices of $\incVect(S)$ indicate exactly which edges are in the cut of $S$ with respect to $G$, that is, in $E_G(S, V \setminus S)$. 

One may use the incidence vector of some node set $S$ to sample an outgoing edge from $S$, if one exists, uniformly at random from all such outgoing edges, i.e., sample a non-zero entry in $\incVect(S)$ uniformly at random. However, in a distributed setting, to compute the incidence vector on a set of nodes $S$, one would need to aggregate the incidence vectors of the nodes belonging to $S$.   
This is problematic since incidence vectors are exponentially larger (recall that they have size $\binom{N}{2}$) than our $O(\polylog N)$ message size. 

Fortunately, we can use a \emph{linear sketch} \cite{AhnGM12a} --- a well-chosen linear function from $\mathbb{R}^{\binom{N}{2}}$ to $\mathbb{R}^k$ --- or in other words, a \emph{graph sketching matrix}  --- a well-chosen $k \times \binom{N}{2}$ size matrix $M_G$ --- to compress these vectors $\left( \incVect(v) \right)_{v \in V}$ into smaller (sketch) vectors $\left( \sketchVect(v) \right)_{v \in V}$ of size $k = O(\polylog{\binom{N}{2}})$; more concretely, $M_G \cdot \incVect(v) = \sketchVect(v)$. 
Moreover, although this compression necessarily loses some information, it has two major advantages. First, it is possible to sample (almost) uniformly at random one of the non-zero indices of $\incVect(v) \in \mathbb{R}^{\binom{N}{2}}$ by performing operations on $\sketchVect(v)$ only (albeit with some small failure probability). More concretely, for any graph sketching matrix $M_G$ and for any subset $S \subseteq V$, there exists a sampling function $f_G$ that takes the sketch vector $\sketchVect(S)$ as input and outputs an edge chosen uniformly at random in $E_G(S,V \setminus S)$. 
Second, the linearity of the graph sketching matrix allows us to compute $\sketchVect(S)$ without computing $\incVect(S)$, but instead by 
computing the sketch vectors $\left( \sketchVect(v) \right)_{v \in S}$ and summing them. 

In summary, for any subset $S \subseteq V$, we can sample an edge chosen uniformly at random in $E_G(S,V \setminus S)$ by (i) having nodes agree on $\Theta(\log^2 n)$ true random bits that they can use to locally compute a common graph sketching matrix $M_G$ with polynomially bounded integer entries~\cite{JST11, PRS18}, (ii) aggregating the sketch vectors of all nodes in $S$, and (iii) applying the sampling function $f_G$ on the aggregate vector $\sketchVect(S)$. Note that these steps require messages of small $O(\polylog n)$ size only. 
A more formal statement is given below, which can be obtained straightforwardly by adapting known results~\cite{AhnGM12a,JST11,PRS18}.

\begin{lemma}\label{lem:graph-sketch-works}
    For any upper bound $N$ on the ID range and constant $0 < \delta < 1$, there exist a graph sketching matrix $M_G$ (with entries polynomially bounded in $N$) and a sampling function $f_G$ such that for any node subset $S \subset V$, the aggregate sketch vector $\sketchVect(S) = \sum_{u \in S} \sketchVect(u)$ can be represented using $O(\polylog n)$ bits, and $f_G(\sketchVect(S))$ samples a (uniformly) random edge in $E_G(S,V \setminus S)$ with probability $1 - \delta$.
\end{lemma}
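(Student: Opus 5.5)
The plan is to instantiate the standard $\ell_0$-sampling construction of Ahn, Guha and McGregor~\cite{AhnGM12a}, with the randomness derandomized as in~\cite{JST11,PRS18}, and then check three things: linearity, polylogarithmic size with integer entries, and the sampling guarantee.

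\textbf{Step 1 (construction).} Fix $\ell=\lceil\log_2\binom{N}{2}\rceil+1$ sampling levels. For each level $j$, choose a hash function $h_j$, from a family with limited independence, that keeps each coordinate $e$ with probability $2^{-j}$. For each level, compute a $1$-sparse recovery sketch of the restricted vector. Concretely, with $x=\incVect(S)$, the three numbers
\[
a_j=\sum_{e\in L_j}x_e,\qquad b_j=\sum_{e\in L_j}e\,x_e,\qquad c_j=\sum_{e\in L_j}x_e\,z^{e}\bmod p,
\]
where $L_j$ is the set of surviving coordinates, $p$ is a prime polynomial in $N$, and $z\in\mathbb{Z}_p$ is random. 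Each of these is a linear function of $\incVect(v)$, so stacking them over all levels gives the matrix $M_G$. Its entries are integers bounded by $\mathrm{poly}(N)$. This yields $O(\log N)$ rows per repetition. With $O(\log(1/\delta))$ independent repetitions, the sketch uses $O(\log N\cdot\log(1/\delta))$ numbers of $O(\log N)$ bits, i.e.\ $O(\polylog n)$ bits since $N=\mathrm{poly}(n)$. Linearity gives $\sketchVect(S)=M_G\incVect(S)=\sum_{u\in S}\sketchVect(u)$, which is exactly what the lemma requires.

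\textbf{Step 2 (correctness of $f_G$).} The vector $\incVect(S)$ has entries in $\{-1,0,1\}$, and its support is exactly $E_G(S,V\setminus S)$, because internal edges cancel. The function $f_G$ scans the levels. At a level where exactly one coordinate $e$ survives, $a_j=\pm1$, $e=b_j/a_j$, and the fingerprint test $c_j=a_jz^{e}$ passes. Conversely, when two or more coordinates survive, the fingerprint test passes only with probability $O(\mathrm{poly}(N)/p)$ by Schwartz--Zippel, which is negligible. If the support has size $s\geq 1$, then at level $j\approx\log_2 s$ exactly one element survives with constant probability. So a single repetition succeeds with constant probability, and $O(\log(1/\delta))$ repetitions push the failure probability below $\delta$.

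\textbf{Step 3 (main obstacle: uniformity with bounded randomness).} The hardest part is showing that the returned edge is (near-)uniform while the hash functions use only $O(\log^2 n)$ random bits. I would use the argument of~\cite{JST11}: with $O(\log(1/\varepsilon))$-wise independent hashing (or Nisan's PRG), each support element is the unique survivor with probability within a factor $1\pm\varepsilon$ of the others, with $\varepsilon=1/\mathrm{poly}(n)$. Polynomially small non-uniformity can then be folded into the failure probability $\delta$. The easiest route is to cite the $\ell_0$-sampler theorem of~\cite{JST11} directly and verify only that its matrix entries are integers and polynomially bounded, which is the adaptation noted in~\cite{PRS18}. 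For the empty-cut case, $f_G$ outputs $\perp$ whenever all $a_j$ vanish, which correctly signals that no outgoing edge exists.
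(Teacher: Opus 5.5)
Your proposal is correct and takes essentially the same route as the paper, which gives no self-contained proof and only says the lemma follows by adapting the $\ell_0$-sampling sketches of \cite{AhnGM12a}, with the derandomization and integer-entry adaptations of \cite{JST11,PRS18}. Your Steps 1--2 spell out the standard level-sampling plus $1$-sparse-recovery construction behind that citation. Your Step 3 correctly notes that the uniformity is only up to a $1/\mathrm{poly}(n)$ error inherited from \cite{JST11}, which the paper's statement glosses over.
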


\subsection{Communication Primitives in the $\GOSSIP-\CONGEST$ Setting}
\label{subsec:communication-primitives}

The following ``basic'' distributed primitives (e.g., tree construction, broadcast and convergecast over a tree, etc.) are well-known in other models, such as $\CONGEST$. However, we describe them again here because their description needs to be adapted for the $\GOSSIP-\CONGEST$ model. 

We remind that a (distributed) tree is a set of undirected edges of $G$ that forms a tree, and such that each node in the tree knows which of its incident edges is in the tree as well as which one of these leads to its parent in the tree. (As a result, each node also knows which edge leads to its children in the tree.) Similarly, a (distributed) forest is a union of disjoint (distributed) trees.

Moreover, we assume that rumors follow some agreed upon, total ordering. For example, when rumors are nodes' IDs, the natural total order on IDs is such an ordering. This allows nodes to perform certain aggregation operations on rumors, and thus avoid being slowed down by the limited bandwidth of the $\GOSSIP-\CONGEST$ model. 

\paragraph*{Tree Construction.}

In graphs with conductance $\Phi$, rumor-spreading can be done via (a simple modification of) uniform gossip. Recall that in uniform gossip, each node contacts a random neighbor every round, and exchanges the rumor (if any of them have it). Then, the modification is simple. We allow a set of sources nodes, and thus multiple rumors, and we modify the algorithm so that each node exchanges the highest rumor it has seen until now upon every contact (initiated or received). In which case, $O(\log n /\Phi)$ rounds suffice for the highest rumor to spread (with high probability) over the whole graph, in the $\GOSSIP-\CONGEST$ model. 

\begin{lemma}[\cite{Giakkoupis2011}]
\label{lem:Giakkoupis}
    For any $n$-node graph $G=(V,E)$ with conductance $\Phi$, and any set $S \subseteq V$ of source nodes. Then, the above algorithm spreads the highest rumor to all other nodes in $O(\log n /\Phi)$ rounds, with high probability.
\end{lemma}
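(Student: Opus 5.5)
The statement is the multi-source, max-rumor version of Giakkoupis's single-source result. My plan is to reduce it directly to that single-source bound for push-pull uniform gossip.

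\textbf{Reduction step.} Let $r^*$ be the highest rumor among those initially held by nodes in $S$, and let $s^* \in S$ be a node holding it. Define the set $I_t$ of nodes that hold $r^*$ as their current maximum after round $t$. Two observations drive the reduction.
\begin{itemize}
\item No rumor exceeds $r^*$, so a node that once holds $r^*$ keeps it as its maximum forever.
\item In every contact, each endpoint adopts the maximum of the two values exchanged. So if either endpoint of a contact is in $I_t$, both endpoints are in $I_{t+1}$.
\end{itemize}
Together these show that $I_t$ evolves exactly as the informed set in standard push-pull gossip started from the single source $s^*$. The random neighbor choices are identical, and the other rumors never interfere: they only ever lose to $r^*$, and bandwidth is not an issue because each message carries only one rumor of $O(\log n)$ bits. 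Formally, I would couple the two processes on the same random choices and prove by induction on $t$ that $I_t$ equals the informed set of single-source push-pull.

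\textbf{Applying the known bound.} By Giakkoupis~\cite{Giakkoupis2011}, push-pull from a single source informs all nodes of an $n$-node graph with conductance $\Phi$ within $O(\log n/\Phi)$ rounds with probability $1-n^{-d}$. The stated algorithm is exactly push-pull in which every node calls a uniformly random neighbor each round. Pure push-pull lets only uninformed nodes pull and only informed nodes push, but that restriction does not matter here: exchanging in every contact is a superset of push-pull contacts with the same random choices. Hence the spreading of $r^*$ is at least as fast, via monotonicity of the coupling.

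\textbf{Main obstacle.} The one subtlety is making the coupling monotone and exact. In particular, we must check that a node calling a neighbor in $I_t$ while itself outside $I_t$ counts as a pull, and that a node in $I_t$ calling out counts as a push. Both hold under the two observations above. Beyond that, the result is a citation, and the high-probability constant follows from \cite{Giakkoupis2011,CGLP18} by scaling the round count by a constant.
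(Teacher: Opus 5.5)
Your proposal is correct and matches the paper's approach: the paper gives no separate proof. It cites Giakkoupis and observes that, since every contact exchanges the highest rumor seen so far, the highest rumor spreads exactly like a single push-pull rumor, which is the coupling you describe. One small precision: if several sources initially hold $r^*$, then $I_t$ is only a superset of the single-source informed set rather than equal to it, and your monotonicity-of-the-coupling remark already covers that case.
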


Moreover, another simple modification gives a spanning tree construction algorithm in the $\GOSSIP-\CONGEST$ model. Assume each source node holds a unique rumor (e.g., an ID). Then, it suffices if whenever a node receives a rumor higher than any it has seen until now, the node takes the sender as its parent (in the tree being constructed). If the highest rumor the node has seen when the rumor-spreading algorithm terminates is the rumor it initially held (if it held any), then that node becomes the root. This simple, modified version of uniform gossip constructs a (distributed) spanning tree of $G$ (as defined previously) rooted in the source node with the highest rumor.

\begin{corollary} 
\label{lem:treeConstruction}
    For any $n$-node graph $G=(V,E)$ with conductance $\Phi$, and at least one source node, the above algorithm constructs a spanning tree rooted at the source node with the highest rumor in $O(\log n /\Phi)$ rounds, with high probability.
\end{corollary}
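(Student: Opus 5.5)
The corollary reduces to Lemma~\ref{lem:Giakkoupis} (Giakkoupis' bound for the max-rumor variant of push-pull gossip), plus a structural argument that the parent pointers form a spanning tree. Let $r^*$ denote the highest rumor, held by source $s^*$; rumors are unique, so $s^*$ is well defined.

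\textbf{Step 1: termination and roots.} By Lemma~\ref{lem:Giakkoupis}, after $O(\log n/\Phi)$ rounds every node has seen $r^*$, with high probability. Each node's current-highest value is monotone non-decreasing, and no value exceeds $r^*$. So at termination every node's highest-seen rumor is exactly $r^*$.
\begin{itemize}
\item A node becomes a root only if its final highest rumor is the one it initially held. Only $s^*$ initially held $r^*$, so $s^*$ is the unique root.
\item Every other node $u$ has final value $r^*$ that it did not initially hold. Hence it received $r^*$ from a contact at some round $t_u$, and at that moment set its parent to the sender. Since no higher rumor exists, the parent never changes afterwards.
\item If several neighbors deliver $r^*$ to $u$ in the same round, $u$ breaks ties arbitrarily. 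The parent is always one of the senders.
\end{itemize}

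\textbf{Step 2: acyclicity.} Define $t_u$ as the round in which $u$ first held $r^*$, with $t_{s^*}=0$. If $u\neq s^*$ adopted parent $p$ at round $t_u$, then $p$ already held $r^*$ when it sent it, in the synchronous exchange of that round. Therefore $t_p < t_u$. For this I fix the convention that values learned in round $t$ are forwarded only from round $t+1$ on, which the round structure (iii) of the model guarantees. So $t$ strictly decreases along parent pointers. Following parents from any node therefore terminates at the unique root $s^*$, with no cycles. The parent edges thus form a tree spanning all of $V$ and rooted at $s^*$.

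\textbf{Step 3: distributed tree property.} Each node knows its parent edge. A node also learns its children: when $u$ adopts $p$ as parent, it either initiated the contact with $p$, or $p$ contacted $u$. In the first case $u$ can tell $p$ in that exchange. In the second case $u$ can notify $p$ in one extra round or in a final pass. Alternatively, a child marks the edge in its next message to $p$. Either way this adds at most $O(1)$ or $O(\log n/\Phi)$ extra rounds with $O(\log n)$-bit messages.

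The main subtlety is this bookkeeping, together with the strict-time-ordering argument in Step 2. The w.h.p.\ running-time bound itself is inherited directly from Lemma~\ref{lem:Giakkoupis}.
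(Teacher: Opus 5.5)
Your proposal is correct and follows essentially the same route as the paper. You invoke Lemma~\ref{lem:Giakkoupis} so that every node holds the highest rumor within $O(\log n/\Phi)$ rounds w.h.p., and you get acyclicity because each node's parent received that same highest rumor strictly earlier. Your explicit arguments for the uniqueness of the root, for tie-breaking, and for letting parents learn their children are sound details that the paper leaves implicit.
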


\begin{proof}
    By Lemma \ref{lem:Giakkoupis}, each node $v \in V, v \neq s$ receives the highest rumor (and thus some rumor strictly higher than its own, if it initially held any), in $O(\log n /\Phi)$ rounds, with high probability. (For the remainder of the proof, we condition on this event happening.) Hence, by the algorithm's description, each node except the root orients one of its incident edges (toward its parent) upon receiving that highest rumor. Since each node orients the corresponding edge towards another node having received the (same, highest) rumor strictly earlier than itself, these oriented edges cannot form a cycle. The statement (and error probability) follows.
\end{proof}

If we consider graphs in which only smaller regions have good conductance, such as weak conductance graphs, then the above modified (highest rumor) uniform gossip algorithm allows fast rumor spreading to happen within sunflowers --- this essentially follows by Lemma \ref{lem:Giakkoupis}.

\begin{corollary}
\label{cor:sunflowerSpreading}
    Consider some constant $c \geq 1$, $n$-node graph $G=(V,E)$ with weak conductance $\Phi_c$, and source node $s \in V$. 
    Then, after $O(\log n /\Phi_c)$ rounds, it holds with high probability that any node $v$ in the sunflower of $s$ receives some rumor greater or equal to the highest rumor initially held by a node in that sunflower.
\end{corollary}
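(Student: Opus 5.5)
The plan is to reduce the statement to Lemma~\ref{lem:Giakkoupis}, applied to each high-conductance component that makes up the sunflower of $s$, and then chain the components together through the kernel.

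Fix the sunflower $F$ of $s$. Write $K$ for its kernel, the maximal spreading component of $s$, and $P_1,\dots,P_k$ for its petals. Each petal is a maximal spreading component of some node and intersects $K$. By definition, each of these sets is a union of connected components that have at least $n/c$ nodes and conductance at least $\Phi_c$, where conductance is measured on the subgraph induced by the component. Let $r^\ast$ be the highest rumor initially held in $F$, held by some node $u^\ast$. I would first argue that each of the relevant components behaves like a graph of conductance at least $\Phi_c$ for uniform gossip. The delicate point is that in $G$ a node of the component may pick neighbors outside it. I would handle this by restricting attention to the component $C$ and viewing the process inside it as a slowed-down gossip. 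A node $v\in C$ picks an incident edge of $C$ with probability $\deg_C(v)/\deg_G(v)$. This is exactly the issue that the proof of Giakkoupis's bound handles via volumes.

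Concretely, I would invoke the weak-conductance definition in the form used in~\cite{CS11,CS12}. There $\phi$ is computed with the volumes taken in $G$, so the conductance guarantee for $C$ already accounts for edges leaving $C$. Under that reading, the analysis behind Lemma~\ref{lem:Giakkoupis} applies verbatim to the set $C$. The potential argument only uses that every $U\subseteq C$ has many edges into $C\setminus U$ relative to $\min\{vol(U),vol(C\setminus U)\}$. Hence any rumor at least $r$ that is present in $C$ reaches all of $C$ within $O(\log n/\Phi_c)$ rounds w.h.p. Monotonicity is immediate: a node forwards the highest rumor it has seen, so once a node has seen a value at least $r$ it keeps one.

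Next comes the chaining step. The node $u^\ast$ lies either in $K$ or in some petal $P_i$. If $u^\ast\in P_i$, then within $O(\log n/\Phi_c)$ rounds all of the component of $P_i$ containing $u^\ast$ holds a value at least $r^\ast$, and that component meets $K$. So after one period some node of $K$ holds a value at least $r^\ast$. In a second period this spreads to all of $K$, and in a third period to all of every petal, since every petal intersects $K$. Three periods give $O(\log n/\Phi_c)$ rounds, and a union bound over at most $n$ components keeps the high-probability guarantee.

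The main obstacle is the first step. I need to justify that Lemma~\ref{lem:Giakkoupis} applies to a component embedded in the larger graph $G$, even though gossip calls can leak outside the component. A related gap is that a petal being a union of components, and not a single connected one, requires going through the specific component that meets $K$. My first attempt would be to redo Giakkoupis's potential argument with the $G$-volumes. If that fails, I would fall back on a coupling argument: the process inside $C$ dominates uniform gossip on $C$ with each round kept independently with probability proportional to $\deg_C/\deg_G$.
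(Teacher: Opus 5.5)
Your overall route is the paper's. You apply Lemma~\ref{lem:Giakkoupis} separately to each connected component of size at least $n/c$ and conductance at least $\Phi_c$, and you use monotonicity: the set of nodes holding a value at least $r$ evolves exactly like push-pull, and values arriving from outside only help. Then you chain a constant number of $O(\log n/\Phi_c)$ periods. The paper simply asserts that the lemma extends to such subgraphs. Your observation is the right justification: $vol$ is measured with degrees in $G$, consistent with the paper's definition, so calls leaving the component are already charged in the conductance bound. Your coupling fallback is not an independent argument, though. Without that $G$-volume convention, the node-dependent laziness $\deg_C(v)/\deg_G(v)$ can slow spreading by a factor up to $\max_v \deg_G(v)/\deg_C(v)$.

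The step that does not work as written is the chaining. Write $M_x$ for the maximal spreading component of $x$. Neither $K=M_s$ nor a petal $P_i=M_x$ is itself a set of conductance at least $\Phi_c$. Each is only a union of such components sharing one common node: $s$ for $K$, and $x$ for $P_i$. This has three consequences:
\begin{itemize}
\item the component of $P_i$ containing $u^\ast$ need not meet $K$;
\item a value present at one node of $K$ need not reach all of $K$ in one period;
\item likewise, a value at a node of $P_j\cap K$ need not reach all of $P_j$ in one period.
\end{itemize}
The fix is to route through the common nodes, with each arrow below lying inside a single component and costing one period:
\begin{itemize}
\item first $u^\ast \to x \to y \to s$ with $y\in P_i\cap K$;
\item then $s\to v$ for $v\in K$;
\item and $s\to y'\to x'\to v$ for $v$ in a petal $M_{x'}$ meeting $K$ at $y'$.
\end{itemize}
This is still a constant number of periods. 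It is exactly the paper's structure of first bringing the highest rumor to $s$ and then spreading it out from $s$. Accordingly, your union bound should be over the specific components used in these chains. That is $O(1)$ per target node, so $O(n)$ in total, not over all components, whose number you have not bounded.
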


\begin{proof}
    It is straightforward to see that Lemma \ref{lem:Giakkoupis} also implies that for any subgraph $H = (V_H,E_H)$ of $G$ with conductance $\Phi'$ (with at least one source node), the (modified) uniform gossip algorithm ensures that each node $v \in V_H$ receives the highest rumor initially held by some node in $V_H$, or some even higher rumor from $V \setminus V_H$, in $O(\log n /\Phi')$ rounds and with high probability.

    Now, for any node $u$ in the maximal spreading component of $s$, there exists at least one connected component $C$ containing $u$ and $s$, of size at least $n/c$ and with conductance at least $\Phi_c$. Then, by the above statement, any node in $C$, and thus $u$, receives the highest rumor initially held by some node in $C$,
    or some even higher rumor, in $O(\log n /\Phi_c)$ rounds and with high probability. We can apply the same reasoning, but now we consider any node $v$ in the sunflower of $s$. Then, some maximal spreading component intersecting that of $s$ must contain $v$, and thus there exists some node $u$ in that intersection, and at least one connected component $C'$ containing $u$ and $v$, of size at least $n/c$ and with conductance at least $\Phi_c$. Then, again by the above statement, any node in $C'$, and thus both $u$ and $v$, receives the highest rumor initially held by some node in $C'$, or some even higher rumor, in $O(\log n /\Phi_c)$ rounds and with high probability.

    Finally, let $w$ be the node holding initially the highest rumor in the sunflower of $s$. Then, $s$ receives the rumor of $w$, or some even higher rumor from outside the sunflower, in $O(\log n /\Phi_c)$ rounds and with high probability, by the above. In turn, this implies that after another $O(\log n /\Phi_c)$ rounds and with high probability, any node in the sunflower of $s$ receives the rumor of $w$, or an even higher rumor.
\end{proof}

However, in weak conductance graphs, a single rumor does not necessarily reach all nodes within $O(\log n /\Phi_c)$ rounds. In fact, even within the sunflower of some node $s$, although the above corollary implies that all nodes receive some rumor greater or equal to that of $s$, it may still happen that different nodes receive different rumors. As such, the above (tree construction) modification of uniform gossip does not give a direct analogue of Corollary \ref{lem:treeConstruction} for weak conductance graphs, or in other words, uniform gossip on weak conductance graphs does not lead to a simple and time-efficient spanning tree construction algorithm.

Yet, for any weak conductance graph $G$ and any set $S \subseteq V$ of source nodes, we can time-efficiently (i.e., in $O(\log n /\Phi_c)$ rounds) construct a forest rooted in $S$ and spanning the union of the sunflowers of nodes in $S$, in the $\GOSSIP-\CONGEST$ model. This comes with a major caveat: the obtained forest may have depth $\omega(\log n / \Phi_c)$, despite the runtime being bounded to $O(\log n /\Phi_c)$. This is because even though a single rumor can create a branch of at most $O(\log n /\Phi_c)$ nodes in the computed forest, one branch may result from the propagation of multiple, different rumors.

\begin{corollary}
\label{cor:forestComputation}
    Consider some constant $c \geq 1$, $n$-node graph $G=(V,E)$ with weak conductance $\Phi_c$, and set $S \subseteq V$ of source nodes. Let $U$ be all nodes contained in the union of the sunflowers of nodes in $S$.
    Then, $O(\log n /\Phi_c)$ rounds suffice to compute some (distributed) forest spanning $U$, with high probability.
\end{corollary}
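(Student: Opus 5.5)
We run the tree-construction variant of the modified (highest-rumor) uniform gossip from Corollary~\ref{lem:treeConstruction}, but with every node of $S$ as a source. Each source $s \in S$ starts with its own ID as rumor; all other nodes start with none. In every round, on each contact (initiated or received), nodes exchange the highest rumor seen so far. Whenever a node receives a rumor strictly higher than any it has held, it sets the sender as its parent, overwriting any earlier parent. We run this for $T = O(\log n/\Phi_c)$ rounds. At the end, a node that holds a rumor and still has no parent becomes a root; by construction this can only be a source still holding its own ID. Nodes that never received any rumor do not participate.

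\textbf{Coverage.} Fix $u \in U$, so $u$ lies in the sunflower of some $s \in S$. Corollary~\ref{cor:sunflowerSpreading} applies to that sunflower, which contains the source $s$ holding a rumor. So after $T$ rounds, with high probability, $u$ holds a rumor at least the highest initial rumor in that sunflower. In particular $u$ has received some rumor, and hence either has a parent or is itself a source holding its own ID. A union bound over the at most $n$ nodes keeps the failure probability polynomially small.

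\textbf{Acyclicity.} We follow the argument of Corollary~\ref{lem:treeConstruction}, with the potential being the pair (current rumor, time of adoption). Suppose $v$ sets its parent to $w$ at round $r$ with rumor $x$. Then $w$ held $x$ before round $r$, or during round $r$ as the replying node; in the latter case the reply conveys the rumor $w$ held at the start of the round, so the exchange is well defined. Afterwards $w$ either keeps $x$ (and adopted it strictly earlier than $v$) or switches to a strictly higher rumor. Along any parent pointer $v \to p(v)$ at the end, the pair (final rumor of $p(v)$, minus adoption time) is therefore lexicographically strictly greater than that of $v$. Hence the pointers contain no cycle, and each pointer chain ends at a node with no parent, i.e.\ a root source. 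Every participating node also holds a rumor, so its chain stays inside the set of rumor-holding nodes.

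\textbf{Main obstacle.} The forest spans all of $U$, but participating nodes outside $U$ may also be included; this is harmless, since the claim is only that the forest spans $U$, and those extra nodes can be kept. The real subtlety is that a node's parent can change after its children chose it. The potential argument above handles this, because a parent's rumor only increases. Each node knows its parent edge; children learn that a node is their parent through a final $O(T)$ notification step, or simply need not know, as the definition requires. Total time is $O(\log n/\Phi_c)$ and messages are $O(\log n)$ bits.
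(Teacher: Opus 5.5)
Your proposal is correct and follows essentially the same route as the paper: coverage of $U$ comes from Corollary~\ref{cor:sunflowerSpreading} applied to the sunflower of each source. Acyclicity holds because each node's final parent either received the same rumor in a strictly earlier round or holds a strictly higher rumor, and your lexicographic potential (final rumor, minus adoption time) simply makes that argument explicit. One small remark: the paper's definition of a distributed tree does require each node to know its tree edges, hence its children, so keep your notification step (one round in which each node contacts its parent) rather than the ``need not know'' option.
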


\begin{proof}
    By Corollary \ref{cor:sunflowerSpreading}, it holds with high probability that each node $v \in U$, $v \notin S$, receives some rumor in $O(\log n /\Phi_c)$ rounds. (For the remainder of the proof, we condition on this event happening.) Moreover, nodes in $S$ may receive some rumor strictly higher than their own. Any node in $S$ for which this does not happen, becomes a root by the algorithm description.
    
    Moreover, once again by the algorithm's description, each non-root node orients one of its incident edges (toward its parent) for the last time upon receiving the highest rumor it sees in the $O(\log n /\Phi_c)$ rounds. Now, it follows that each node orients that edge towards some node having received the same rumor in some strictly earlier round, or a strictly higher rumor. In turn, this implies that these oriented edges cannot form a cycle. The statement (and error probability) follows.
\end{proof}

\paragraph*{Broadcast \& Convergecast.}

For these two primitives, consider some distributed forest of root set $S \subseteq V$ and of depth $T$, where $T$ is some known upper bound. 

Then, we describe a broadcast primitive over that tree. We assume that any root $s \in S$ has some message $\mu_s$, and that broadcast is successful if for any root $s \in S$, every node in the tree of $s$ outputs $\mu_s$. This can be done in $T$ rounds, during each of which, each node contacts its parent and receives the broadcast message if its parent already knows it. 

A simple modification yields a primitive for a partial broadcast up to depth $1 \leq d  \leq T$, described concretely as follows: for any root $s \in S$, every node in the tree of $s$ and at depth at most $d$ must output $\mu_s$. Indeed, running the broadcast primitive for $d$ rounds ensures that every node at depth at most $d$ receives the corresponding broadcast message within these rounds.
 
The convergecast primitive works conversely to the broadcast primitive. Now, each node $v$ in the forest may start with some message $\mu_v$, and all nodes know some aggregation function (e.g., min, max, sum and average) for these messages. Then, in each of the $T$ rounds, once a node $v$ has received a message from each of its children, it aggregates these messages with its own message $\mu_v$. If the node is not the root, then it contacts its parent, and sends it that aggregate. Otherwise, $v$ outputs the aggregate.

\begin{proposition}  
\label{prop:treePrimitives}
    For any $n$-node graph $G$ and depth-$T$ (distributed) forest, where $T$ is some known upper bound, broadcast and convergecast over the forest can be done in $O(T)$ rounds. Moreover, for any integer $1 \leq d \leq T$, partial broadcast up to depth $d$ can be done in $O(d)$ rounds.
\end{proposition}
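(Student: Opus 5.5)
We prove Proposition~\ref{prop:treePrimitives} by induction on depth for broadcast and on height for convergecast, checking that each round respects the $\GOSSIP-\CONGEST$ constraints. Each node contacts at most one neighbor per round, namely its parent, and every message is either a root message $\mu_s$ or an aggregate. We assume $\mu_s$ and the aggregates fit in $O(\polylog n)$ bits, as for min, max, sum and average of polynomially bounded values. A parent may be contacted by many children in one round, but the model lets it reply to all of them, so contacts are never a bottleneck.

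For broadcast, we prove by induction on $i$ that after $i$ rounds every node at depth at most $i$ in the tree of root $s$ knows $\mu_s$. The base case $i=0$ is just the roots, which hold their messages. For the step, take a node $v$ at depth $i+1$. By the induction hypothesis its parent knows $\mu_s$ at the end of round $i$. In round $i+1$, $v$ contacts its parent and gets $\mu_s$ as the reply. Since the forest is distributed, each node knows its parent, and different trees run in parallel without interference because each node talks only to its own parent. Setting $i=T$ bounds the forest depth and gives $O(T)$ rounds. Setting $i=d$ gives the partial broadcast in $d$ rounds; here nodes simply stop after $d$ rounds, which needs only knowledge of $d$.

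For convergecast, we prove by induction on $h$ that a node whose subtree has height $h$ has sent its aggregate to its parent by the end of round $h+1$. If it is a root, it instead outputs the aggregate by round $h+1$.
\begin{itemize}
\item A leaf ($h=0$) has no children, so it is ready immediately and contacts its parent in round $1$.
\item A node of height $h$ has children of height at most $h-1$. By induction all of them have delivered their aggregates by round $h$.
\item The node can therefore aggregate in the local computation step of round $h$ and send to its parent in round $h+1$.
\end{itemize}
Each node sends exactly once, so it contacts one neighbor in that round. Since heights are at most $T$, convergecast finishes within $T+1=O(T)$ rounds.

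The only subtle point is termination and knowing when all children have reported. In the distributed forest each node knows its children, so it can tell when every child has reported. Knowing $T$ lets all nodes stop after $T+1$ rounds synchronously. I expect no real obstacle; the main care is in the timing convention within a round (messages received versus local computation), which the induction above handles.
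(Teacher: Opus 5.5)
Your proof is correct and takes the same approach as the paper. The paper gives no separate proof: it describes broadcast as each node contacting its parent for $T$ (or $d$) rounds, and convergecast as each node forwarding its aggregate once all children have reported, and treats the bounds as immediate. Your inductions on depth and on subtree height make those timing claims explicit.
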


\paragraph*{Sampling Outgoing Edges.}

Once again, consider some distributed forest with root set $S \subseteq V$ and of depth $T$, where $T$ is some known upper bound. We describe how the tree roots can, simultaneously, each sample an outgoing edge (i.e., whose other endpoint is not in the tree), if there exists one, in $O(T)$ rounds and with high probability. This is done by relying on graph sketches, which are described in Section \ref{subsec:graph-sketch}.

A tree root creates a common random string of $O(\polylog n)$ bits and broadcasts that string to all nodes in its tree (using the above described broadcast primitive). After which, each node in the tree creates some $O(\log n)$ graph sketches (using their incidence vectors and that common random string) and convergecasts them to the root. (Note that during the convergecast, graph sketches are aggregated, but since each node has a single parent in the tree, the same graph sketch is not aggregated twice.) Finally, the root samples some outgoing edge (with high probability) and broadcasts that edge to all nodes in its tree.

\begin{proposition}  
\label{prop:gossip-sampling-edges}
    For any $n$-node graph $G$ and depth-$T$ forest, where $T$ is some known upper bound, each (distributed) tree's roots can sample an outgoing edge in $O(T)$ rounds, with high probability. 
\end{proposition}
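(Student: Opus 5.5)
The plan is to analyse the three-phase procedure just described (broadcast of a random string, convergecast of sketches, broadcast of the sampled edge) and bound its round complexity and failure probability separately, using Proposition~\ref{prop:treePrimitives} and Lemma~\ref{lem:graph-sketch-works}. All trees of the forest act in parallel and each node has a single parent in the forest, so the per-node gossip constraint (one contact per round) holds in every phase. This is the same accounting as in Proposition~\ref{prop:treePrimitives}.

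\textbf{Step 1 (shared randomness).} Each root $s$ draws $\Theta(\polylog n)$ random bits: the $\Theta(\log^2 n)$ bits per sketching matrix times $\Theta(\log n)$ independent copies. Since this string fits in one $\polylog n$-bit message, broadcasting it over the forest takes $O(T)$ rounds by Proposition~\ref{prop:treePrimitives}. Every node in the tree of $s$ then locally constructs the same $\Theta(\log n)$ independent matrices $M_G^{(1)},\dots,M_G^{(k)}$ and computes its sketches $M_G^{(i)}\incVect(v)$. This computation is local because, in the KT1 model, $v$ knows its neighbours' IDs and hence $\incVect(v)$.

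\textbf{Step 2 (aggregation).} Convergecast with summation as the aggregation function. By linearity, the root of tree $S_s$ obtains $\sketchVect^{(i)}(S_s)=\sum_{v\in S_s}\sketchVect^{(i)}(v)$ for every $i$. Each aggregate has $O(\polylog n)$ bits by Lemma~\ref{lem:graph-sketch-works}, since the matrix entries are polynomially bounded and there are at most $n$ summands. All $k=O(\log n)$ aggregates therefore fit in one $\polylog n$ message. Because each node sends to its parent exactly once, no sketch is double counted, and the convergecast takes $O(T)$ rounds.

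\textbf{Step 3 (sampling and amplification).} The root applies $f_G$ to each aggregate. Each application independently succeeds with probability $1-\delta$ in returning a uniform edge of $E_G(S_s,V\setminus S_s)$, or correctly reports emptiness. The root outputs the first successful answer, so the probability that all $k$ applications fail is $\delta^{k}\le n^{-d}$ for $k=\Theta(\log n)$. A union bound over the at most $n$ trees keeps the overall guarantee high probability. Finally, broadcasting the chosen edge takes another $O(T)$ rounds, for a total of $O(T)$.

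\textbf{Main obstacle.} The delicate point is the correctness of Step 3. The sketch answers must be verified: a failed sampler must be detectable, as in the AGM construction (via $\ell_0$-sampling with a fingerprint check), so that the root can discard it and move to the next copy. The copies must also use independent randomness. I would rely on the cited constructions~\cite{AhnGM12a,JST11,PRS18} for the detectability and for the $\Theta(\log^2 n)$-bit seed. Everything else is routine accounting of message sizes.
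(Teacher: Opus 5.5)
Your proposal is correct and follows the paper's own argument: the root broadcasts a common $\polylog n$-bit random string, each node computes $O(\log n)$ sketches, these are summed by a convergecast (each node has a single parent, so nothing is double counted), and the root samples an edge and broadcasts it, for $O(T)$ rounds in total. The paper leaves the ``with high probability'' step implicit, and your amplification over $\Theta(\log n)$ independent sketch copies, using the detectable failures of the cited $\ell_0$-samplers, just spells it out.
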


\section{Rumor Spreading in Graphs with Large Weak Conductance}
\label{sec:alg}

Any graph with weak conductance $\Phi_c$, for some value of the parameter $c \geq 1$, can be decomposed into at most $\lfloor c \rfloor$ pairwise kernel-disjoint sunflowers (see Lemma \ref{lem:sunflowerDecomposition}). Combining this structural property with graph sketching techniques, we solve rumor spreading (as well as spanning tree construction and leader election) while using only $O(\polylog n)$-sized messages. Moreover, we do so with a time complexity of $O(c \log n / \Phi_c)$ rounds, which improves on that of \cite{CS12} (where the latter requires messages with size linear in $n$).

Note that as a result, we also show that for graphs with good weak conductance for small enough $c \geq 1$, that is, such that $c/ \phi_c = O(\polylog n)$, we can obtain fast (i.e., $O(\polylog n)$ rounds) and message-efficient (i.e., $\tilde{O}(n)$ messages) spanning tree construction (and leader election) even in the $\GOSSIP-\CONGEST$ model. 

\subsection{Super Cluster Primitives}
\label{subsec:superset-primitives}

Before presenting our rumor spreading algorithm for weak conductance graphs, we first give several more basic communication primitives. These primitives cope with the fact that we cannot afford to merge several clusters (and their spanning trees) into a single one for much of the algorithm presented in Section \ref{subsec:weakConductanceAlg}, otherwise its runtime would grow significantly. Hence, we define and leverage \emph{super clusters} --- these are collection of clusters with good communication properties in $\GOSSIP-\CONGEST$. 

More formally, a super cluster of \emph{size} $k \geq 1$ and \emph{depth} $T$ is a triple containing (1) a set of some $k \geq 1$ clusters, (2) a set of $k$ disjoint trees, each spanning one of the clusters and of diameter at most $T$, and (3) a set of (undirected) inter-cluster edges. Moreover, the inter-cluster edges must satisfy the following properties. First, for each inter-cluster edge, at least one of the endpoints is responsible for communication (i.e., only this node is allowed to contact the other endpoint), and that endpoint node knows that it is responsible for that edge. Second, no two inter-cluster edges are incident, i.e., each node is responsible for the communication of a single incident edge. 

Next, we implement several basic distributed primitives for super clusters time-efficiently. To do so, we show how any super cluster can efficiently simulate a communication-restricted distributed algorithm on the following \emph{cluster supergraph}: each cluster is a (super) node, and two clusters (or super nodes) are connected if some inter-cluster edge has its endpoints in both clusters. More concretely, we provide the simulation for synchronous distributed algorithms. Such algorithms proceed in iterations (to distinguish from the rounds used by the nodes in the original communication graph), and within each iteration, a super node (1) executes some local computation, where ``local'' here means $O(T)$-round operations restricted to the cluster (e.g., broadcast, convergecast, etc.), (2) sends the same message to all its (super node) neighbors (i.e., broadcast communication), and (3) receives an aggregate (e.g., bitwise OR, average, sum, min, max, etc.) of the messages sent by its neighbors.
Then, a single iteration on the supergraph can be simulated via $O(T)$ rounds in the original communication graph, by leveraging the super cluster's structure.
That is, to simulate one iteration, first each cluster's root computes what message it would send on the supergraph, and then the root broadcasts that message over its cluster, in $O(T)$ rounds. Second, all inter-cluster edges are used to communicate; for any inter-cluster edge $e$ incident to any node $v$ in the cluster, if $v$ is responsible for communication over $e$, then $v$ contacts the other endpoint of $e$ after receiving the broadcast from its cluster root.
(This takes a single round, since each node is responsible for at most one incident edge.) Third and finally, each cluster convergecasts, and in doing so, aggregates the information received by any of its nodes from other clusters.  

\begin{proposition}  
\label{prop:superclusterSimulation}
    Consider any $n$-node graph $G$, any super cluster of size $k$ and depth $T$ (where $T$ is some known upper bound) and any $T_I$-iteration algorithm $\mathcal{A}$ designed in the above described model. Then, the super cluster simulates the execution of $\mathcal{A}$ on the cluster supergraph, in $O(T \cdot T_I)$ rounds. 
\end{proposition}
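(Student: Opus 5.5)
The plan is to prove the statement by induction on the number of iterations. The inductive invariant is that after the simulation of iteration $i$ of $\mathcal{A}$, the root of every cluster holds exactly the state that the corresponding super node would hold after iteration $i$ in the execution of $\mathcal{A}$ on the cluster supergraph. Each iteration will take $O(T)$ rounds in $G$, so multiplying by $T_I$ gives the claimed $O(T \cdot T_I)$ bound. Since the clusters' trees are disjoint and their roots are fixed, running the primitives of Proposition \ref{prop:treePrimitives} simultaneously in all trees is the same as running them on the forest formed by the union of the $k$ trees. That forest has depth at most $T$, because each tree has diameter at most $T$.

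To simulate one iteration, I will handle its three parts in the order given in the model.

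First, the local computation. By definition it consists of $O(T)$-round operations restricted to a cluster, such as broadcast and convergecast. All clusters run these operations in parallel on their own trees, and the clusters are disjoint, so this costs $O(T)$ rounds.

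Second, the root computes its outgoing message and broadcasts it over its tree in $O(T)$ rounds, again by Proposition \ref{prop:treePrimitives}.

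Third, the message is sent across the inter-cluster edges in a single round. Every inter-cluster edge has a responsible endpoint, and each node is responsible for at most one incident edge. Hence each node contacts at most one neighbor, which respects the $\GOSSIP$ constraint. In that round, both endpoints of the edge exchange the message of their respective clusters through the request and its reply. This ensures that messages flow in both directions even though only one endpoint initiates the contact. A node may be contacted by many others in this round, but the model allows it to reply to all of them.

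Fourth, each node aggregates everything it received in that round; this aggregate must fit in $O(\polylog n)$ bits, which holds because the aggregation operators in the model (OR, sum, min, max, and so on) preserve the message size. A convergecast with the same operator then delivers the cluster-wide aggregate to the root in $O(T)$ rounds.

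The step I expect to be the main obstacle is showing that this aggregate equals what the super node would receive on the supergraph. Two situations need care. First, several inter-cluster edges may connect the same pair of clusters, so a neighbor's message can be counted multiple times. Second, an edge with both endpoints in the same cluster would act as a self-loop. For idempotent operators such as OR, min and max, multiplicity does not matter, so these cases are harmless. For sum or average, I will either define the supergraph as a multigraph, with one super edge per inter-cluster edge, or restrict the model to idempotent aggregates. I would state this convention explicitly in the proof.

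Finally, all steps are deterministic given the tree structure, and $T$ is known, so every node knows when each phase ends and the phases stay synchronized. This completes the induction and gives the $O(T \cdot T_I)$ bound.
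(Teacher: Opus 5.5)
Your proposal is correct and follows the same route as the paper: in each iteration, every root's message is broadcast over its cluster tree, then a single exchange round follows in which only the responsible endpoint of each inter-cluster edge initiates contact, and finally a convergecast aggregates what was received, giving $O(T)$ rounds per iteration and $O(T \cdot T_I)$ overall. Your caveat that parallel inter-cluster edges would double-count under sum or average on the paper's simple supergraph is a legitimate point the paper leaves implicit; self-loops, by contrast, cannot arise, since inter-cluster edges join distinct clusters.
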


With the above simulation, we can implement some basic primitives for Section \ref{sec:alg}. The first \emph{flooding} primitive is quite straightforward. All clusters have their own inputs, either some IDs or some boolean, as well as an integer input $\beta \geq 1$. Then, the primitive simulates a basic $\beta$ round flooding algorithm on the cluster supergraph. If the initial inputs are IDs (respectively, booleans), then in that supergraph, each node simply sends the highest ID (respectively, the OR of all booleans) seen until now to all of its neighbors, and upon aggregating any received IDs (respectively, booleans), keeps the maximum one (respectively, the OR result). Since the algorithm on the cluster supergraph trivially takes $O(\beta)$ iterations, the primitive on $G$ takes $O(\beta T)$ rounds (by Proposition \ref{prop:superclusterSimulation}).  

\begin{proposition}  
    Consider any $n$-node graph $G$, and any super cluster of size $k$ and depth $T$ (where $T$ is some known upper bound). For any integer $\beta \geq 1$, each cluster (root) can receive an aggregate of the values of all clusters within distance $\beta$ on the cluster supergraph, in $O(\beta T)$ rounds. 
\end{proposition}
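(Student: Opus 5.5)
The plan is to obtain the statement directly from Proposition~\ref{prop:superclusterSimulation}. We design a $\beta$-iteration flooding algorithm $\mathcal{A}$ in the restricted supergraph model, show that it computes the desired aggregate, and multiply its iteration count by the $O(T)$ simulation cost.

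\textbf{The algorithm on the cluster supergraph.} Each super node $C$ keeps a current value $x_C^{(t)}$, initialized to its own input. If the inputs are IDs the aggregate is max; if they are booleans it is OR. In iteration $t = 1, \ldots, \beta$, super node $C$ does three things. First, it sends $x_C^{(t-1)}$ to all its super-node neighbors; this is a broadcast-type message, as the model requires. Second, it receives the aggregate (max, respectively OR) of its neighbors' messages. Third, it sets $x_C^{(t)}$ to the aggregate of $x_C^{(t-1)}$ and the received value. Max and OR are associative, commutative and idempotent, so they are exactly the kind of aggregates the model allows at the receiving step. Each value fits in $O(\log n)$ bits, so all messages are small.

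\textbf{Correctness.} We prove by induction on $t$ that $x_C^{(t)}$ equals the aggregate of the inputs of all clusters within supergraph distance $t$ of $C$. The base case $t=0$ holds by initialization. For the step, the ball of radius $t$ around $C$ is the union of $C$'s own ball of radius $t-1$ and the radius-$(t-1)$ balls of $C$'s neighbors. By idempotence, aggregating over this union, even with overlaps, gives the same result as aggregating over each element once. So after $\beta$ iterations, each cluster holds the aggregate over its radius-$\beta$ ball.

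\textbf{Cost.} Proposition~\ref{prop:superclusterSimulation} simulates each iteration in $O(T)$ rounds: the root broadcasts its value, the responsible endpoints make one round of inter-cluster contacts, and the cluster convergecasts with max/OR. This convergecast also folds in the cluster's old value. After the convergecast the value sits at the cluster root, which is where the statement asks for the output. The total is $O(\beta T)$ rounds.

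The only delicate point is matching the simulation to the model. The value received over inter-cluster edges by the various nodes of a cluster must combine correctly under convergecast, and idempotence of max/OR is what makes this safe even when several nodes receive from the same neighboring cluster. Beyond that, the argument is routine.
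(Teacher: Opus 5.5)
Your proposal is correct and takes essentially the same approach as the paper. The paper likewise runs a $\beta$-iteration max/OR flooding algorithm on the cluster supergraph and invokes Proposition~\ref{prop:superclusterSimulation} to get $O(\beta T)$ rounds; your idempotence-based induction just writes out the correctness step that the paper treats as immediate.
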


This flooding primitive is a basic building block for the remaining primitives on super clusters. For example, given some integer input $\beta \geq 1$, say we want to detect if the eccentricity of the cluster with maximum root ID, within the cluster supergraph, is at most $\beta$. (We use this problem as a simple and efficient alternative to the problem of detecting whether the diameter of the cluster supergraph is at most $\beta$.) This distributed detection problem can be defined as follows (on the cluster supergraph). If that eccentricity is most $\beta$, then the cluster with maximum root ID should output true whereas all other clusters should output false. Otherwise, all clusters should output false.
This distributed detection problem can be solved easily, even within the restricted distributed model we consider for algorithms running on the cluster supergraph. More precisely, we first run the flooding primitive, using the cluster root IDs, for $\beta+1$ iterations. If any cluster sees its maximum highest ID seen until now change between the $\beta$th iteration and the last, $(\beta+1)$th iteration, then that cluster detects the eccentricity in question is strictly higher than $\beta$. Next, we run the flooding primitive again, but for $\beta$ iterations and where the initial input of each cluster is a boolean --- set to true if the cluster detected the above, and false otherwise. Finally, the output is decided as follows: any cluster (root) that did not see any higher ID during the first flooding primitive, nor received any true boolean value during the second flooding primitive, sets its output value to true.

\begin{proposition}  
    Consider any $n$-node graph $G$, and any super cluster of size $k$ and depth $T$ (where $T$ is some known upper bound). For any integer $\beta \geq 1$, the cluster with maximum root ID cluster can detect if its eccentricity in the cluster supergraph is strictly higher than $\beta$ or not, in $O(\beta T)$ rounds. 
\end{proposition}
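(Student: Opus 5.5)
We prove the statement by analyzing the two-stage detection procedure just described, run on the cluster supergraph $\mathcal{H}$. Each stage is an instance of the flooding primitive, so the $O(\beta T)$ bound follows once correctness is shown. Let $r$ be the cluster with maximum root ID, and let $e(r)$ be its eccentricity in $\mathcal{H}$. We take $\mathcal{H}$ to be connected, since the super cluster is meant to be a connected collection.

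\emph{Invariant of the first flooding.} After $i$ iterations, every cluster $C$ holds the maximum root ID among clusters within distance $i$ of $C$ in $\mathcal{H}$. This is an immediate induction, since each iteration aggregates by max over neighbors.

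\emph{Case $e(r) > \beta$.} There is some cluster $C$ with $d(C,r) = \beta + 1$, since distances along a shortest path take every intermediate value. At iteration $\beta$, $C$ has not yet seen $r$'s ID, which is the global maximum; at iteration $\beta+1$ it has. Hence $C$'s value changes and $C$ sets its boolean to true. In the second flooding, run for $\beta$ iterations, this true value reaches every cluster within distance $\beta$ of $C$, and in particular reaches $r$, because $d(C,r) = \beta+1$ is too far. This is the step where care is needed, so we adjust the argument. By the invariant applied in reverse (which is valid because $r$ holds the maximum), we take the shortest path from $r$ to $C$ and let $C'$ be the neighbor of $C$ on that path. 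Then $d(C',r) = \beta$. Because $C$ changed at iteration $\beta+1$, some neighbor of $C$ first learned $r$'s ID at iteration exactly $\beta$, and so on down the path. Rather than rely on this chain, the cleanest fix is to let each cluster's detection flag be set if its value changes at iteration $\beta+1$, and then flood those flags for $\beta+1$ iterations. This costs only one additional iteration, and it guarantees that $r$ receives true and outputs false. Every other cluster outputs false in any case: a cluster $C \neq r$ eventually sees $r$'s higher ID, or, if $d(C,r) > \beta+1$, it sees it only later, so we define the output rule as "output true only if your own ID is the maximum seen," which already excludes every cluster except $r$.

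\emph{Case $e(r) \le \beta$.} After $\beta$ iterations, every cluster has seen $r$'s ID. No cluster's maximum changes at iteration $\beta+1$, because the maximum is already global. All booleans are therefore false, and $r$, which never sees a higher ID, outputs true. Every other cluster sees $r$'s higher ID and outputs false.

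\emph{Complexity.} The procedure consists of two flooding runs of $O(\beta)$ iterations each. By Proposition~\ref{prop:superclusterSimulation} and the flooding proposition, this takes $O(\beta T)$ rounds in total.

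The only delicate point is making sure that the "changed" signal reaches $r$ within the second flooding's iteration budget. Allowing $\beta+1$ iterations in the second flood resolves it without affecting the asymptotic bound.
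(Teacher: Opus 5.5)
Your argument for the maximum-ID cluster $r$ is correct, and the change you make to the second flooding is needed. Under the procedure as the paper describes it, the second flooding runs for only $\beta$ iterations. In that case no cluster within distance $\beta$ of $r$ is ever flagged, because after iteration $\beta$ it already holds $r$'s ID, the global maximum. So every flagged cluster lies at distance at least $\beta+1$ from $r$, a $\beta$-iteration OR-flood never reaches $r$, and $r$ would output true even when $e(r)>\beta$. Running the second flooding for $\beta+1$ iterations lets the cluster at distance exactly $\beta+1$ deliver its flag to $r$. The case $e(r)\le\beta$ is unaffected, since then no cluster is flagged at all, and the cost remains $O(\beta T)$. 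Please clean up the write-up, though: the sentence saying the flag ``reaches $r$, because $d(C,r)=\beta+1$ is too far'' contradicts itself, and the abandoned chain argument should be deleted in favour of the $\beta+1$ fix.

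The genuine gap is your claim about the other clusters. The rule ``output true only if your own ID is the maximum seen'' does \emph{not} exclude every cluster except $r$. When $e(r)>\beta$, a cluster $C$ with $d(C,r)>\beta+1$ may hold the largest ID in its own $(\beta+1)$-neighbourhood. It then sees no higher ID in the first flooding, and your sentence gives no reason why it outputs false. The proposition as literally stated concerns only $r$, so you may drop this claim. However, the detection problem defined just before it requires all other clusters to output false, and for that you need an argument. One that works with your $\beta+1$ fix is as follows.
\begin{itemize}
\item Let $Y$ be a nearest cluster to $C$ with larger ID, at distance $d\ge\beta+2$, and let $C=y_0,\dots,y_d=Y$ be a shortest path. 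Then every cluster at distance less than $d$ from $C$ has ID at most that of $C$.
\item If $d\le 2\beta+2$, consider $y_{d-\beta-1}$. Its $\beta$-ball lies within distance $d-1$ of $C$, while $Y$ lies in its $(\beta+1)$-ball, so it is flagged.
\item If $d\ge 2\beta+2$, consider $y_{\beta+1}$. Its $\beta$-ball lies within distance $2\beta+1<d$ of $C$ and does not contain $C$, while $C$ lies in its $(\beta+1)$-ball, so it is flagged.
\end{itemize}
In both cases the flagged cluster is within distance $\beta+1$ of $C$. Hence $C$ receives true during the $(\beta+1)$-iteration second flooding and outputs false.
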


Now, consider that the previous primitive returns that the eccentricity of the cluster with maximum root ID, within the cluster supergraph, is at most $\beta$, for some integer $\beta \geq 1$. Then, we can time-efficiently merge all clusters into a single one in the $\GOSSIP-\CONGEST$ model. More precisely, for all nodes in the super cluster, we modify its parent, if needed, such that the corresponding edges form a single spanning tree of diameter at most $\beta T$ covering all of the super cluster's nodes.
To do this, it suffices to run the flooding primitive (with IDs) twice, where nodes learn during the first flooding the maximum root ID in the super cluster, and learn during the second flooding which node is their parent in the new tree (i.e., the node from which they receive the ID corresponding to the maximum root ID from the first flooding).

\begin{proposition}  
\label{lem:reorientSuperCluster}
    Consider some $n$-node graph $G$ and super cluster of size $k$ and depth $T$ (where $T$ is some known upper bound), and let $U$ denote the nodes in the super cluster. If the eccentricity of the cluster with maximum root ID is at most $\beta$, for some integer $\beta \geq 1$, in the cluster supergraph, then we can construct a spanning tree of $U$ with diameter at most $2 \beta T$, in $O(\beta T)$ rounds.
\end{proposition}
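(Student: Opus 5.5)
The plan is to run the flooding primitive twice on the cluster supergraph, via the simulation of Proposition~\ref{prop:superclusterSimulation}, and then to bound the depth of the resulting parent pointers in the original graph. Let $C^*$ be the cluster with maximum root ID $r^*$. By hypothesis, every cluster is within supergraph distance $\beta$ of $C^*$.

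\textbf{First flooding.} Run the ID-flooding primitive for $\beta$ iterations with each cluster's input set to its root ID. This takes $O(\beta T)$ rounds. After it, every cluster has seen $r^*$, since every cluster lies within distance $\beta$ of $C^*$ and $r^*$ is the maximum.

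\textbf{Second flooding.} Run a round-by-round version of the flooding, now at node granularity, to build a BFS-like tree rooted at $r^*$. The order of operations is as follows.
\begin{itemize}
\item Initially only $C^*$ is "reached". It re-roots nothing and keeps its own tree.
\item In each iteration, every reached cluster broadcasts a "reached" token over its tree and across its inter-cluster edges. This costs $O(T)$ rounds per iteration.
\item A node $v$ in an unreached cluster that receives the token over an inter-cluster edge $(u,v)$ becomes a candidate.
\item Each unreached cluster selects, by convergecast with the max-ID aggregation, a single candidate $v$ as its new local root and sets $\mathrm{parent}(v)=u$.
\item The cluster then re-orients its own tree toward $v$. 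To do this, $v$'s message is broadcast, and every node on the path from $v$ to the old root flips its parent pointer. This can be done in $O(T)$ rounds by having the old root learn $v$ and propagate the flip down the path, e.g., by convergecasting "my subtree contains $v$" bits.
\end{itemize}
Each cluster is reached within $\beta$ iterations, so the total cost is $O(\beta T)$ rounds.

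\textbf{Acyclicity.} Pointers inside each cluster form that cluster's original tree, re-rooted at its entry node. A single inter-cluster pointer links each entry node to a cluster reached in a strictly earlier iteration. Hence there are no cycles, and the union is a spanning tree of $U$ rooted at $r^*$.

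\textbf{Diameter bound.} A node in a cluster reached at iteration $i\le\beta$ reaches $r^*$ by following a path through at most $i+1$ clusters. Within each cluster the path follows that cluster's tree, which has diameter at most $T$. The inter-cluster hops add at most $\beta$ more edges. This gives depth at most $(\beta+1)T+\beta = O(\beta T)$ and diameter at most twice the depth. To match the stated constant $2\beta T$, the count should be tightened by charging each inter-cluster hop against a cluster traversal (assuming $T\ge1$ and counting $\beta+1$ clusters as at most $2\beta$ for $\beta\ge1$).

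The main obstacle is the re-orientation step. The paper's sketch, in which a node picks as parent the sender of the max ID, only directly yields intra-cluster pointers if flooding is done at node level. I would handle this by running the second flooding as plain node-level rumor spreading of $r^*$ restricted to tree edges and responsible inter-cluster edges. A node takes as parent the neighbor from which it first hears $r^*$, breaking ties by ID. Because the per-iteration schedule is synchronous, pointers go from later to earlier receipt times, which gives acyclicity immediately.

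The depth bound must then be argued from receipt time rather than from the cluster structure. Each node first hears $r^*$ by round $O(\beta T)$. Each parent pointer strictly decreases receipt time, so depth, and hence diameter, is $O(\beta T)$. The constant $2\beta T$ should then follow by checking that receipt time within an iteration's $O(T)$-round window increases by at most $T$ per cluster traversal.
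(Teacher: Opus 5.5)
Your proposal is correct and is essentially the paper's argument: two ID floodings over the super-cluster simulation, the first so that every cluster learns the maximum root ID $r^*$, the second so that every node takes as parent the neighbor from which it receives $r^*$. The re-orientation you single out as the main obstacle is already done implicitly by that rule: in the simulation's convergecast, the nodes on the path from a newly reached cluster's entry node to its old root receive $r^*$ from a child and adopt that child as parent, and the next broadcast leaves all other pointers unchanged. Like the paper, you actually establish an $O(\beta T)$ bound (your depth count $(\beta+1)T+\beta$ is the accurate one, and $O(\beta T)$ is all that is used later), so the tightening to exactly $2\beta T$ is unnecessary and cannot work in general: if $G$ consists of two $(T+1)$-node paths, each a cluster, joined end to end by a single inter-cluster edge, then $\beta=1$ but every spanning tree of $U$ has diameter $2T+1$.
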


Given such a spanning tree, we can time-efficiently (i.e., in $O(\beta T)$ rounds) compute an outgoing edge in $\GOSSIP-\CONGEST$ --- by directly using primitives from Section \ref{subsec:communication-primitives}. (Where here, outgoing means an edge leading to some node not in the super cluster, if any exists.) Moreover, note that for any such super cluster (i.e., fulfilling the above eccentricity condition), we can compute a temporary spanning tree. Indeed, if we make sure that nodes keep in memory the previous edges, then as a result, after some fixed number of rounds known to all nodes (e.g., once we've sampled an outgoing edge of the super cluster), we can revert back to the original super cluster (i.e., with spanning tree for each cluster, and inter-cluster edges).

\subsection{Algorithm}
\label{subsec:weakConductanceAlg}

Consider some communication graph $G$ with weak conductance $\Phi_c$, for some value of the parameter $c \geq 1$, and let $T(n) = O(\log n / \Phi_c)$ be an upper bound (whose only dependency on $c$ is captured by $\Phi_c$) on the runtime of the primitives in Section \ref{subsec:graph-sketch}
for a graph of $n$ nodes and weak conductance $\Phi_c$. 
Also, let $c^* = \lfloor c \rfloor$.  
We give a spanning tree construction algorithm that takes $O(c \log n / \Phi_c)$ rounds. 

We separate the algorithm into two parts: a set-up part and a tree-merging part. In the set-up part, we ``compute'' a decomposition of $G$ into at most $c^*$ pairwise kernel-disjoint sunflowers, but in a distributed fashion. More precisely, we compute a decomposition of $G$ into at most $c^*$ disjoint clusters --- each corresponding to one such sunflower --- such that each cluster is spanned by a tree of $O(\log n / \Phi_c)$ depth. A subtle point is that we do not compute the decomposition into sunflowers exactly; indeed, this would give us non-disjoint clusters whereas we need disjoint clusters for the next, tree-merging part. Instead, we carefully assign nodes in the intersection of the sunflowers (which are pairwise kernel-disjoint, but not pairwise disjoint) to a single cluster, without increasing the cluster's diameter.

After which, we take these clusters to be the initial clusters of the tree-merging part, which will combine the clusters' spanning tree into a single tree spanning the whole graph $G$. To merge these clusters fast, we leverage graph sketching techniques. Moreover, we merge clusters in a slightly unusual way. Both techniques combined together lead to our $O(c \log n / \Phi_c)$ runtime.\footnote{Running most existing merging techniques 
for few, say $o(\log n)$ phases, incurs a $\Omega(\log^* n)$ multiplicative overhead, if not $\Omega(\log n)$. Using such a technique here would result in $\omega(c \log n / \Phi_c)$ runtime.}

\paragraph*{Set-up.}

In the set-up part, we run $c^*$ phases. Initially, all nodes are uncovered (i.e., $C_0 = \emptyset$), and at the end of each phase $i$, the covered nodes (denoted by $C_i$) form a (distributed) forest $F_i$ of at most $c^*$ (disjoint) trees, and of depth at most $2T(n)$. (We remind that this means each node in a tree knows which edge leads to its parent, and additionally, we ensure each node knows its depth in the tree and the root's ID.) Let $U_i = V \setminus C_i$ denote the uncovered nodes at the end of phase $i$.
Each phase takes $6T(n)$ rounds, and can be decomposed into the following three steps. 

In the first step, all nodes run (the modified, tree construction) uniform gossip for $2 T(n)$ rounds and construct a (distributed) forest $F'$ spanning the union of the sunflowers of nodes in $U_{i-1}$ (see Corollary \ref{cor:forestComputation} in Section \ref{subsec:communication-primitives}) --- unless $U_{i-1} = \emptyset$, in which case no forest ends up being constructed.
However, note that this forest may have depth $\omega(T)$.

In the second step, we prune the forest $F'$. First, nodes run (modified, highest rumor) uniform spreading for $2T(n)$ rounds, where each node's initial rumor is the highest ID received during the first step. Any root of $F'$ that receives some higher ID becomes inactive for the remainder of this step; doing this will remove, from $F'$, any tree that doesn't span its root's sunflower (whp). After which, for all active roots in $F'$, we remove from their tree all nodes that are at depth strictly more than $2T(n)$. To do so, all active roots in $F'$ (partial) broadcast simultaneously (see Section~\ref{subsec:communication-primitives} for a description of the primitive) their ID over their tree, during the phase's next $2T(n)$ rounds. On the one hand, any node that receives any message over these $2T(n)$ rounds now knows the root's ID and can compute its depth within the tree (from the round in which it receives the message). In which case, we say it \emph{joins} that tree. On the other hand, any node that did not receive a message (during these last $2T(n)$ rounds) does not keep the edge to its parent computed in this phase's first step, and as such does not join any tree in this phase. (Note that it may be part of some tree it joined in some previous phase.) Whenever a node joins a tree for the first time, it becomes covered. To summarize, the second step prunes forest $F'$ into some forest $F''$ (rooted in the active nodes only), spanning less nodes, but with depth $2T(n)$ and such that each tree spans its root's sunflower.

In the third and final step, we merge the forest $F_{i-1}$ obtained at the end of the previous phase, together with the forest $F''$ computed in the current phase, by pruning away some unnecessary edges. More precisely, each node $v$ with multiple parents keeps the one with smallest depth (or more precisely, the one corresponding to the tree in which $v$ has smallest depth), and after that, the phase ends.
Note that this third and last step ensures that the resulting forest $F_i$ has depth $O(\log n / \Phi_c)$, whereas a naive approach would lead to depth $O(c \log n / \Phi_c)$.

\paragraph*{Tree-Merging.}

After the set-up part, we have at most $c^*$ (disjoint) clusters such that each cluster is spanned by a tree of depth $2T(n) = O(\log n / \Phi_c)$. Progressively, we form larger sets of connected clusters, which we call \emph{super clusters}. (Due to efficiency reasons, we maintain a loose structure for each super cluster --- that is, we do not form a single tree spanning all nodes in the super cluster --- until all phases of the merging are done. In what follows, we use primitives on super clusters described in Section~\ref{subsec:superset-primitives}.) More precisely, we run $\log c^*$ phases, and each phase $i$ ensures that the (disjoint) super clusters each contain at least $2^i$ clusters. After these $\log c^*$ phases, a single super cluster contains all clusters, and it is straightforward to merge these together into a single spanning tree (see the primitive in Section~\ref{subsec:communication-primitives}) as well as use that tree to perform rumor spreading (via the broadcast and convergecast primitives from Section~\ref{subsec:communication-primitives}).  

Next, we describe how the super clusters merge in every phase, and how this is done in a way such that over all phases, only $O(c \log n/\Phi_c)$ rounds suffice.
Initially, each cluster is its own super cluster. Then, in phase $i \in \{1,\ldots,\log c^*\}$, each super cluster (roughly) checks if it contains at least $2^i$ clusters, in which case it won't merge in this phase.
More precisely, if we define the cluster supergraph as the (virtual) graph in which each cluster is a (super) node, and two clusters (or super nodes) are connected if some inter-cluster edge has its endpoints in both clusters, then the super cluster performs the following simpler and more efficient procedure: it checks whether all clusters are within $2^i$ distance, within the cluster supergraph, of the cluster with maximum root ID. If yes, then the super cluster samples an outgoing edge, and does so in $O(2^i T(n))$ rounds (where the constants hidden in $O$ do not depend on $i$, and for more details, see Propositions~\ref{lem:reorientSuperCluster} and~\ref{prop:gossip-sampling-edges}).
Otherwise, the super cluster does nothing except possibly receive messages from other super clusters' outgoing edges. (In this case, the super cluster simply waits for a phase that matches its number of clusters. While waiting, the super cluster may grow via the addition of some new clusters, which happens when clusters outside the super cluster find an outgoing edge leading to the super cluster.)

\subsection{Analysis}
We begin with several lemmas regarding the set-up part, and in particular we show Lemma \ref{lem:setUpGuarantee}. That is, after the set-up part, with high probability, we have decomposed $G$ into ``few'' disjoint clusters, and each has ``small'' diameter.

\begin{lemma}
\label{lem:setupInvariant}
    Consider any phase $i \in \{0,\ldots,c^*\}$ of the set-up part. Then, forest $F_i$ spans all covered nodes and has depth $2 T(n)$. Moreover, with high probability, it holds that: 
    \begin{enumerate}
        \item For each root in $F_i$, all nodes within its sunflower are covered. 
        \item For any two roots in $F_i$, their sunflowers are pairwise kernel-disjoint. 
    \end{enumerate} 
\end{lemma}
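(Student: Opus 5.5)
We argue by induction on the phase index $i$. The base case $i=0$ is trivial: $C_0=\emptyset$ and $F_0$ is empty, so every claim holds vacuously. For the inductive step we assume the statement for $F_{i-1}$, condition on the high-probability events of phases $1,\ldots,i-1$, and track the three steps of phase $i$. A union bound over at most $c^*\le n$ phases then keeps the overall failure probability polynomially small.

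\emph{Deterministic part (spanning and depth).} After the second step, every node of $F''$ has received its root's ID through the partial broadcast within $2T(n)$ rounds. By Proposition~\ref{prop:treePrimitives}, its depth in $F''$ is therefore at most $2T(n)$, and $F''$ is a genuine forest because it is a pruned subforest of $F'$; the proof of Corollary~\ref{cor:forestComputation} gives acyclicity. In the third step, each node with two candidate parents keeps the one of smaller depth. We must check that the result $F_i$ is still a forest of depth at most $2T(n)$. The argument is that a node's new depth equals the minimum of its depths in $F_{i-1}$ and $F''$. Following parent pointers, each step strictly decreases the recorded depth: if $v$ keeps parent $p$ from tree $X$, then $p$'s own kept depth is at most its depth in $X$, which is $\mathrm{depth}_X(v)-1$. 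So no cycle can form, and every node reaches a root within $2T(n)$ hops. The roots of $F_i$ are the roots of $F_{i-1}$ together with the active roots of $F''$. Every covered node has joined some tree at some phase and keeps at least one parent (or is a root), so $F_i$ spans $C_i$.

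\emph{Probabilistic part.} For item (1), consider a new active root $r$ of $F''$. By Corollary~\ref{cor:sunflowerSpreading}, the $2T(n)$ rounds of the first step deliver to every node of $r$'s sunflower a rumor at least as high as $r$'s. By the second-step check, $r$ heard no higher ID, so the highest rumor in its sunflower is $r$'s own. We then need that every node of the sunflower actually ends with $r$'s rumor, at depth at most $2T(n)$ in $r$'s tree. For this we reuse the argument in the proof of Corollary~\ref{cor:sunflowerSpreading}: the rumor of $r$ reaches any sunflower node along a path of length at most $2T(n)$ (kernel component, then petal component). Hence each such node joins $r$'s tree and is covered. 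Old roots keep item (1) by induction, since coverage is monotone.

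For item (2), take a new active root $r$ and any other root $r'$. Each root's tree was built from the nodes in $U_{\cdot}$ at the start of its phase, and the root is the highest ID in its sunflower, as shown above. If the kernels of $r$ and $r'$ intersected, then $r'$ would lie in $r$'s sunflower (or vice versa), since a node's kernel is contained in its own sunflower. That gives two cases. If $r'$ is an old root, it would already have been covered, and with it every node of its sunflower; the kernel intersection then puts $r$ in that sunflower, so $r$ was covered before phase $i$ and could not start a tree as an uncovered source. If $r'$ is also new, the one with the lower ID would have received the higher ID and become inactive.

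The main obstacle is item (1) together with the depth bound: showing that pruning at depth $2T(n)$ does not cut off any sunflower node of an active root. This requires re-deriving, from the proof of Corollary~\ref{cor:sunflowerSpreading}, that the winning rumor arrives within $2T(n)$ rounds, so that the parent chain recorded for the final (highest) rumor has length at most $2T(n)$, rather than the possibly longer multi-rumor branches that $F'$ may contain.
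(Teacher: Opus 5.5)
Your structure is the paper's. You use induction with a union bound over the at most $c^*$ phases, the same acyclicity and depth argument for the third step (recorded depth strictly decreases along kept parent pointers), and the same two cases for item~2. Your derivation that every node in the sunflower of an active root $r$ ends the first step holding exactly $r$'s ID is also correct: Corollary~\ref{cor:sunflowerSpreading} on the first step gives ``$\geq$'', and on the second step gives ``$\leq$''. The gap is the very next inference, the one you yourself flag as the main obstacle: ``hence each such node joins $r$'s tree'' at depth at most $2T(n)$.

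That inference needs every node on $v$'s parent chain, not just $v$, to end the first step with rumor $r$. Only then do reception times strictly decrease up the chain, which bounds its length by $2T(n)$ and makes it end at $r$. But the parent of $v$ is whichever neighbor first delivered $r$ to it, and you have only controlled the final rumor of nodes \emph{inside} the sunflower.

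Nothing rules out the following. The first deliverer $p$ lies outside the sunflower, relays $r$ to $v$, and later adopts a higher rumor $r'$ that does not reach $r$ during the second step. This is allowed because Corollary~\ref{cor:sunflowerSpreading} only forces rumors held inside the sunflower to reach $r$. Then $v$ and its entire subtree hang below $p$ in another tree. Along that chain, the time at which each node received its final rumor can go up wherever the rumor value increases, so their depth there is not bounded by $2T(n)$. These are exactly the multi-rumor branches mentioned before Corollary~\ref{cor:forestComputation}. They may be pruned, leaving sunflower nodes of $r$ uncovered.

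Your ``kernel component, then petal component'' path only bounds \emph{when} a rumor at least as high as $r$ arrives. It says nothing about which edge delivered $r$ first. To close the step you need an extra argument, or a change to the parent rule, ensuring that the parent chain of such a $v$ stays among nodes whose final rumor is $r$.

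The paper's own proof of item~1 silently relies on the same fact. It argues by contradiction, asserting that an uncovered sunflower node of $r$ must have received a higher ID in the first step. So your write-up is not weaker than the paper's; it just exposes the missing piece.
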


\begin{proof}
    We prove the statement by induction on $i$. For the base case, note that no node is covered initially, and the forest is initially empty, hence the claim is trivially true.
    Next, we assume the induction hypothesis is true for some $i \in \{0,\ldots,c^*-1\}$ and prove the induction step. (We prove that the induction step holds with high probability, and this coupled with a union bound over the $c^* \leq n$ phases implies the lemma statement.) 

    To do so, first note that the first two steps compute a forest $F'$ of depth $2 T(n)$ over the nodes that become covered in phase $i+1$. By Lemma \ref{cor:forestComputation}, the first step constructs a forest. Then, the second step constructs $F''$ by removing some edge from $F'$, thus $F''$ is also a forest, and ensures that only nodes up to depth $2T(n)$ in $F'$ are kept for $F''$. Indeed, only these nodes receive the partial broadcast message by Proposition \ref{prop:treePrimitives}. Finally, any node that becomes covered in phase $i+1$ is, by definition, spanned by $F''$. Next, note that the induction hypothesis for phase $i$ implies that any node covered in phase $i$ or earlier is part of $F_i$, and that $F_i$ has depth at most $2T(n)$. 
    
    Now, recall that the algorithm description forces each covered node $v$ whose parent in $F_i$ differs from that in $F''$, to choose the one with smallest depth for $F_{i+1}$ (which $v$ deduces from its own depth in the corresponding forest). Let us assume by contradiction that we have a cycle in $F_{i+1}$. However, by the above procedure, each node's depth must be higher than that of its (possibly updated) parent. Since this cannot hold in a cycle, we get a contradiction and thus $F_{i+1}$ is a forest spanning the covered nodes at the end of phase $i+1$. Moreover, its depth is $2 T(n)$ because the depth of a node can only remain the same, or decrease, from $F_i \cup F''$ to $F_{i+1}$. 

    We now show that the two items hold with high probability for $F_{i+1}$. By the induction hypothesis for $i$, the two items respectively hold for any root and any two roots in $F_i$. First, let us assume that the first item does not hold for any root in $F''$, say root $r$. Then, it holds that some node $v$ in the sunflower of $r$ receives some higher ID (than that of $r$) from some other root node $r' $ of $F'$ during the first step (but $v$ does not join any tree in the second step). However, by Corollary \ref{cor:sunflowerSpreading}, root $r$ receives this higher ID, or an even greater one, during the first $2T(n)$ rounds of the second step with high probability. Thus, $r$ becomes inactive, which contradicts the fact that $r$ is a root of $F''$. As a result, we get that for any root in $F''$, and in fact, for any root in $F_{i+1}$, all nodes within its sunflower are covered.  

    Next, take any root $r$ in $F''$, and another root $r'$ from either $F_{i}$ or $F''$. Assume their sunflowers are not pairwise kernel-disjoint. If $r'$ is a root of $F_i$, then by item 1 for $F_{i}$, $r$ must be covered by the end of phase $i$, which leads to a contradiction with $r$ being a root of $F''$. Otherwise, if $r'$ is a root of $F''$, then by Corollary \ref{cor:sunflowerSpreading}, the node with the smallest rumor among $r$ and $r'$ receives a rumor greater than its own during the first step, and thus does not become a root of $F'$ (and thus of $F''$ also), which is also a contradiction. Hence, we get that the sunflowers of $r$ and $r'$ are pairwise kernel-disjoint with high probability, and thus the induction step holds with high probability. As mentioned above, this implies the lemma statement holds with high probability also.
\end{proof}

\begin{proposition}
\label{prop:coveredNodes}
    At the end of any phase $i \in \{0,\ldots,c^*\}$ of the set-up part, the number of uncovered nodes is at most $n - i \cdot n/c$, with high probability. 
\end{proposition}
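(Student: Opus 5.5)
We argue by induction on $i$, conditioning throughout on the high-probability event of Lemma~\ref{lem:setupInvariant}, so that items 1 and 2 hold for every $F_i$. The base case $i=0$ is immediate, since $C_0=\emptyset$ and there are exactly $n$ uncovered nodes. For the induction step from phase $i$ to phase $i+1$, suppose $U_i \neq \emptyset$. (If $U_i=\emptyset$, the count is $0$ and stays $0$, so the bound holds trivially.) We will show that phase $i+1$ covers at least $n/c$ \emph{new} nodes, namely all nodes of the kernel of some root created in that phase.

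First, we show that phase $i+1$ produces at least one new root in $F''$ whose kernel lies entirely in $U_i$. Let $r$ be the node of $U_i$ with the highest ID. Consider the sunflowers of nodes in $U_i$, which the first step spans by Corollary~\ref{cor:forestComputation}. Among the roots of $F'$, take the one whose ID is the highest rumor in the union of these sunflowers. That highest rumor is held by some node of $U_i$, since only nodes in $U_i$ act as sources. By Corollary~\ref{cor:sunflowerSpreading}, no node in this root's sunflower receives a higher ID in the second step, so the root stays active. Therefore $F''$ has at least one root $r^\ast\in U_i$. The main subtlety is to check which nodes act as sources in the first step. If covered nodes also spread IDs, I would instead argue directly that the highest ID among the nodes of $U_i$ whose sunflower contains no covered root survives.

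Next, we show that the kernel of $r^\ast$ consists of nodes that were uncovered before phase $i+1$. Suppose some kernel node $u$ of $r^\ast$ were in $C_i$. Then $u$ lies in some tree of $F_i$ with root $r'$, and I would use item 1 to derive a contradiction. The cleanest route is to show that the kernels of $r^\ast$ and $r'$ intersect, or that $r^\ast$ lies in the sunflower of $r'$, which would make $r^\ast$ already covered. This is the step I expect to be the main obstacle, because covering through petals does not immediately control kernels. What I would try first is to strengthen the invariant to say that every covered node lies in the sunflower of some root of $F_i$. Then the maximal spreading component of $u$ intersects the kernel of $r^\ast$, so the kernel of $r^\ast$, being the maximal spreading component of $r^\ast$, lies in the sunflower of the root whose sunflower contains $u$. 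In particular $r^\ast$ itself would be covered, which is a contradiction.

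Once this is established, the kernel of $r^\ast$ contains at least $\lceil n/c\rceil$ nodes, all in $U_i$. All of them are spanned by $F''$, since by item 1 the tree of $r^\ast$ spans its whole sunflower. Hence $|U_{i+1}|\le |U_i|-n/c \le n-(i+1)n/c$. Finally, a union bound over the at most $c^\ast\le n$ phases, together with the high-probability guarantees of Lemma~\ref{lem:setupInvariant} and Corollary~\ref{cor:sunflowerSpreading}, gives the statement with high probability.
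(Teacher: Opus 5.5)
Your second step has a genuine gap, and the claim it aims at is stronger than what the algorithm guarantees. You want the kernel of the new root $r^\ast$ to lie entirely in $U_i$, so that phase $i+1$ covers $n/c$ \emph{new} nodes.

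Covered nodes are not confined to the kernels of earlier roots. Item 1 of Lemma~\ref{lem:setupInvariant} covers whole sunflowers, petals included. Moreover, the second step of a phase covers every node within depth $2T(n)$ of an active root in $F'$, whether or not that node lies in the root's sunflower. So the strengthened invariant you propose (every covered node lies in the sunflower of some root) is not guaranteed by the algorithm.

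Even granting that invariant, the deduction fails, because sunflower membership is not transitive. Saying $u$ lies in the sunflower of $r'$ only means $u$ lies in the maximal spreading component of some $w$ whose component meets the kernel of $r'$. Combined with $u$ being in the kernel of $r^\ast$, you get a chain (kernel of $r^\ast$, component of $w$, kernel of $r'$), not an intersection of the two kernels. So you cannot conclude that $r^\ast$ lies in the sunflower of $r'$.

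For instance, take cliques $K_1,\dots,K_4$ in a path, with consecutive cliques sharing one node. The sunflower of a non-shared node of $K_1$ is $K_1\cup K_2\cup K_3$; the petal $K_2\cup K_3$ is the component of the node $K_2\cap K_3$. A root created later in $K_4$ has kernel $K_4$, which contains the already covered node $K_3\cap K_4$. So the kernel of $r^\ast$ need not lie in $U_i$.

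The missing idea is to count kernels globally rather than newly covered nodes per phase; this is what the paper does.
\begin{enumerate}
\item By item 2 of Lemma~\ref{lem:setupInvariant}, \emph{all} roots of $F_i$, old and new, have pairwise disjoint kernels, each of size at least $n/c$. By item 1, these kernels are covered. Hence $|C_i|$ is at least $n/c$ times the number of roots of $F_i$.
\item The disjointness you actually need is kernel versus kernel, and your own reasoning proves exactly that. If the kernel of $r^\ast$ met the kernel of an old root $r'$, it would be a petal of $r'$, so $r^\ast$ would already be covered, a contradiction.
\item Your first step shows the highest-ID node of $U_i$ is never beaten and becomes a new active root.
\item A root of $F_i$ has depth $0$, so it remains a root under the smallest-depth rule of the third step.
\end{enumerate}
Together, these give at least $i$ roots in $F_i$ unless $U_i=\emptyset$, which yields the bound.
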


\begin{proof}
    By Lemma \ref{lem:setupInvariant}, the pairwise kernel-disjoint sunflowers of the forest's roots (in phase $i$) contain only covered nodes, with high probability. By definition, these sunflowers each contain a unique set of at least $n/c $ nodes (i.e., their kernel) and all these nodes are covered. Moreover, it is straightforward to see that at least one new root is added to the forest while there remains uncovered nodes. Hence, by the end of phase $i$, at most $n - i \cdot  n/c$ nodes can remain uncovered, with high probability. 
\end{proof}

\begin{lemma}
\label{lem:setUpGuarantee}
    Forest $F_{c^*}$ spans $G$, has depth $2T(n)$ and contains at most $c^*$ trees, with high probability.  
\end{lemma}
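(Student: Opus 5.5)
The lemma follows by combining Lemma~\ref{lem:setupInvariant} at $i=c^*$ with Proposition~\ref{prop:coveredNodes}, plus a counting argument on the roots. We condition on the high-probability events of both results; the union bound over the $c^* \leq n$ phases is already absorbed into them, so the final statement still holds with high probability.

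\textbf{Forest structure and depth.} Lemma~\ref{lem:setupInvariant} for $i=c^*$ gives directly that $F_{c^*}$ is a forest spanning all covered nodes and of depth $2T(n)$.

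\textbf{Spanning.} We must show that every node is covered at the end of phase $c^*$. Proposition~\ref{prop:coveredNodes} gives at most $n - c^* n/c$ uncovered nodes, which is not necessarily $0$ when $c$ is not an integer, so I would argue more directly.
\begin{itemize}
\item The sunflowers of the roots of $F_{c^*}$ are pairwise kernel-disjoint (item~2), and each kernel has at least $\lceil n/c\rceil$ nodes.
\item Hence there are at most $\lfloor c\rfloor = c^*$ roots, by the counting in Lemma~\ref{lem:sunflowerDecomposition}.
\item In each phase where uncovered nodes remain, at least one new root is created. To see this, take the uncovered node $u$ with the highest ID. Its sunflower is nonempty, and by Corollary~\ref{cor:forestComputation} and item~1, some root whose sunflower covers $u$ survives the pruning. (I would spell this out using the remark already made in the proof of Proposition~\ref{prop:coveredNodes}.)
\item Item~1 guarantees that every node in a root's sunflower is covered, so kernels never overlap with previously covered regions in a way that would allow more than $c^*$ roots.
\end{itemize}
Suppose some node were still uncovered after phase $c^*$. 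Then all $c^*$ phases each added a new root, giving $c^*$ roots whose kernels are pairwise disjoint and of size at least $\lceil n/c\rceil$, all consisting of covered nodes. The uncovered node $u$ has its own kernel of size at least $n/c$. This kernel is disjoint from the existing kernels: otherwise $u$ would lie in one of those sunflowers and be covered by item~1. That yields $c^*+1$ disjoint sets each of size at least $\lceil n/c\rceil$, so $(c^*+1)\lceil n/c\rceil \leq n$, which contradicts $c^*+1 > c$. Therefore $C_{c^*}=V$.

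\textbf{Number of trees.} The same kernel-disjointness counting shows that $F_{c^*}$ has at most $c^*$ roots, hence at most $c^*$ trees.

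The main subtlety is the claim that every phase with uncovered nodes creates a new root that survives pruning. The argument is that the highest-ID node in $U_{i-1}$ (or a root whose sunflower contains it) cannot receive a higher rumor from within $U_{i-1}$'s sunflower union. I expect this to follow from Corollary~\ref{cor:sunflowerSpreading} but it needs care about rumors originating from already covered nodes.
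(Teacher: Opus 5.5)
Your proof is correct. The depth and tree-count parts coincide with the paper's: depth from Lemma~\ref{lem:setupInvariant}, and at most $c^*$ roots from pairwise kernel-disjointness plus integrality.

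You differ on spanning. The paper simply invokes Proposition~\ref{prop:coveredNodes} at $i=c^*$. As you correctly observe, that bound only gives $n-\lfloor c\rfloor n/c$ uncovered nodes, which is positive for non-integer $c$, so the paper's one-line citation really settles only the integer case. Your extra step closes this for every real $c\geq 1$. A leftover uncovered node $u$ has a kernel of size at least $n/c$. That kernel must avoid every root's kernel, since otherwise $u$ would lie in that root's sunflower and be covered by item~1. This yields $c^*+1$ pairwise disjoint sets of size at least $n/c$. The paper's route is shorter; yours is the one that actually works in general.

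The subtlety you flag resolves cleanly, and the witness is $u$ itself rather than ``some root whose sunflower covers $u$''. By Corollary~\ref{cor:forestComputation}, the sources in the first step are exactly $U_{i-1}$, so covered nodes only relay rumors. Hence the maximum-ID node of $U_{i-1}$ never receives a higher ID in either of the first two steps. It is therefore an active root of $F''$, and since it was previously uncovered, a new root of $F_i$.

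You also implicitly use that earlier roots persist into $F_{c^*}$. This holds because the third step keeps the tree of minimum depth, and a root has depth $0$. Alternatively, you can bypass persistence and item~2 entirely. The node $u_i$ made a root in phase $i$ has its whole sunflower covered at the end of phase $i$ by item~1, and coverage is monotone. So the kernels of $u_1,\dots,u_{c^*}$ and of $u$ are pairwise disjoint directly.
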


\begin{proof}
    By Lemma \ref{lem:setupInvariant}, forest $F_{c^*}$ has depth at most $2T(n)$. By Proposition \ref{prop:coveredNodes}, it holds with high probability that the forest $F_{c^*}$ spans $G$. Moreover, it also holds with high probability that for any two roots in the forest, their sunflowers are pairwise kernel-disjoint. Since each such kernel contains at least $n/c$ unique nodes, there can be at most $c$ roots in the forest. Since the number of roots is an integer, there are at most $c^*$ roots, and the statement follows.
\end{proof}

Next, we show that each phase increases the number of clusters each super cluster must contain. This goes on until the end of phase $\log c^*$, when a single super cluster contains all up to $c^*$ clusters obtained from the set-up part. 

\begin{lemma}
\label{lem:treeMerging}
   Let $k$ be the number of initial clusters for the tree-merging phase. With high probability, it holds that when any (tree-merging) phase $i \in \{0,\ldots,\log c^*\}$ ends, each super cluster contains at least $\min\{2^i,k\}$ clusters. 
\end{lemma}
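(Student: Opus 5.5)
We argue by induction on the phase index $i$, conditioning throughout on the high-probability events of Lemma~\ref{lem:setUpGuarantee} and on every edge-sampling call (Proposition~\ref{prop:gossip-sampling-edges}) succeeding. There are at most $\log c^*$ phases and at most $c^* \leq n$ super clusters per phase, so a union bound keeps the overall failure probability polynomially small. The base case $i=0$ is immediate, since initially every cluster is its own super cluster and so contains $1 = \min\{2^0,k\}$ clusters.

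For the induction step, assume that at the end of phase $i-1$ every super cluster contains at least $\min\{2^{i-1},k\}$ clusters. Consider a super cluster $\mathcal{S}$ at the start of phase $i$ with fewer than $\min\{2^i,k\}$ clusters; super clusters that already have enough clusters only grow. Since $\mathcal{S}$ has fewer than $k$ clusters and $G$ is connected, it has an outgoing edge. Its cluster supergraph is connected, because super clusters are only formed by merging along inter-cluster edges. A connected graph on fewer than $2^i$ nodes has eccentricity less than $2^i$ from any node, in particular from the cluster with maximum root ID. Hence the detection primitive reports that the eccentricity is at most $2^i$. Proposition~\ref{lem:reorientSuperCluster} then yields a temporary spanning tree, and Proposition~\ref{prop:gossip-sampling-edges} samples an outgoing edge $e$ leading to some other super cluster $\mathcal{S}'$.

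Merging along $e$ places $\mathcal{S}$ and $\mathcal{S}'$ in the same super cluster at the end of the phase. Each had at least $\min\{2^{i-1},k\}$ clusters by the induction hypothesis, so if $2^{i-1} < k$ the merged super cluster has at least $2^i$ clusters. If $2^{i-1} \geq k$ there was only a single super cluster already. Several small super clusters may merge simultaneously into chains or stars. Every small super cluster ends up in a component of the merge graph of size at least two, and each component of merged super clusters therefore has at least $2\cdot\min\{2^{i-1},k\}$ clusters, or all $k$ of them.

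The main obstacle is keeping the super cluster structure well-formed after merging. Each inter-cluster edge must have a responsible endpoint, and no node may be responsible for two edges. Concurrent sampled edges may share an endpoint, for instance when many super clusters pick edges into the same node of $\mathcal{S}'$. My first attempt would be to give responsibility to the sampling endpoint, which lies in the sampling super cluster. A node $u$ is then responsible for at most one edge per phase in which its super cluster samples. If one node ends up responsible for two edges, I would resolve it by noting that the requirement only constrains the responsible side, and merging only needs connectivity of the cluster supergraph. Either duplicate edges are dropped while connectivity is preserved, or the lemma, which concerns only the number of clusters, is argued on the merge graph independently of this bookkeeping.
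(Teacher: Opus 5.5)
Your proof is correct and follows the paper's route: induction on the phase, where an undersized super cluster has fewer than $k$ clusters (so it has an outgoing edge) and a connected supergraph on fewer than $2^i$ clusters (so it passes the eccentricity check), hence it samples an edge and merges with at least one other super cluster of at least $\min\{2^{i-1},k\}$ clusters; your handling of simultaneous merges via components of the merge graph is, if anything, cleaner than the paper's. The paper does not address the well-formedness issue from your last paragraph at all, but dropping edges cannot in general preserve connectivity (a node picked as the sampling endpoint in two different phases can be responsible for two bridges of the cluster supergraph), so the right fix is to note that each node takes responsibility for at most one new edge per phase, i.e.\ $O(\log c^*)$ edges overall, which adds only $O(\log c^*) \le O(\log n)$ rounds to each simulated iteration and is absorbed in its $O(T(n))$ cost.
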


\begin{proof}
    We prove the statement by induction on $i$. The base case for phase $i = 0$ follows trivially, since all super clusters initially contain at least 1 cluster. Next, let us assume the induction hypothesis holds for some phase $0 \leq i < \log c^*$: i.e., each super cluster contains at least $\min\{2^i,k\}$ clusters by the end of phase $i$. Then, we show that the induction step holds with high probability. (A union bound over the $\log c^* \leq n$ phases gives us the lemma statement.) Consider phase $i+1$. Super clusters cannot ``lose'' clusters (or put conversely, cluster do not ``leave'' super clusters), so any cluster that already has $\min\{2^{i+1},k\}$ clusters when phase $i+1$ starts, still contains at least as many when phase $i+1$ ends. As for the other super clusters, containing at least $\min\{2^i,k\}$ but strictly less than $\min\{2^{i+1},k\}$ clusters, then for each super cluster, its cluster with maximum root ID is within distance at most $\min\{2^{i+1},k\}$ to any other cluster, within the cluster supergraph. If $2^{i+1} < k$, then there exists at least one outgoing edge, and the super cluster samples one such outgoing edge with high probability. This outgoing edge leads to another super cluster with at least $\min\{2^i,k\}$, different clusters. This newly obtained super cluster (possibly resulting from the merging of many super clusters with $\min\{2^i,k\}$ clusters) contains at least $\min\{2^{i+1},k\}$ clusters, and the induction step follows (with some error probability polynomially small in $n$).
\end{proof}

Finally, we can prove this section's main result.

\rumorSpreadingInWeakConductance*

\begin{proof}
    By Lemma \ref{lem:setUpGuarantee}, the tree-merging phase starts with at most $k \leq c^*$ clusters with high probability, that altogether cover all nodes. Moreover, by Lemma \ref{lem:treeMerging}, any super cluster must contain at least $k$ clusters by the end of tree-merging phase $\log c^*$. In other words, a single super cluster exists by the end of tree-merging phase $\log c^*$, and it covers graph $G$. After which, we construct a tree of diameter $O(c \log n / \Phi_c)$ spanning $G$ (see Proposition \ref{lem:reorientSuperCluster} from Section~\ref{subsec:communication-primitives}).
    Finally, convergecasting and broadcasting over this tree suffices to solve rumor spreading in another $O(c \log n / \Phi_c)$ rounds.

    Let us now bound the round complexity. Each phase of the set-up part takes $O(\log n / \Phi_c)$ rounds, so over the $c^*$ phases, the set-up part takes $O(c \log n / \Phi_c)$ rounds. As for the tree-merging part, each phase $i \in \{1,\ldots,\log c^*\}$ takes $O(2^i T(n)) \leq \alpha 2^i T(n)$ rounds (for some constant $\alpha > 0$, independent of $i$). Summing up, we get that altogether, the phases take $\sum_{i = 1}^{\log c^*} \alpha 2^i T(n) = O(c^* T(n)) = O(c \log n / \Phi_c)$ rounds. Moreover, building a spanning tree of $G$ at the end of the merging part takes $O(c \log n / \Phi_c)$ rounds by Proposition \ref{lem:reorientSuperCluster}, and similarly for broadcast and convergecast over this spanning tree, by Proposition \ref{prop:treePrimitives}. Finally, it suffices to add up these round complexities.
\end{proof}

\section{Rumor Spreading for General Graphs  in $\tilde{O}(D+\sqrt{n})$ rounds}
\label{sec:diameterAlg}

A natural approach for solving rumor spreading in $\GOSSIP-\CONGEST$ within general graphs --- see e.g. \cite{GK18} --- is to sparsify the communication graph $G$. Then, one must trade off the cost of computing a sparse subgraph $G'$ of $G$ with the stretch properties of $G'$. Clearly, the latter properties impact the runtime of any rumor spreading algorithm run on $G'$, since rumor spreading will take at least the diameter of $G'$.
However, we do not have any fast sparsification methods in $\GOSSIP-\CONGEST$ to compute a sparse subgraph with low stretch over the entire graph. Instead, we can, at a reasonable runtime cost, compute a sparse subgraph $G'$ with the following stretch properties: (1) large additive stretch on shortest paths within the high degree parts of $G$, and (2) low (or even no) multiplicative stretch on the shortest paths within the low degree parts of $G$.

In this work, we extend this sparsification approach one step further, with the following intuition in mind: the runtime of simple rumor spreading on $G'$ depends not only on the stretch properties of $G'$, but also on its degree properties. More precisely, in addition to the above stretch properties, our computed sparse subgraph $G'$ ensures (informally) that on the low (but polynomial in $n$) degree parts of $G$, $G'$ also has $O(\log n)$ maximum degree.

In Section \ref{subsec:simulation}, we show how to simulate one round of $\CONGEST$ communication with a degree dependent, multiplicative blowup in $\GOSSIP-\CONGEST$. Following which, we describe our sparsification technique in Section \ref{subsec:sparseSubgraphs}. (We leverage the simulation from Section \ref{subsec:simulation} to run this sparsification efficiently.) Finally, in Section \ref{subsec:rumorSpreading}, we leverage this new sparsification technique to give an algorithm that achieves rumor spreading, for any $n$-node graph $G$ of diameter $D$, in $\tilde{O}(D + \sqrt{n})$ rounds w.h.p., in the $\GOSSIP-\CONGEST$ model.

\subsection{Vertex Cover Based Simulation of $\CONGEST$}
\label{subsec:simulation}

In general, communication within the $\GOSSIP-\CONGEST$ model is significantly more limited than that of the $\CONGEST$ model. However, when the communication graph $G$ has low maximum degree --- e.g., the maximum degree is $\Delta = O(1)$ --- then clearly the $\GOSSIP-\CONGEST$ model can efficiently simulate any algorithm designed in the $\CONGEST$ model. 
More precisely, one round of $\CONGEST$ communication can be simulated via a phase of $\Delta$ rounds of $\GOSSIP-\CONGEST$ communication. During each such phase, each node $u$ contacts all of its neighbors one at a time --- one neighbor per round and in a round robin fashion --- and the communication over an edge $e = (u,v)$ consists of $u$ and $v$ exchanging the messages they would have sent each other within the simulated $\CONGEST$ round. 

We generalize the above insight: $\CONGEST$ communication (over some subgraph) can be efficiently simulated given either (i) a ``low degree'' vertex cover or (ii) a ``low out-degree'' edge orientation --- see Lemmas \ref{lem:lowDegreeSimulation} and \ref{lem:lowOutdegreeSimulation} below.

\begin{lemma}
\label{lem:lowDegreeSimulation}
Let $G' = (V',E')$ be some (possibly disconnected) subgraph of $G$. Furthermore, let $\Deltath \geq 1$ be some integer degree threshold, and $U$ be a vertex cover of $G'$ such that the maximum degree of nodes in $U$ within $G'$ is at most $\Deltath$. Then, there exists a $\GOSSIP-\CONGEST$ algorithm simulating one round of $\CONGEST$ communication on the connected components of $G'$ using $O(\Deltath)$ rounds.
\end{lemma}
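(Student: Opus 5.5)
The plan is to let the vertex cover $U$ do all the contacting. Every edge of $G'$ has at least one endpoint in $U$, so we assign each edge $e=(u,v)\in E'$ a responsible endpoint in $U$. If both endpoints are in $U$, we pick, say, the one with the smaller ID. In the KT1 model, each node $u\in U$ knows its incident edges in $G'$, so it can determine locally which edges it is responsible for. We should state explicitly that nodes know whether they belong to $U$ and which incident edges lie in $E'$; this is the intended reading, since $G'$ and $U$ are computed beforehand. Each responsible node $u$ then has at most $\deg_{G'}(u)\le\Deltath$ edges to serve.

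The simulation of one $\CONGEST$ round runs in a block of $\Deltath$ gossip rounds. At the start, every node fixes the messages it would send to each neighbor in $G'$ in the simulated $\CONGEST$ round. In the $j$-th round of the block, each $u\in U$ contacts the $j$-th neighbor in some fixed order of the edges it is responsible for, if such a neighbor exists. Over the edge $(u,v)$, $u$ sends its $\CONGEST$ message for $v$. In the same round $v$ replies with its $\CONGEST$ message for $u$; this is allowed by the model's convention that a contacted node answers all callers in that round. Each node thus initiates at most one contact per round. Messages stay within $O(\polylog n)$ bits, with $O(\log n)$ bits for a $\CONGEST$ message plus the sender ID. After $\Deltath$ rounds, every edge of $E'$ has been used exactly once in both directions, so each node performs the local computation of the simulated round. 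This uses $O(\Deltath)$ rounds, and the argument is the same for every connected component of $G'$.

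The only point needing care, which I regard as the (mild) main obstacle, is the degree of the non-responsible side. A node outside $U$ may have arbitrarily high degree in $G'$ and may be contacted by many neighbors in one round. The model explicitly permits this, because a node replies to all callers in a round. A node outside $U$ has no edges of its own to serve, so it never needs to contact anyone. I will also check that the reply message is content-independent of the round in which the edge is served, which holds because all $\CONGEST$ messages are fixed at the block's start. Correctness is deterministic and no randomness is involved. Synchrony is maintained because all nodes know $\Deltath$ and hence when each block starts and ends.
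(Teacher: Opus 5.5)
Your proposal is correct and is essentially the paper's proof: nodes outside $U$ stay silent, each node of $U$ contacts its $G'$-neighbors one per round over $\Deltath$ rounds, exchanging the simulated $\CONGEST$ messages, and the vertex cover property guarantees that every edge of $E'$ is served. Your one addition, the smaller-ID tie-break when both endpoints lie in $U$, tacitly requires each node of $U$ to know whether its neighbors are in $U$. This holds in the paper's application with $U = L$, but the lemma does not grant it, so it is simpler to drop the tie-break as the paper does and let both endpoints call: duplicate exchanges carry identical messages and are harmless.
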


\begin{proof}
    The $\GOSSIP-\CONGEST$ algorithm simulating one round of $\CONGEST$ communication on the connected components of $G'$ works as follows. Nodes outside $U$ do not contact any neighbors, whereas nodes in $U$ contact all of their neighbors within $G'$, one at a time. Since a new neighbor can be contacted per round, this takes at most $\Deltath$ rounds. Moreover, the communication that happens when some node $v$ contacts some neighbor $u$ simply consists of an exchange of the messages that $v$ and $u$ send to each other within the simulated round of $\CONGEST$ communication.

    Finally, the correctness of the simulation follows from the fact that for every edge in $E'$, by definition of a vertex cover, at least one of its endpoints is in $U$ and that endpoint communicates over that edge (thus initiating a message exchange).
\end{proof}

\begin{lemma}
\label{lem:lowOutdegreeSimulation}
Let $G' = (V',E')$ be some (possibly disconnected) subgraph of $G$, and consider some edge orientation of $E'$ (where each edge's orientation is known to both endpoints). 
Let $\Deltath \geq 1$ be some integer degree threshold such that the maximum out-degree of $G'$ induced by the edge orientation of $E'$ is at most $\Deltath$. Then, there exists a $\GOSSIP-\CONGEST$ algorithm simulating one round of $\CONGEST$ communication on the (undirected) connected components of $G'$ using $O(\Deltath)$ rounds.
\end{lemma}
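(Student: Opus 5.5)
The plan mirrors the proof of Lemma~\ref{lem:lowDegreeSimulation}: replace the vertex cover by the edge orientation, and let each edge be handled by its tail. In the simulating $\GOSSIP-\CONGEST$ algorithm, every node $u \in V'$ contacts, one per round and in some fixed order (say, increasing neighbor ID), each neighbor $v$ such that the edge $(u,v) \in E'$ is oriented from $u$ to $v$. Since the out-degree of $u$ in $G'$ is at most $\Deltath$, node $u$ finishes after at most $\Deltath$ rounds. All nodes therefore agree to run the phase for exactly $\Deltath$ rounds; this is a fixed, known bound, so the phase ends synchronously everywhere.

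When $u$ contacts $v$, the two nodes exchange the $O(\polylog n)$-bit messages that they would send each other over $(u,v)$ in the simulated $\CONGEST$ round. The reply from $v$ is allowed because, in the $\GOSSIP-\CONGEST$ model (see the footnote in Section~\ref{subsec:intro-model}), a node answers every neighbor that contacts it in a round, even though it initiates only one contact itself. Consequently, a node with large in-degree incurs no round cost: its incoming edges are served by their tails, in parallel.

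For correctness, note that every edge of $E'$ has exactly one tail, and both endpoints know the orientation. Hence every edge is used at least once during the phase, and in both directions, because each contact carries a message exchange. After the phase, each node therefore holds exactly the messages it would have received from its $G'$-neighbors in one $\CONGEST$ round, and then performs the local computation of that round. Edges outside $E'$ are never used. As a result, the simulation runs on the (undirected) connected components of $G'$ exactly as required, in $O(\Deltath)$ rounds.

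There is no real obstacle here. The only point needing care is that each contact must be a two-way exchange: this is what makes out-degree, rather than total degree, the quantity that governs the round cost. The remaining care is simply to fix the number of rounds per phase to $\Deltath$, so that all nodes stay synchronized across the simulated rounds.
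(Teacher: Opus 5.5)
Your proof is correct and follows the paper's argument: each node contacts its out-neighbors in $G'$ one per round, so it finishes within $\Deltath$ rounds, and every edge of $E'$ is handled by its tail through a two-way message exchange. Your extra points, fixing the phase length at $\Deltath$ rounds so all nodes stay synchronized and relying on the model's convention that a node replies to every contact it receives, are details the paper leaves implicit.
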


\begin{proof}
    The $\GOSSIP-\CONGEST$ algorithm simulating one round of $\CONGEST$ communication on the connected components of $G'$ works as follows. All nodes in $V'$ contact all of their outgoing neighbors within $G'$, one at a time. Since a new (outgoing) neighbor can be contacted per round, this takes at most $\Deltath$ rounds. Moreover, the communication that happens when some node $v$ contacts some neighbor $u$ simply consists of an exchange of the messages that $v$ and $u$ send to each other within the simulated round of $\CONGEST$ communication.

    Finally, the correctness of the simulation follows from the fact that for every edge in $e \in E'$, the edge orientation forces $e$ to be an outgoing edge for one of its endpoints, and that endpoint communicates over $e$.
\end{proof}

\subsection{Computing Sparse Subgraphs}
\label{subsec:sparseSubgraphs}

Now, we describe our novel sparsification technique. The sparse subgraph that we obtain satisfies the below definition --- see Definition \ref{def:detourSparseSubgraph}.

\begin{definition}
\label{def:detourSparseSubgraph}
    For any integers $\kappa, \Deltath \geq 1$, we define a \emph{$\kappa$-detour, $\Deltath$-threshold sparse subgraph} $G' = (V, E')$ of $G = (V,E)$ as follows. There exists some (distributedly known) superset $\bar{H}$ of $H$, where $H$ is the set of nodes with degree $\Omega(\Deltath)$ in $G$, such that if we let $L = V \setminus \bar{H}$ denote the remaining nodes, $E_{\bar{H}} = \{\{u,v\} \in E \mid u \in \bar{H}, v \in \bar{H} \}$, $E_L = E \setminus E_{\bar{H}}$, $E_{\bar{H}}' = E_{\bar{H}} \cap E'$ and $E_L' = E_L \cap E'$:
    \begin{enumerate}
        \item $G[E_{\bar{H}}']$ and $G[E_L']$ maintain the connectivity of, respectively, $G[E_{\bar{H}}]$ and $G[E_L]$,\footnote{For clarity, for any $F \subseteq E$, the notation $G[F]$ indicates the subgraph of $G$ whose edge set is $F$, and whose vertex set consists of all of the endpoint nodes of edges in $F$. Note that this implies, here, that the vertex set of $G[E_L]$ may contain nodes in $\bar H$, and that the vertex sets of $G[E_{\bar{H}}]$ and $G[E_L]$ may intersect.}
        \item $G[E_{\bar{H}}']$ is a spanning forest that consists of trees $T_1,\ldots,T_r$ of diameter $D_1, \ldots, D_r$, such that $\sum_{i=1}^r D_i = O(\kappa)$,
        \item $G[E_L']$ is a $O(\log n)$ spanner (forest) of $G[E_L]$,\footnote{For any graph $G$ and stretch $k \geq 1$, a $k$-spanner $G'$ of $G$ is a subgraph of $G$ such that distances within $G'$ are at least as large as those in $G$, but also at most a multiplicative factor $k$ higher. In other words, distances are stretched by at most a $k$ multiplicative factor. For a disconnected $G$, we call $G'$ a $k$-spanner forest if it defines a $k$-spanner for each connected component of $G$.}
        \item There exists some distributedly known edge orientation of $E_L'$ (i.e., each node knows for each incident edge in $E_L'$, whether it is oriented outwards or inwards) such that the maximum out-degree induced by that edge orientation is $O(\log n)$.
    \end{enumerate}  
\end{definition}

Note that if one separates the edge set $E$ of $G$ into $E_{\bar{H}}$ and $E_L$, then any shortest path in $G$ decomposes into a series of (alternating) shortest paths along $G[E_{\bar{H}}]$ and $G[E_L]$ (or vice versa, depending on the shortest paths' starting point). Let $0 \leq \delta \leq 1$ be some known constant, then the natural next step is to sparsify $E$ --- by sparsifying $E_{\bar{H}}$ into $E_{\bar{H}}'$ and $E_L$ into $E_L'$, independently --- into an edge set $E' = E_{\bar{H}}' \cup E_L'$ such that, for any shortest path in $G$, then its components along $G[E_{\bar{H}}]$ are stretched to a total of at most $O(n^{1-\delta} \log n)$ hops within $G[E_{\bar{H}}']$, and its components along $G[E_L]$ are stretched by $O(\log n)$ hops \emph{within a subgraph admitting a known, low maximum out-degree edge orientation}. This latter condition, on the low maximum (out-)degree edge orientation, is crucial to our setting; it allows us to use $\kappa$-detour, $\Deltath$-threshold sparse subgraph subgraphs to obtain efficient rumor spreading algorithms in $\GOSSIP-\CONGEST$.

Next, we describe a $\GOSSIP-\CONGEST$ algorithm for computing a $O(n^{1-\delta} \log n)$-detour, $O(n^{\delta} \log n)$-threshold sparse subgraph of $G$, for any known constant $0 \leq \delta \leq 1$. 
The algorithm takes $\tilde{O}(n^{1-\delta} + n^\delta) = \tilde{O}(n^{\max(1-\delta,\delta)})$ rounds, and consists of three types of phases: (1) a sampling phase, which computes $\bar{H}$,  
(2) several tree-merging phases, which altogether compute $E_{\bar{H}}'$, and
(3) a sparsification phase, which not only computes $E_L'$ but also an (low out-degree) edge orientation of $E_L'$.
We describe these phases below.

\paragraph{Sampling Phase.} The algorithm starts with a sampling phase that consists of $O(n^\delta \log n)$ rounds. Initially, all nodes choose to enter a \emph{star set} $S$ with probability $n^{-\delta}$. Next, all non-star nodes enter either set $H$ or set $L$ --- which stands for high degree and low degree sets respectively --- depending on whether their degree is respectively greater than or equal to some threshold $\Theta(n^\delta \log n)$, or lower than it.
Then, each star node becomes the root of a cluster. After which, each non-star high degree node $v$ takes $\Theta(n^\delta \log n)$ rounds to communicate over $\Theta(n^\delta \log n)$ randomly chosen incident edges (with replacement).
If in any such round, $v$ contacts (i.e., exchanges a message with) a star node, then after these $\Theta(n^\delta \log n)$ rounds, $v$ joins the cluster rooted at that star node (and if there are multiple, $v$ chooses one such star neighbor arbitrarily). On the other hand, each low degree node $u$ takes $\Theta(n^\delta \log n)$ rounds to communicate over all of its incident edges. This ensures that every node knows which incident edges lead to nodes in $L$. (In particular, all nodes in $G[E_L]$ know which incident edges belong to that subgraph, and all other nodes learn that they have no such incident edges, i.e., that they are not in $G[E_L]$.) 

\paragraph{Tree-Merging Phases.} Subsequently, we run $O(\log n)$ tree-merging phases on $G[S \cup H]$. (Note that $S \cup H$ corresponds to the $\bar{H}$ for the sparse subgraph being computed.) 
Initially, the clusters are exactly the clusters formed of a star node and the high degree neighbors having joined that cluster. (Moreover, recall that each node knows, for each incident edge, whether the other endpoint is in $L$.) 
These phases are used to merge these clusters in a controlled manner until all nodes eventually belong to only one cluster.

Each phase takes $O(n^{1-\delta} \log n)$ rounds.
At the start, each cluster flips a coin to choose ``heads'' or ``tails'' with equal probability. If the coin was ``tails'', the cluster then samples an outgoing edge in $G[S \cup H]$ (see Section~\ref{subsec:communication-primitives}, where nodes computes graph sketches only on edges with no endpoints in $L$), and otherwise the cluster waits for some incoming communication. Any sampled outgoing edge is kept only if the other endpoint's cluster chose ``heads''. Finally, the clusters connected together by the (kept) edges are merged together in a larger cluster. Note that the randomized process ensures that if we consider each larger cluster as a (cluster) supergraph, where clusters are the nodes and two clusters are connected by an edge if there is some kept sampled edge between them, then the supergraph has diameter at most $2$. 

After $O(\log n)$ such tree-merging phases, every connected component of $G[S \cup H]$ is spanned by a single tree, with diameter $O(n^{1-\delta} \log n)$, w.h.p. 
Then, each node in $G[S \cup H]$ adds its incident edges within that spanning tree, to the edge set $E_{\bar{H}}' \subseteq E'$ of the output sparse subgraph.

\paragraph{Sparsification Phase.} Finally, we run a sparsification phase on $G[E_L] = (U, E_L) $, where $E_L$ denotes all of the edges incident to any node in $L$. This sparsification phase takes $O(n^\delta \log^2 n)$ rounds and it consists of the following steps. First, we run a $\GOSSIP-\CONGEST$ simulation of the MPX algorithm \cite{MPX13} on $G[E_L]$ using $O(n^\delta \log^2 n)$ rounds --- $L$ is a ``low degree'' vertex cover of $G[E_L]$, thus we can use the simulation from Lemma \ref{lem:lowDegreeSimulation}. By simulating the MPX algorithm, we partition $U$ into clusters of diameter $O(\log n)$ w.h.p., and such that for each node $v \in U$, the neighbors of $v$ are contained within at most $O(\log n)$ clusters w.h.p. Subsequently, after the simulation of MPX is complete, each MPX cluster's root broadcasts its ID to all of the cluster nodes in $O(\log n)$ rounds (see Proposition \ref{prop:treePrimitives} in Section \ref{subsec:communication-primitives}). After which, nodes in $L$ simulate one round of $\CONGEST$ communication over $G[E_L]$, using $O(n^\delta \log^2 n)$ rounds, so that all nodes of $G[E_L]$ learn the cluster's ID of each of their neighbors (notice that for a given node, several of its neighbors might belong to the same cluster). Then, each node in $G[E_L]$ keeps a single edge per neighboring cluster, as well as the edge leading to its parent (if it has one) within its MPX cluster, and adds these edges to the edge set $E_L' \subseteq E'$ of the output sparse subgraph. Moreover, each node initially orients these edges outwards, but nodes in $L$ simulate a final round of $\CONGEST$ communication over $G[E_L]$ so that (i) the edge orientation is known to both endpoints, and (ii) if both endpoints initially oriented the same edge outwards (which would result in an invalid edge orientation), then the edge is oriented outwards from the endpoint with largest ID.

\paragraph{Analysis.} We start by giving two lemmas for the sampling phase.

\begin{lemma}
\label{lem:concentrationOfStars}
    With high probability, it holds that:
    \begin{enumerate}
        \item There are at most $O(n^{1-\delta} \log n)$ star nodes, w.h.p.
        \item Any high-degree node has a star neighbor, w.h.p. More precisely, any high-degree node $v$ has $\Theta(d(v) n^{-\delta})$ star neighbors w.h.p.
    \end{enumerate}
\end{lemma}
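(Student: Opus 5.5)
Both items follow from the Chernoff bounds of Lemma~\ref{lem:ChernoffBound} applied to the independent star-sampling coins, together with a union bound over at most $n$ nodes.

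\emph{Item 1.} Let $X_v$ indicate that node $v$ joins the star set $S$. The $X_v$ are independent with $\Pr[X_v=1]=n^{-\delta}$, so $X=\sum_v X_v$ has mean $\mu = n^{1-\delta}$. This mean may be small, e.g.\ when $\delta=1$. So I would apply part (3) of Lemma~\ref{lem:ChernoffBound} with $\mu_H = \max\{\mu, d\log n\}$ for a suitable constant $d$ and with $\delta=1$ (or larger). This gives
\[
\Pr\bigl[X \geq 2\mu_H\bigr] \leq \exp(-\mu_H/3) \leq n^{-d/3}.
\]
Hence $|S| = O(n^{1-\delta} + \log n) = O(n^{1-\delta}\log n)$ with high probability.

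\emph{Item 2.} Fix a high-degree node $v$, so $d(v) \geq \gamma n^\delta \log n$ for the threshold constant $\gamma$. Let $Y_v$ count the star neighbors of $v$. Then $Y_v$ is a sum of $d(v)$ independent indicators, with mean $\mu_v = d(v)n^{-\delta} \geq \gamma \log n$. Applying parts (1) and (2) of Lemma~\ref{lem:ChernoffBound} with $\delta = 1/2$ gives
\[
\mu_v/2 \leq Y_v \leq 3\mu_v/2
\]
except with probability $2\exp(-\mu_v/12) \leq 2n^{-\gamma/12}$. Choosing $\gamma$ large enough makes this failure probability $n^{-d-1}$. A union bound over the at most $n$ high-degree nodes then gives $Y_v = \Theta(d(v)n^{-\delta})$, and in particular $Y_v \geq 1$, for every high-degree $v$ simultaneously, with high probability.

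The only subtlety is in item 2. A high-degree node $v$ that is itself a star is excluded from $H$ by definition, but its star status is independent of its neighbors' coins, so the mean computation is unaffected. The constant in the degree threshold $\Theta(n^\delta\log n)$ must be chosen large relative to the desired high-probability exponent, and this choice is the one point that needs care. Finally, a union bound over item 1 and item 2 yields the combined statement.
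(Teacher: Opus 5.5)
Your proof is correct and follows the same route as the paper: Chernoff bounds (Lemma~\ref{lem:ChernoffBound}) on the independent star coins, applied once to the global star count and once per high-degree node to its star neighbors, followed by a union bound over nodes. You are more careful than the paper in two places: you pad the mean to $\max\{n^{1-\delta}, d\log n\}$ so the tail bound stays meaningful when $\delta$ is close to $1$, and you state explicitly that the threshold constant must grow with the desired failure exponent.
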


\begin{proof}
    Recall that each node becomes a star node with probability $n^{-\delta}$, independently and uniformly at random. Hence, out of all nodes, $O(n^{1-\delta})$ become star nodes in expectation.
    By a simple application of the Chernoff bounds in Lemma \ref{lem:ChernoffBound}, we see that $O(n^{1-\delta} \log n)$ nodes becomes stars w.h.p., thus proving the first item. As for the second item, note that any high degree node $v$, i.e., with degree $d(v) = \Omega(n^{\delta} \log n)$, has in expectation $d(v) n^{-\delta} = \Omega(\log n)$ stars as neighbors. Once again, by a simple application of the Chernoff bounds in Lemma \ref{lem:ChernoffBound}, we see that any high degree node has $\Theta(d(v) n^{-\delta}) = \Omega(\log n)$  star neighbors w.h.p., thus proving the second item.
\end{proof}

\begin{lemma}
\label{lem:starClusters}
    Within the $\Theta(n^{\delta} \log n)$ rounds of the sampling phase, any non-star high degree node chooses an edge leading to a star neighbor at least once, w.h.p.
\end{lemma}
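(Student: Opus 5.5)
We condition on the high-probability event of Lemma~\ref{lem:concentrationOfStars}, item 2: every high-degree node $v$ has $\Theta(d(v) n^{-\delta})$ star neighbors. For a fixed non-star high-degree node $v$, we then lower bound the probability that its $\Theta(n^\delta \log n)$ independent uniform edge samples (with replacement) all miss the star neighbors. A union bound over all nodes finishes the argument.

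\emph{Per-sample success probability.} Each round, $v$ picks an incident edge uniformly at random among its $d(v)$ edges. Since star membership is decided independently of $v$'s sampling, the pick leads to a star with probability
\[
\frac{\#\text{star neighbors}}{d(v)} \;\geq\; \frac{a\, d(v) n^{-\delta}}{d(v)} \;=\; a\, n^{-\delta}
\]
for some constant $a>0$ coming from the Chernoff lower tail in Lemma~\ref{lem:concentrationOfStars}.

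\emph{Failure probability.} The samples are independent across rounds, so the probability that none of the $b\, n^\delta \log n$ samples hits a star is at most
\[
(1 - a n^{-\delta})^{b n^\delta \log n} \;\leq\; \exp(-ab \log n) \;=\; n^{-ab}.
\]
Choosing the constant $b$ in the $\Theta(n^\delta \log n)$ round count large enough relative to $a$ makes this at most $n^{-(d+1)}$ for any desired $d$.

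\emph{Union bound.} We take a union bound over the at most $n$ high-degree nodes, together with the failure probability of Lemma~\ref{lem:concentrationOfStars}. This gives the statement with probability $1 - n^{-d}$.

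The only delicate point is that the constant $a$ must be positive and independent of $v$. This requires the degree threshold $\Theta(n^\delta \log n)$ to be large enough that the expected number of star neighbors, $d(v)n^{-\delta}$, is at least $C\log n$ for a large constant $C$. Then the Chernoff bound of Lemma~\ref{lem:ChernoffBound} (item 1, with $\delta = 1/2$ and $\mu_L = d(v)n^{-\delta}$) gives at least half the expectation w.h.p. So we fix the threshold constant first, and then the sampling-length constant $b$.
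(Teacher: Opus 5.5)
Your proposal is correct and follows the paper's argument: condition on item 2 of Lemma~\ref{lem:concentrationOfStars}, observe that each independent uniform sample hits a star neighbor with probability $\Theta(n^{-\delta})$, and conclude over the $\Theta(n^{\delta}\log n)$ rounds. The only difference is minor. You bound the all-miss probability directly by $(1-an^{-\delta})^{bn^{\delta}\log n}\le n^{-ab}$, whereas the paper applies a Chernoff bound to the $\Theta(\log n)$ expected star contacts.
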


\begin{proof} 
    By Lemma \ref{lem:concentrationOfStars}, any non-star high degree node $v$ has $\Theta(d(v) n^{-\delta})$ star neighbors w.h.p. Now, in each round of the sampling phase, node $v$ contacts a neighbor chosen independently and uniformly at random. Hence, in each round node $v$ contacts a star neighbor with probability $\frac{\Theta(d(v) n^{-\delta})}{d(v)} = \Theta(n^{-\delta})$. Hence, over $\Theta(n^{\delta} \log n)$ rounds, $v$ contacts in expectation $\Theta(\log n)$ star neighbors. Finally, by a simple application of the Chernoff bounds in Lemma \ref{lem:ChernoffBound}, we see that during the sampling phase, $v$ contacts at least one star neighbor w.h.p.
\end{proof}

Next, we prove that the tree-merging phases indeed compute trees spanning each connected component of $G[S \cup H]$, and whose diameter is linear in the number of (star) clusters present initially in the component.

\begin{lemma}
\label{lem:treeMergingStars}
    Recall that $E_{\bar{H}}'$ are the edges computed throughout the $O(\log n)$ tree-merging phases. Overall, these tree-merging phases take $O(n^{1-\delta} \log^2 n)$ rounds, and ensure that w.h.p., each connected component of $G[S \cup H]$ induces a spanning tree in $G[E_{\bar{H}}']$, of diameter $O(X)$, where $X$ is the number of stars in that component.
\end{lemma}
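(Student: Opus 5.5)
Three things need to be shown for Lemma~\ref{lem:treeMergingStars}: the trees keep diameter $O(X)$ (in fact $O(\text{number of star clusters merged})$), each phase fits in the $O(n^{1-\delta}\log n)$ budget, and $O(\log n)$ phases suffice for every component of $G[S \cup H]$ to collapse into a single cluster w.h.p. We fix one connected component $K$ of $G[S\cup H]$ with $X$ stars, argue about $K$, and finish with a union bound over components and phases. Initially, Lemma~\ref{lem:starClusters} shows that w.h.p.\ every non-star node of $H$ joins a star's cluster. So $K$ is partitioned into exactly $X$ initial clusters, each a star of diameter at most $2$.

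\textbf{Diameter invariant.} We will maintain that any cluster formed from $j$ initial star clusters is spanned by a tree of diameter at most $3j$. A merge in one phase joins one ``heads'' cluster $C_0$ with several ``tails'' clusters $C_1,\dots,C_t$, each attached by a single kept edge into $C_0$. Every path in the merged tree crosses at most two of these new edges, so it visits $C_0$ and at most two of the tails clusters. Its length is therefore at most $\mathrm{diam}(C_0)+\mathrm{diam}(C_a)+\mathrm{diam}(C_b)+2 \le 3(j_0+j_a+j_b)$, which is at most $3$ times the total number of star clusters merged. Hence the diameter stays $O(X)$ and, by Lemma~\ref{lem:concentrationOfStars}, it is always $O(n^{1-\delta}\log n)$.

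\textbf{Per-phase cost.} Since every current tree has depth $O(n^{1-\delta}\log n)$, the coin broadcast, sketch-based outgoing-edge sampling and merge bookkeeping cost $O(n^{1-\delta}\log n)$ rounds by Propositions~\ref{prop:treePrimitives} and~\ref{prop:gossip-sampling-edges}. Re-rooting a merged tree at the heads root also costs $O(n^{1-\delta}\log n)$ rounds, via one broadcast in which each node learns its new parent, as in Proposition~\ref{lem:reorientSuperCluster}. Summing over $O(\log n)$ phases gives $O(n^{1-\delta}\log^2 n)$ rounds. To restrict sketching to $G[S\cup H]$, nodes simply drop edges incident to $L$, which they can do because the sampling phase taught every node which neighbours lie in $L$.

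\textbf{Progress, the main obstacle.} We must show the number of clusters in $K$ drops by a constant factor per phase in expectation. Any cluster that is not alone in $K$ has an outgoing edge in $G[S\cup H]$. With probability $1/4$ it flips tails and the other endpoint's cluster flips heads, using independence of the coins from the sketch randomness. The sketch then succeeds with constant probability (Lemma~\ref{lem:graph-sketch-works}), boosted to w.h.p.\ by running $O(\log n)$ independent sketches. So each cluster merges into another with probability at least some constant $p$, and since every merge shrinks the count, the expected number of clusters falls by a factor at most $1-p/2$. After $O(\log n)$ phases, Markov's inequality shows the expected count drops below $n^{-d}$. Hence $K$ is a single tree w.h.p., and a union bound over at most $n$ components gives the result. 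The delicate points are the dependence between the coin flips and the sampled edge, and ensuring the sketch failure probability is independent of the merge structure; I would handle both by conditioning on the sketch outcomes and using only the coins' independence.
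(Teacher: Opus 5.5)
Your proposal has the same skeleton as the paper's proof. Bounded cluster diameter gives $O(n^{1-\delta}\log n)$ rounds per phase. The tails/heads coins give every non-isolated cluster a constant merge probability, so the number of clusters in a component shrinks geometrically in expectation, and Markov's inequality finishes. Your factor $1-p/2$ is looser than needed, since each merging tails cluster removes exactly one cluster, but it is valid.

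The genuine difference is the diameter bound. The paper does not track merges at all. Every non-star node stays adjacent in the spanning tree to the star it joined, so the at most $X$ stars form a dominating set of the tree. Any vertex dominates at most three vertices of a path in a tree, so the diameter is at most $3X$; the phase-indexed bound $5^{i+1}$ in the paper plays no role. This observation needs no induction and does not depend on how clusters are merged. You instead induct on the number $j$ of star clusters in a merged cluster, using the star shape of each merge. This reaches the same per-cluster bound but with more bookkeeping.

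Your inductive step does not close as written. It gives
\[
\mathrm{diam}(C_0)+\mathrm{diam}(C_a)+\mathrm{diam}(C_b)+2 \le 3j_0+3j_a+3j_b+2,
\]
which exceeds $3(j_0+j_a+j_b)$. The fix is to strengthen the invariant to $3j-1$. The base case holds, since a star cluster has diameter at most $2=3\cdot 1-1$. The step then gives $(3j_0-1)+(3j_a-1)+(3j_b-1)+2=3(j_0+j_a+j_b)-1$, and the case with a single tails cluster is analogous. This is exactly the domination bound $3X-1$.

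Separately, ``the expected count drops below $n^{-d}$'' cannot hold literally, because at least one cluster always remains. Apply the contraction to the excess $n_i-1$ instead. If $n_i\ge 2$, then $\mathbb{E}[n_{i+1}-1 \mid n_i]\le (1-p/2)n_i-1\le (1-p/2)(n_i-1)$. The state $n_i=1$ is absorbing. Hence $\Pr[n_i\ge 2]\le \mathbb{E}[n_i-1]\le (1-p/2)^i X$, which is polynomially small after $O(\log n)$ phases. The paper's write-up glosses over the same point.
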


\begin{proof}
    Consider any one connected component $\mathcal{C}$ of $G[S \cup H]$, and let $X$ be the number of stars within that component. By Lemma \ref{lem:concentrationOfStars}, $X = O(n^{1-\delta} \log n)$ w.h.p.
    Next, we show by induction that for any tree-merging phase $i \geq 0$, the diameter of any cluster in $\mathcal{C}$ when the phase ends is at most $\min\{5^{i+1}, 3X\}$. The base case is simple, since initially all clusters have diameter at most 2 (and if $X = 0$, there are no clusters). Next, assume the induction hypothesis holds for some $i \geq 0$. Then, in phase $i+1$, the clusters' diameter can grow to at most $3*5^i + 2 \leq 5^{i+1}$ 
    (since the diameter of the cluster supergraph is at most 2.)
    Moreover, it is easy to see that for each cluster, there are at most $X$ stars contained within, and the stars are a dominating set on the cluster's spanning tree.  Hence, the diameter of any cluster in $\mathcal{C}$ is at most  $\min\{5^{i+1}, 3X\}$. 
    This proves the induction step. Note that this inductive statement proves that each tree-merging phase can be executed in this model within $O(n^{1-\delta} \log n)$ rounds.

    Finally, we show for each tree-merging phase $i \geq 0$, if we condition on the phase starting with some $n_i \geq 2$ clusters, then the phase ends with at most $3n_i/4$ clusters in expectation. First, note since all nodes know which incident edges lead to $L$, then all outgoing sampled edges are in $G[S \cup H]$. Thus, any cluster $C$ merges into another cluster if it flips tails, and looking at its outgoing edge, the cluster containing the other endpoint flipped heads. 
    Hence, the probability that any other combination (of heads and tails for the two clusters) happens, which is at most $3/4$, is an upper bound on the probability that $C$ does not merge with any cluster. The statement follows, and using the law of total expectation, we can show that if there are initially $X$ clusters in component $\mathcal{C}$, then by the end of phase $i \geq 1$, the expected number of clusters remaining is $X\cdot(3/4)^i \leq n \cdot (3/4)^i$. Finally, it suffices to use Markov's inequality to show that after some $O(\log n)$ tree-merging phases, taking altogether $O(n^{1-\delta} \log^2 n)$ rounds, a single cluster $C^*$ remains with high probability, and component $\mathcal{C}$ is spanned by the cluster tree of $C^*$.
\end{proof}

Finally, we show that the sparsification phase not only reduces the edge set of each connected component of $G[E_L]$ down to an $O(\log n)$ stretch spanner, but also computes an $O(\log n)$ maximum out-degree edge orientation for that spanner.

\begin{lemma}
\label{lem:sparsificationPhase}
    Recall that $E_L'$ are the edges computed in the sparsification phase. The sparsification phase takes $O(n^{\delta} \log^2 n)$ rounds, and ensures that $G[E_L']$ is an $O(\log n)$-spanner (forest) of $G[E_L]$. Furthermore, it ensures that maximum out-degree of $G[E_L']$ induced by the edge orientation computed in the sparsification phase is $O(\log n)$. 
\end{lemma}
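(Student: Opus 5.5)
The proof has three parts: the round complexity, the spanner property, and the out-degree bound.

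\textbf{Round complexity.} I will add up the costs of the steps of the sparsification phase. Every node of $L$ has degree $O(n^\delta \log n)$, and $L$ is a vertex cover of $G[E_L]$, since every edge of $E_L$ has an endpoint in $L$. So Lemma~\ref{lem:lowDegreeSimulation} with $\Deltath = O(n^\delta \log n)$ simulates one $\CONGEST$ round on $G[E_L]$ in $O(n^\delta\log n)$ rounds. The MPX algorithm \cite{MPX13} runs in $O(\log n)$ $\CONGEST$ rounds w.h.p., since the exponential shifts are $O(\log n)$ w.h.p. Its simulation therefore costs $O(n^\delta \log^2 n)$ rounds. The broadcast of cluster IDs costs $O(\log n)$ rounds by Proposition~\ref{prop:treePrimitives}. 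This requires each MPX cluster to be spanned by a BFS-like tree of depth $O(\log n)$ whose tree edges lie in $E_L$, and MPX provides such a tree. Contacting a parent is a single gossip call per round, so nothing else is needed here. The two final simulated $\CONGEST$ rounds cost $O(n^\delta\log n)$ each. The total is $O(n^\delta\log^2 n)$.

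\textbf{Spanner property.} I will use the standard MPX guarantees: each cluster is spanned by a tree of radius $O(\log n)$ w.h.p., and each node $v$ has neighbors in only $O(\log n)$ distinct clusters w.h.p. Take any edge $\{u,w\}\in E_L$. If $u$ and $w$ lie in the same cluster, the tree paths give a path of length $O(\log n)$ between them in $E_L'$. Otherwise, $u$ kept some edge $\{u,w'\}$ with $w'$ in the cluster of $w$. Then $u \to w' \to$ (cluster root) $\to w$ is a path of length $1+O(\log n)$ inside $E_L'$. Every edge of $E_L$ is thus stretched by $O(\log n)$. Applying this along any shortest path gives the multiplicative $O(\log n)$ stretch, and in particular connectivity of each component is preserved.

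\textbf{Out-degree bound.} Initially each node orients outward its parent edge plus one edge per neighboring cluster, which is $1+O(\log n)$ edges. The conflict resolution in the last round only reverses edges that both endpoints had claimed as outgoing, so it never increases any node's out-degree. Every edge of $E_L'$ was claimed by at least one endpoint, so every edge ends up with a valid orientation that both endpoints know, and the maximum out-degree is $O(\log n)$.

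The main obstacle is justifying the MPX guarantees inside the simulation. Each node must draw its exponential shift, and the MPX $\CONGEST$ implementation must need only $O(\log n)$ rounds w.h.p. (capping the shifts at $O(\log n)$). Nodes of $\bar H$ that appear in $G[E_L]$ participate only through their $E_L$ edges, which is exactly what the vertex-cover simulation provides. The "$O(\log n)$ neighboring clusters" bound is the standard memorylessness argument for exponential shifts, and I would cite \cite{MPX13} for it. A union bound over all nodes then makes every guarantee hold w.h.p.
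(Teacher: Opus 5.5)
Your proposal is correct and takes essentially the same route as the paper. Both proofs simulate MPX on $G[E_L]$ through the low-degree vertex cover $L$ (Lemma~\ref{lem:lowDegreeSimulation}), giving $O(n^\delta\log^2 n)$ rounds. Both bound the out-degree by the parent edge plus one kept edge per neighboring cluster, using the standard w.h.p.\ $O(\log n)$ adjacent-clusters guarantee; the paper cites Corollary~3.9 of \cite{HW16} for this, where you cite \cite{MPX13}. Both obtain the $O(\log n)$ stretch by replacing each edge with at most one kept inter-cluster edge plus $O(\log n)$ cluster-tree edges.
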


\begin{proof}
First, we point out that executing the MPX algorithm of \cite{MPX13} on some $n$-node graph $G = (V,E)$ takes $O(\log n)$ rounds, and outputs a low diameter graph decomposition of $G$: that is, a partition of the vertex set $V$ into subsets $V_1,\ldots,V_k$ such that each cluster $V_i$ has strong diameter at most $O(\log n)$ and is spanned by a tree of depth $O(\log n)$.
Since, by definition, $L$ is a vertex cover of $E_L$ and nodes in $L$ have at most $O(n^\delta \log n)$ incident edges in $G$, then by Lemma \ref{lem:lowDegreeSimulation}, the $O(\log n)$ rounds of the MPX algorithm on (the connected components of) $G[E_L]$ can indeed be simulated by a $O(n^\delta \log^2 n)$ round $\GOSSIP-\CONGEST$ algorithm. Furthermore, at the end of the MPX algorithm, each node with a parent in its MPX cluster's tree knows which incident edge leads to that parent.

After computing such a low diameter graph decomposition, broadcasting the ID of each cluster's root over each cluster's spanning tree takes $O(\log n)$ rounds by Proposition \ref{prop:treePrimitives}. Then, by Lemma \ref{lem:lowDegreeSimulation}, a single round of communication over (the connected components of) $G[E_L]$ can indeed be simulated by a $O(n^\delta \log n)$ round $\GOSSIP-\CONGEST$ algorithm. At which point, each node in $G[E_L]$ knows, for each incident edge $e \in E_L$, whether the other endpoint of edge $e$ lies in the same cluster. 

Thus, by the algorithm description, each node in $G[E_L]$ adds one incident edge to $E_L'$ per neighboring cluster (as well as the edge leading to its parent in the MPX cluster). Now, note that Corollary 3.9 of \cite{HW16} proves that with high probability, each node $v \in V$ neighbors $O(\log n)$ clusters, or in other words, the neighbors of $v$ belong to at most $O(\log n)$ clusters. Hence, by orienting these edges (and the edge leading to the parent in the MPX cluster) outwards from $v$, the resulting edge orientation induces a maximum out-degree in $G[E_L']$ of $O(\log n)$. (If two endpoints of an edge both attempt to orient it outwards, the nodes break the tie and choose the direction arbitrarily, for example, by directing it outwards from the highest ID. Notice that this does not increase the maximum out-degree but may only decrease it.)
As for the $O(\log n)$ stretch, consider any shortest path from any two nodes $u,v$ within one connected component of $G[E_L]$. (Note that we ignore the edge orientation here.) Then, any edge in that path is replaced in $G[E_L']$ by a path of at most $O(\log n)$ hops that includes at most one inter-cluster edge, and $O(\log n)$ edges within a cluster's spanning tree. 
\end{proof}

Finally, we show that our sparsification technique indeed computes an $O(n^{1-\delta} \log n)$-detour, $O(n^{\delta} \log n)$-threshold sparse subgraph of $G$.

\begin{theorem}
\label{thm:sparsificationGossip}
    For any constant $0 \leq \delta  \leq 1$, there exists a $\GOSSIP-\CONGEST$ algorithm that computes a $O(n^{1-\delta} \log n)$-detour, $O(n^{\delta} \log n)$-threshold sparse subgraph of $G$, in $O(n^{\max(1-\delta,\delta)} \log^2 n)$ rounds.
\end{theorem}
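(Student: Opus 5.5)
The plan is to show that the three-phase algorithm above (sampling, tree-merging, sparsification) outputs a subgraph satisfying each item of Definition~\ref{def:detourSparseSubgraph}, with $\kappa = O(n^{1-\delta}\log n)$ and $\Deltath = \Theta(n^\delta \log n)$. The round complexity is then the sum of the phase costs, and we conclude with a union bound over the polynomially many w.h.p.\ events.

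\textbf{Choice of $\bar H$ and the threshold.} Take $\bar H = S \cup H$. By construction, $H$ contains every non-star node of degree at least the threshold $\Theta(n^\delta\log n)$, and star nodes are in $S$. So $\bar H$ is a superset of the high-degree nodes, and it is distributedly known after the sampling phase, since each node knows whether it is a star and knows its own degree. Moreover, $L = V\setminus \bar H$ consists exactly of the non-star nodes of degree below the threshold. Hence $L$ is a vertex cover of $G[E_L]$ whose degrees are $O(n^\delta\log n)$. This is the hypothesis needed by Lemma~\ref{lem:lowDegreeSimulation} in the sparsification phase.

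\textbf{Items (1)--(2).} $E_{\bar H}'$ consists of the tree edges produced by the merging phases. These phases only ever sample edges with both endpoints in $S\cup H$, and after the sampling phase every node knows which of its edges lead to $L$.

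First, every node of $H$ must lie in some initial cluster; otherwise it would be absent from the merging structure. This follows from Lemma~\ref{lem:starClusters}, which says that every non-star high-degree node joins a star cluster w.h.p. Given this, Lemma~\ref{lem:treeMergingStars} shows that each connected component of $G[S\cup H] = G[E_{\bar H}]$ is spanned by a single tree. This gives connectivity preservation and the spanning-forest structure. Note also that isolated nodes of $\bar H$ contribute no edges, so they are irrelevant here.

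The same lemma bounds the diameter of the tree of a component by $O(X_j)$, where $X_j$ is the number of stars in component $j$. The components are disjoint, so $\sum_j D_j = O(\sum_j X_j) = O(|S|)$, which is $O(n^{1-\delta}\log n)$ by Lemma~\ref{lem:concentrationOfStars}. This bound is the $\kappa$-detour property.

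\textbf{Items (1), (3), (4) for $E_L'$.} These follow directly from Lemma~\ref{lem:sparsificationPhase}. An $O(\log n)$-spanner forest preserves the connectivity of each component of $G[E_L]$. The orientation with maximum out-degree $O(\log n)$ is known to both endpoints after the final simulated $\CONGEST$ round.

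\textbf{Runtime.} The sampling phase takes $O(n^\delta\log n)$ rounds. The merging phases take $O(n^{1-\delta}\log^2 n)$ rounds by Lemma~\ref{lem:treeMergingStars}. The sparsification phase takes $O(n^\delta\log^2 n)$ rounds by Lemma~\ref{lem:sparsificationPhase}. The total is $O(n^{\max(1-\delta,\delta)}\log^2 n)$.

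\textbf{Main obstacle.} The step needing the most care is the merging runtime. Each merging phase must run for a fixed, globally known $O(n^{1-\delta}\log n)$ rounds, which requires the depth of every cluster tree to stay $O(n^{1-\delta}\log n)$ throughout, not only at the end. For this I would invoke the inductive bound $\min\{5^{i+1},3X\}$ from the proof of Lemma~\ref{lem:treeMergingStars} together with $X \le |S| = O(n^{1-\delta}\log n)$. With this bound, the sketch-based outgoing-edge sampling of Proposition~\ref{prop:gossip-sampling-edges} completes within each phase.
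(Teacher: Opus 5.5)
Your proposal is correct and matches the paper's proof. You take $\bar H = S \cup H$, obtain items (1)--(2) for $E_{\bar H}'$ from Lemma~\ref{lem:treeMergingStars} together with the star count of Lemma~\ref{lem:concentrationOfStars}, obtain items (1), (3), (4) for $E_L'$ from Lemma~\ref{lem:sparsificationPhase}, and add up the three phase costs, exactly as the paper does. Your extra remarks only make explicit points the paper handles inside those lemmas: citing Lemma~\ref{lem:starClusters} so every node of $H$ starts in a star cluster, and using the bound $\min\{5^{i+1},3X\}$ to justify the fixed per-phase merging time.
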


\begin{proof}
Let $\bar{H} = S \cup H$, then $\bar{H}$ is a superset of $H$, the nodes with degree at least $\Omega(n^{\delta} \log n)$ in $G$. Moreover, by the algorithm description, $G' = (V, E')$ where $E' = E_{\bar{H}'} \cup E_L'$. Next, consider $G[E_{\bar{H}}']$. By Lemma \ref{lem:treeMergingStars}, each connected component $G[\bar{H}] = G[E_{\bar{H}}]$ induces a spanning tree in $G[E_{\bar{H}}']$, thus maintains the connectivity of $G[E_{\bar{H}}]$. Moreover, each such spanning tree $T_i$ has diameter $D_i = O(X_i)$, where $X_i$ is the number of stars in $T_i$. Since there are at most $O(n^{1-\delta} \log n)$ star nodes w.h.p., by Lemma \ref{lem:concentrationOfStars}, then it follows that $\sum_{i=1}^r D_i = O(n^{1-\delta} \log n)$.
Finally, consider $G[E_L']$.  Then, by Lemma \ref{lem:sparsificationPhase}, $G[E_L']$ is an $O(\log n)$-spanner (forest) of $G[E_L]$. In particular, this implies that $G[E_L']$ maintains the connectivity of $G[E_L]$. Furthermore, Lemma \ref{lem:sparsificationPhase} shows that the edge orientation computed during the sparsification phase (and which is distributedly known) induces a maximum out-degree of $O(\log n)$ on $G[E_L]$. Hence, $G'$ is indeed a $O(n^{1-\delta} \log n)$-detour, $O(n^{\delta} \log n)$-threshold sparse subgraph of $G$.

\begin{sloppypar}
Finally, the runtime for computing the sparse subgraph $G'$ is $O(n^\delta \log n) + O(n^{1-\delta} \log^2 n) + O(n^{\delta} \log^2 n) = O(n^{\max(1-\delta,\delta)} \log^2 n)$.
\end{sloppypar}
\end{proof}

\subsection{Rumor Spreading}
\label{subsec:rumorSpreading}

We now describe our rumor spreading algorithm. We start by computing a $O(\sqrt{n} \log n)$-detour, $O(\sqrt{n} \log n)$-threshold sparse subgraph $G' = (V,E')$ in $\tilde{O}(\sqrt n)$ rounds --- see Theorem \ref{thm:sparsificationGossip}. Let $\bar{H}$ denote the (distributedly known) node superset of the nodes with degree $\Omega(\sqrt{n} \log n)$ in $G$, and 
$E_{\bar{H}}'$ and $E_L'$ the two edge subsets of $E'$ as described in Definition \ref{def:detourSparseSubgraph}. 

Next, all nodes participate in $O((D + \sqrt n) \log n)$ multiple rumor spreading phases, each of which consists of $O(\log n)$ rounds, and terminate. Each phase ensures that the rumor spreads by at least one hop along any shortest path in $G'$. 
More precisely, during each phase, the first round consists of each node in $\bar{H}$ contacting their parent within $E_H'$ --- recall that $G[E_{\bar{H}}']$ is a spanning forest --- and both nodes exchange the rumor, if they have it, along that edge. 
The remainder of the phase's rounds are used to simulate one round of $\CONGEST$ communication between the nodes in $G[E_L']$. Note that such a round can indeed be simulated with a $O(\log n)$ blow up in $\GOSSIP-\CONGEST$, as nodes (distributedly) know an edge orientation of $G[E_L']$ inducing a maximum out-degree of $O(\log n)$ --- see Lemma \ref{lem:lowOutdegreeSimulation}.

\rumorSpreadingGeneral*

\begin{proof} 
\begin{sloppypar}
By Theorem \ref{thm:sparsificationGossip}, computing a $O(\sqrt{n} \log n)$-detour, $O(\sqrt{n} \log n)$-threshold sparse subgraph $G' = (V,E')$ takes $O(\sqrt n \log^2 n)$ rounds. Then, it remains to show that the rumor is spread, from some source node $u$ to all nodes in $G$ via $G'$, in $\tilde{O}(D+\sqrt n)$ rounds.
\end{sloppypar}

Next, let $u$ be the source that initially holds the rumor. Then, by the algorithm description and the properties of $G'$, we can show that the rumor spreads by one hop within $G'$ per phase. Indeed, the first round of a phase suffices to spread the rumor by one hop along $G[E_{\bar{H}}]$, since any node $v$ holding the rumor within $G[E_{\bar{H}}]$ lies in a tree (by definition of $G'$), contacts its parent and is contacted by its children, and thus by the algorithm description, exchanges the rumor with its parent and its children within that round. As for spreading the rumor by one hop along $G[E_L']$, note that the distributedly known edge orientation of $G[E_L']$ induces a maximum out-degree of $O(\log n)$, thus by Lemma \ref{lem:lowOutdegreeSimulation}, $O(\log n)$ rounds of $\GOSSIP-\CONGEST$ can simulate a single round of $\CONGEST$ communication in which each node within $G[E_L']$ with the rumor sends it to all of its neighbors (in $G[E_L']$). To sum up, a rumor spreading phase succeeds in spreading the rumor by (at least) one hop along $G[E']$.

Finally, it remains to consider the stretch induced by $G'$ compared to $G$. On $G[E_L]$, $G'$ induces a multiplicative, $O(\log n)$ stretch. On $G[E_{\bar{H}}]$, $G'$ induces an additive, $O(\sqrt n \log n)$ stretch (over the sum of all spanning trees). Since the diameter $D$ upper bounds the distance from $u$ to any other node $v \in V$ within $G$, we get that the rumor spreading in $G'$ succeeds in $O((D +\sqrt n) \log n)$ rumor spreading phases, and thus in  $O((D +\sqrt n) \log^2 n)$ rounds.
\end{proof}

\paragraph{Removing Knowledge of $D$.} Note that the rumor spreading algorithm's description given above, and the obtained runtime, relies on all nodes knowing constant factor upper bounds on both the diameter $D$ and the number of nodes $n$. However, it suffices to know a constant factor upper bound on $n$ only, since we can estimate the diameter $D$ in an exponentially growing fashion. 

To do so, modify the above described rumor spreading algorithm by having nodes start with the same, but constant, estimate of $D$ --- denoted by $D_{est}$. Then, nodes run the above algorithm for $\tilde{O}(D_{est}+\sqrt n)$ rounds, and keep track of which neighbor they first received the rumor from. This forms a tree, rooted in the node with the rumor, and the tree spans $G$ if and only if the rumor has spread to every node. This provides a simple condition with which nodes can detect whether the rumor spreading algorithm has terminated. We describe the detection mechanism.

Every node that received the rumor, must be part of some tree. These nodes detect termination as follows. 
First, they aggregate information up the tree, so that the root can sample an outgoing edge w.h.p., if there exists one (see Section~\ref{subsec:communication-primitives}). Following which, the root executes a broadcast. Depending on whether the sampling procedure succeeds or not --- where if the procedure fails, then this implies that there are no outgoing edges --- the broadcast informs all nodes in the tree to, respectively, double their estimate $D_{est}$ or to terminate. 
Note that all nodes without the rumor, and by extension not within the tree, know that the estimate $D_{est}$ is incorrect and thus double it. Since all nodes know the same constant approximation of $n$ and the same estimate $D_{est}$, they can wait until the correct round to start the next iteration of the algorithm with the new estimate.

\section{Applications}
\label{sec:apps}

We  outline how our gossip algorithms can be used to solve other fundamental problems, such as computing aggregate functions, MST, and leader election. 

As mentioned earlier, both our gossip algorithms construct spanning trees over the graph as a byproduct, where the diameter of the spanning tree is upper bounded by the run time of the algorithm. Such a spanning tree
can be constructed as follows: each node remembers the first edge along which it receives the rumor, and denotes the corresponding node as its parent in the tree. 

On such spanning trees, leader election is trivial, as the root can be chosen as the leader and this information subsequently broadcasted to all nodes. 

Additionally, for functions where one may aggregate values, the root may learn the aggregate function over the entire graph in a single convergecast and subsequently spread this information to all nodes via a single broadcast. Thus, we see the following two sets of results. For graphs with weak conductance $\Phi_c$, we see that spanning tree construction, leader election, and (exact) aggregate function computation may be performed with high probability in $O(c \log n / \Phi_c)$ rounds. For general graphs, we see that spanning tree construction, leader election, and aggregate function computation may be performed with high probability in $\tilde{O}(D + \sqrt{n})$ rounds.

Additionally, for general graphs, we can also construct an MST in $\tilde{O}(D + \sqrt{n})$ rounds. Recall that a byproduct of the algorithm for general graphs is a spanning tree of diameter $\tilde{O}(D + \sqrt{n})$. We can use this as a communication backbone, similar to the MST algorithm of Gmyr and Pandurangan~\cite{GP18}.  In short, the MST construction is separated into two stages. Each stage is decomposed into $\Theta(\log n)$ phases, each of which merges MST fragments. The first phase merges fragments up to a diameter of $O(\sqrt n)$ --- any fragment with a higher diameter stops participating --- using the fragments' spanning trees for communication (i.e., to aggregate sketches at the root, so the root can sample a minimum-weight outgoing edge). At the end of the first stage, it holds that $O(\sqrt{n})$ fragments remain, each of diameter $O(\sqrt{n})$. The second stage proceeds through similar fragment-merging phases, in which the fragments are ``virtually'' merged (i.e., their fragment IDs are unified) while retaining their individual fragment spanning trees (as in~\cite{GP18}). However, at this stage, communication (to sample the minimum-weight outgoing edge by aggregating sketches of edges from the same fragments) is performed over the communication backbone.
Both stages can be implemented gossip-efficiently, and thus, an MST is constructed in $\tilde{O}(D + \sqrt{n})$ rounds.

\section{Conclusion}
\label{sec:conclusion}

We presented the first-known gossip-based rumor spreading algorithm
that uses small-sized messages and achieves a run time that depends
on the weak conductance of the graph. Our algorithm is at least as good as the standard uniform gossip algorithm and, in many graphs, can be significantly (even exponentially, e.g., on dumbbell graphs) faster. 
Our algorithm improves on that of~\cite{CS11,CS12}, which used large message sizes and was also slower. Our algorithm's performance bound in terms of weak conductance is essentially asymptotically optimal.

We also presented a gossip-based rumor spreading algorithm
that uses small-sized messages whose run time is independent of conductance (or weak conductance) and depends
on the diameter of the graph.  Our algorithm improves upon the previous best-known algorithm of~\cite{GK18}, which also uses small messages.

The most important open question is whether we can reduce our gossip time bound of $\tilde{O}(D+\sqrt{n})$ for {\em rumor spreading} in general graphs. {\em We conjecture that
this bound is essentially tight, i.e., 
$\tilde{\Omega}(D+\sqrt{n})$ is a round lower bound
for gossip-based rumor spreading.} We note that our  algorithm can also be used to construct an MST in $\tilde{O}(D+\sqrt{n})$
gossip rounds, which is the optimal bound (up to logarithmic factors) one can obtain in terms of $D$ and $n$, since
$\tilde{\Omega}(D+\sqrt{n})$ is a well-established lower bound for MST even in the standard (non-gossip) $\CONGEST$ model~\cite{PelegR00,stoc11}.

\begin{acks}
Gopal Pandurangan was supported in part by Army Research Office (ARO) grant W911NF-231-0191 and National Science Foundation (NSF) grant CCF-2402837.
\end{acks}

\bibliographystyle{ACM-Reference-Format}
\bibliography{references}

@article{PelegR00,
  author       = {David Peleg and
                  Vitaly Rubinovich},
  title        = {A Near-Tight Lower Bound on the Time Complexity of Distributed Minimum-Weight
                  Spanning Tree Construction},
  journal      = {{SIAM} J. Comput.},
  volume       = {30},
  number       = {5},
  pages        = {1427--1442},
  year         = {2000}
}

@article{stoc11,
  author       = {Atish Das Sarma and
                  Stephan Holzer and
                  Liah Kor and
                  Amos Korman and
                  Danupon Nanongkai and
                  Gopal Pandurangan and
                  David Peleg and
                  Roger Wattenhofer},
  title        = {Distributed Verification and Hardness of Distributed Approximation},
  journal      = {{SIAM} J. Comput.},
  volume       = {41},
  number       = {5},
  pages        = {1235--1265},
  year         = {2012}
}

@InProceedings{Giakkoupis2011,
  author =	{George Giakkoupis},
  title =	{{Tight bounds for rumor spreading in graphs of a given conductance}},
  booktitle =	{28th International Symposium on Theoretical Aspects of Computer Science (STACS 2011) },
  pages =	{57--68},
  year =	{2011},
}

@inproceedings{CS11,
  title={Fast Information Spreading in Graphs with Large Weak Conductance},
  author={Censor-Hillel, Keren and Shachnai, Hadas},
  booktitle={Proceedings of the Twenty-Second Annual ACM-SIAM Symposium on Discrete Algorithms},
  pages={440--448},
  year={2011},
  organization={SIAM}
}

@article{CS12,
author = {Censor-Hillel, Keren and Shachnai, Hadas},
title = {Fast Information Spreading in Graphs with Large Weak Conductance},
journal = {SIAM Journal on Computing},
volume = {41},
number = {6},
pages = {1451-1465},
year = {2012},
doi = {10.1137/110845380},
URL = { 
        https://doi.org/10.1137/110845380
}
}

@inproceedings{JST11,
  author    = {Hossein Jowhari and
               Mert Saglam and
               G{\'{a}}bor Tardos},
  editor    = {Maurizio Lenzerini and
               Thomas Schwentick},
  title     = {Tight bounds for Lp samplers, finding duplicates in streams, and related
               problems},
  booktitle = {Proceedings of the 30th {ACM} {SIGMOD-SIGACT-SIGART} Symposium on
               Principles of Database Systems, {PODS} 2011, June 12-16, 2011, Athens,
               Greece},
  pages     = {49--58},
  publisher = {{ACM}},
  year      = {2011},
  url       = {https://doi.org/10.1145/1989284.1989289},
  doi       = {10.1145/1989284.1989289},
  bibsource = {dblp computer science bibliography, https://dblp.org}
}

@inproceedings{AhnGM12a,
  author = {Kook Jin Ahn and Sudipto Guha and Andrew McGregor},
  title = {Analyzing graph structure via linear measurements},
  booktitle = {Proceedings of the 23rd Annual {ACM-SIAM} Symposium on Discrete Algorithms (SODA)},
  pages = {459--467},
  year = {2012}
}

@article{PRS18,
  title={Fast distributed algorithms for connectivity and MST in large graphs},
  author={Pandurangan, Gopal and Robinson, Peter and Scquizzato, Michele},
  journal={ACM Transactions on Parallel Computing (TOPC)},
  volume={5},
  number={1},
  pages={1--22},
  year={2018},
  publisher={ACM New York, NY, USA}
}

@book(Upfalbook,
 title={Probability and computing: randomization and probabilistic techniques in algorithms and data analysis},
  author={Mitzenmacher, Michael and Upfal, Eli},
  year={2017},
  publisher={Cambridge university press}
)

@article{CGLP18,
  title={Rumor spreading and conductance},
  author={Chierichetti, Flavio and Giakkoupis, George and Lattanzi, Silvio and Panconesi, Alessandro},
  journal={Journal of the ACM (JACM)},
  volume={65},
  number={4},
  pages={1--21},
  year={2018},
  publisher={ACM New York, NY, USA}
}

@inproceedings{CHKM12,
  title={Global computation in a poorly connected world: fast rumor spreading with no dependence on conductance},
  author={Censor-Hillel, Keren and Haeupler, Bernhard and Kelner, Jonathan and Maymounkov, Petar},
  booktitle={Proceedings of the forty-fourth annual ACM symposium on Theory of computing},
  pages={961--970},
  year={2012}
}

@inproceedings{gia2,
  author       = {George Giakkoupis and
                  Philipp Woelfel},
  editor       = {Dana Randall},
  title        = {On the Randomness Requirements of Rumor Spreading},
  booktitle    = {Proceedings of the Twenty-Second Annual {ACM-SIAM} Symposium on Discrete
                  Algorithms, {SODA} 2011, San Francisco, California, USA, January 23-25,
                  2011},
  pages        = {449--461},
  publisher    = {{SIAM}},
  year         = {2011},
  url          = {https://doi.org/10.1137/1.9781611973082.36},
  doi          = {10.1137/1.9781611973082.36},
  bibsource    = {dblp computer science bibliography, https://dblp.org}
}

@inproceedings{vertex-exp,
  author       = {George Giakkoupis and
                  Thomas Sauerwald},
  editor       = {Yuval Rabani},
  title        = {Rumor spreading and vertex expansion},
  booktitle    = {Proceedings of the Twenty-Third Annual {ACM-SIAM} Symposium on Discrete
                  Algorithms, {SODA} 2012, Kyoto, Japan, January 17-19, 2012},
  pages        = {1623--1641},
  publisher    = {{SIAM}},
  year         = {2012},
  url          = {https://doi.org/10.1137/1.9781611973099.129},
  doi          = {10.1137/1.9781611973099.129},
  bibsource    = {dblp computer science bibliography, https://dblp.org}
}

@inproceedings{gia1,
  author       = {George Giakkoupis and
                  Frederik Mallmann{-}Trenn and
                  Hayk Saribekyan},
  editor       = {Peter Robinson and
                  Faith Ellen},
  title        = {How to Spread a Rumor: Call Your Neighbors or Take a Walk?},
  booktitle    = {Proceedings of the 2019 {ACM} Symposium on Principles of Distributed
                  Computing, {PODC} 2019, Toronto, ON, Canada, July 29 - August 2, 2019},
  pages        = {24--33},
  publisher    = {{ACM}},
  year         = {2019},
  url          = {https://doi.org/10.1145/3293611.3331622},
  doi          = {10.1145/3293611.3331622},
  bibsource    = {dblp computer science bibliography, https://dblp.org}
}

@article{CHKM12-journal,
  author       = {Keren Censor{-}Hillel and
                  Bernhard Haeupler and
                  Jonathan A. Kelner and
                  Petar Maymounkov},
  title        = {Rumor Spreading with No Dependence on Conductance},
  journal      = {{SIAM} J. Comput.},
  volume       = {46},
  number       = {1},
  pages        = {58--79},
  year         = {2017},
  url          = {https://doi.org/10.1137/14099992X},
  doi          = {10.1137/14099992X},
  bibsource    = {dblp computer science bibliography, https://dblp.org}
}

@inproceedings{H13,
  title={Simple, fast and deterministic gossip and rumor spreading},
  author={Haeupler, Bernhard},
  booktitle={Proceedings of the twenty-fourth annual ACM-SIAM symposium on Discrete algorithms},
  pages={705--716},
  year={2013}
}

@article{H15,
  title={Simple, fast and deterministic gossip and rumor spreading},
  author={Haeupler, Bernhard},
  journal={Journal of the ACM (JACM)},
  volume={62},
  number={6},
  pages={1--18},
  year={2015},
  publisher={ACM New York, NY, USA}
}

@article{FPRU90,
  title={Randomized broadcast in networks},
  author={Feige, Uriel and Peleg, David and Raghavan, Prabhakar and Upfal, Eli},
  journal={Random Structures \& Algorithms},
  volume={1},
  number={4},
  pages={447--460},
  year={1990},
  publisher={Wiley Online Library}
}

@article{FG85,
  title={The shortest-path problem for graphs with random arc-lengths},
  author={Frieze, Alan M. and Grimmett, Geoffrey R.},
  journal={Discrete Applied Mathematics},
  volume={10},
  number={1},
  pages={57--77},
  year={1985},
  publisher={Elsevier}
}

@article{P87,
  title={On spreading a rumor},
  author={Pittel, Boris},
  journal={SIAM Journal on Applied Mathematics},
  volume={47},
  number={1},
  pages={213--223},
  year={1987},
  publisher={SIAM}
}

@inproceedings{KSSV00,
  title={Randomized rumor spreading},
  author={Karp, Richard and Schindelhauer, Christian and Shenker, Scott and Vocking, Berthold},
  booktitle={Proceedings 41st Annual Symposium on Foundations of Computer Science},
  pages={565--574},
  year={2000},
  organization={IEEE}
}

@inproceedings{CLP10-soda,
  title={Rumour spreading and graph conductance},
  author={Chierichetti, Flavio and Lattanzi, Silvio and Panconesi, Alessandro},
  booktitle={Proceedings of the twenty-first annual ACM-SIAM symposium on Discrete Algorithms},
  pages={1657--1663},
  year={2010},
  organization={SIAM}
}

@inproceedings{CLP10-stoc,
  title={Almost tight bounds for rumour spreading with conductance},
  author={Chierichetti, Flavio and Lattanzi, Silvio and Panconesi, Alessandro},
  booktitle={Proceedings of the forty-second ACM symposium on Theory of computing},
  pages={399--408},
  year={2010}
}

@article{CLP11,
  title={Rumor spreading in social networks},
  author={Chierichetti, Flavio and Lattanzi, Silvio and Panconesi, Alessandro},
  journal={Theoretical Computer Science},
  volume={412},
  number={24},
  pages={2602--2610},
  year={2011},
  publisher={Elsevier}
}

@inproceedings{GSS14,
  title={Randomized rumor spreading in dynamic graphs},
  author={Giakkoupis, George and Sauerwald, Thomas and Stauffer, Alexandre},
  booktitle={International Colloquium on Automata, Languages, and Programming},
  pages={495--507},
  year={2014},
  organization={Springer}
}

@inproceedings{E06,
  title={On the communication complexity of randomized broadcasting in random-like graphs},
  author={Els{\"a}sser, Robert},
  booktitle={Proceedings of the eighteenth annual ACM symposium on Parallelism in algorithms and architectures},
  pages={148--157},
  year={2006}
}

@inproceedings{GK18,
  title={Distributed MST and broadcast with fewer messages, and faster gossiping},
  author={Ghaffari, Mohsen and Kuhn, Fabian},
  booktitle={32nd International Symposium on Distributed Computing (DISC 2018)},
  volume={121},
  pages={30--1},
  year={2018},
  organization={Schloss Dagstuhl-Leibniz-Zentrum fuer Informatik}
}

@inproceedings{GP18,
  title={Time-Message Trade-Offs in Distributed Algorithms},
  author={Gmyr, Robert and Pandurangan, Gopal},
  booktitle={32nd International Symposium on Distributed Computing (DISC 2018)},
  pages={32--1},
  year={2018},
  organization={Schloss Dagstuhl--Leibniz-Zentrum f{\"u}r Informatik}
}

@article{ER60,
author = {Erdös, Paul and Rado, Richard},
title = {Intersection Theorems for Systems of Sets},
journal = {Journal of the London Mathematical Society},
volume = {s1-35},
number = {1},
pages = {85-90},
doi = {https://doi.org/10.1112/jlms/s1-35.1.85},
url = {https://londmathsoc.onlinelibrary.wiley.com/doi/abs/10.1112/jlms/s1-35.1.85},
year = {1960}
}

@inproceedings{MPX13,
author = {Miller, Gary L. and Peng, Richard and Xu, Shen Chen},
title = {Parallel Graph Decompositions Using Random Shifts},
year = {2013},
isbn = {9781450315722},
publisher = {Association for Computing Machinery},
address = {New York, NY, USA},
url = {https://doi.org/10.1145/2486159.2486180},
doi = {10.1145/2486159.2486180},
booktitle = {Proceedings of the Twenty-Fifth Annual ACM Symposium on Parallelism in Algorithms and Architectures},
pages = {196–203},
numpages = {8},
location = {Montr\'{e}al, Qu\'{e}bec, Canada},
series = {SPAA '13}
}

@inproceedings{HW16,
author = {Haeupler, Bernhard and Wajc, David},
title = {A Faster Distributed Radio Broadcast Primitive: Extended Abstract},
year = {2016},
isbn = {9781450339643},
publisher = {Association for Computing Machinery},
address = {New York, NY, USA},
url = {https://doi.org/10.1145/2933057.2933121},
doi = {10.1145/2933057.2933121},
booktitle = {Proceedings of the 2016 ACM Symposium on Principles of Distributed Computing},
pages = {361–370},
numpages = {10},
location = {Chicago, Illinois, USA},
series = {PODC '16}
}

@article{DMP26arxiv,
  author    = {Dufoulon, Fabien  and
                  Moses Jr., William K. and
                  Pandurangan, Gopal 
               },
  title     = {Fast Gossip-Based Rumor Spreading Using Small Messages},
  journal   = {CoRR},
  volume    = {abs/2605.14376},
  year      = {2026},
  url       = {https://arxiv.org/abs/2605.14376},
  archivePrefix = {arXiv},
  eprint    = {2605.14376}
}

\balance
\end{document}